\pgfplotsset{compat=1.14}
\crefname{hypothesis}{Hypothesis}{Hypotheses}
\Crefname{ALC@unique}{Line}{Lines}
\numberwithin{theorem}{section}
\colorlet{texcscolor}{blue!50!black}
\colorlet{texemcolor}{red!70!black}
\colorlet{texpreamble}{red!70!black}
\colorlet{codebackground}{black!25!white!25}
\def\X{{\bf X}}
\def\x{{\bf x}}
\newcommand{\D}{\mathscr D}
\def\E{{\mathbb{E}}}
\def\r{{\mathbb{R}}}
\def\V{{\operatorname{Var}}}
\def\I{{\mathcal I}}
\lstdefinestyle{siamlatex}{%
  style=tcblatex,
  texcsstyle=*\color{texcscolor},
  texcsstyle=[2]\color{texemcolor},
  keywordstyle=[2]\color{texemcolor},
  moretexcs={cref,Cref,maketitle,mathcal,text,headers,email,url},
}
\DeclareTotalTCBox{\code}{ v O{} }
{ 
  fontupper=\ttfamily\color{black},
  nobeforeafter,
  tcbox raise base,
  colback=codebackground,colframe=white,
  top=0pt,bottom=0pt,left=0mm,right=0mm,
  leftrule=0pt,rightrule=0pt,toprule=0mm,bottomrule=0mm,
  boxsep=0.5mm,
  #2}{#1}
\patchcmd\newpage{\vfil}{}{}{}
\title{Robustness of the Sobol' indices to marginal distribution uncertainty
  \thanks{\funding{This work was supported by the National Science Foundation under grants DMS-1522765 and DMS-1745654.}}}
\author{Joseph Hart
  \thanks{Department of Mathematics, North Carolina State University, Raleigh, NC (\email{jlhart3@ncsu.edu}).}
  \and
  Pierre Gremaud
   \thanks{Department of Mathematics, North Carolina State University, Raleigh, NC (\email{gremaud@ncsu.edu}).}
}
\begin{document}
\maketitle

\begin{tcbverbatimwrite}{tmp_\jobname_abstract.tex}
\begin{abstract}
Global sensitivity analysis (GSA) quantifies the influence of uncertain variables in a mathematical model. The Sobol' indices, a commonly used tool in GSA, seek to do this by attributing to each variable its relative contribution to the variance of the model output. In order to compute Sobol' indices, the user must specify a probability distribution for the uncertain variables. This distribution is typically unknown and must be chosen using limited data and/or knowledge. The usefulness of the Sobol' indices depends on their robustness to this distributional uncertainty. This article presents a novel method which uses ``optimal perturbations" of the marginal probability density functions to analyze the robustness of the Sobol' indices. The method is illustrated through synthetic examples and a model for contaminant transport.
\end{abstract}

\begin{keywords}
global sensitivity analysis, Sobol' indices, robustness, deep uncertainty
\end{keywords}

\begin{AMS}
65C60, 62E17 
\end{AMS}
\end{tcbverbatimwrite}
\input{tmp_\jobname_abstract.tex}

\section{Introduction}
Uncertainties are ubiquitous in mathematical models of complex processes. Understanding how each source of input uncertainty contributes to the model output uncertainty is critical to the development and use of such models. Global sensitivity analysis (GSA) addresses this challenge by considering the inputs of the mathematical model as random variables (or implicitly assuming they are uniformly distributed) and computing a measure of their relative importance in determining the model output. A common tool for this purpose is the Sobol' indices \cite{sobol93,sobol,Sobol_UQ_handbook}. In their classical framework, one assumes that the model inputs $\X = (X_1,X_2,\dots,X_p)$ are independent random variables and $f(\X)$ is some function mapping $\X$ to a scalar valued quantity of interest (QoI), a function of the model output. The Sobol' indices apportion how much of the variance of $f(\X)$ is contributed by each input.

Computing the Sobol' indices requires the user to specify a probability distribution for $\X$, draw samples from this distribution, and evaluate $f$ at these samples. However, the distribution of $\X$ is typically estimated from limited data and/or knowledge and hence is uncertain itself. This article studies the robustness of the Sobol' indices to perturbations in the marginal distributions of $\X$, i.e. the distributions of $X_i$, $i=1,2,\dots,p$. This study is motivated by problems where the only knowledge of $\X$ is a uncertain nominal estimate, for instance a least squares estimate with noisy data, and possibly loose upper and lower bounds. This is common in many applications. For such problems, the user often assumes that each $X_i$ follows a symmetric distribution centered at the nominal estimate with some variance, usually chosen with limited knowledge. An example is centering a uniform distribution at the nominal estimate. The approach proposed in this article may be applied by (i) giving each $X_i$ a distribution with support on a wide interval with greater probability near the nominal estimate, (ii) computing the Sobol' indices using these distributions, (iii) and assessing the Sobol' indices robustness with respect to changes in the distributions. 

In the authors previous work \cite{hart_robustness}, the robustness of Sobol' indices was studied by taking perturbations of the joint probability density function (PDF) of $\X$. This approach was useful for studying robustness with respect to statistical dependencies in $\X$. This article focuses on cases where the inputs are independent and the uncertainty is exclusively in their marginal distributions. We modify the framework introduced in \cite{hart_robustness} to develop a novel method to study the robustness of Sobol' indices with respect to changes in the marginal distributions. It is not a generalization, specialization, nor improvement of \cite{hart_robustness}, but rather a novel method which leverages ideas from \cite{hart_robustness}.


The robustness of Sobol' indices with respect to changes in the marginal distributions of $\X$ has been considered in the life cycle analysis literature \cite{lca_app} and ecology literature \cite{WARM}. ``Robust sensitivity indicators" are defined in \cite{deep_uncertainty} to account for uncertainty in the marginal distributions of $\X$. A generalization of the analysis of variance decomposition is considered in \cite{anova_mult_dist} to facilitate studying the dependence of Sobol' indices on the distribution of $\X$. Other work such as \cite{lca_correlations,hall,cousins} explore the robustness of Sobol' indices to distributional uncertainty and \cite{climate_app,dice,chick,beckman_mckay} explore more general questions of robustness for other computed quantities. In all of these cases, the user is required to either specify the parametric forms of distributions a priori, compute additional evaluations of $f$ beyond those needed to estimate Sobol' indices, or both. The approach proposed in this article requires neither.

Section~\ref{sec:review} reviews the Sobol' indices and sets the notation for the article. We present the mathematical foundations of our proposed marginal PDF perturbation method in Section~\ref{sec:perturbations} and an algorithmic summary is given in Section~\ref{sec:algo}. User guidelines are considered in Section~\ref{sec:compare}. Numerical results are presented in Section~\ref{sec:numerics}. Concluding remarks are made in Section~\ref{sec:conclusion}.

\section{Review of Sobol' indices}
\label{sec:review}
Let $X_i$ be a random variable supported on $\Omega_i \subset \r$, $i=1,2,\dots,p$, $\Omega = \Omega_1 \times \Omega_2 \times \Omega_p$, and $f:\Omega \to \r$. Assume that the inputs $\X=(X_1,X_2,\dots,X_p)$ are independent and $f(\X)$ is square integrable. For $u=\{i_1,i_2,\dots,i_k\}  \subset \{1,2,\dots,p\}$, let $\X_u=(X_{i_1},X_{i_2},\dots,X_{i_k})$ denote the corresponding subset of input variables. Then $f(\X)$ admits the following ANOVA (analysis of variance) decomposition,
\begin{eqnarray}
\label{anova}
f(\X) = f_0 + \sum\limits_{i=1}^p f_i(X_i) + \sum\limits_{i=1}^p \sum\limits_{j = i+1}^p f_{i,j}(X_i,X_j) + \cdots + f_{1,2,\dots,p}(\X)
\end{eqnarray}
where
\begin{align*}
&f_0=\E[f(\X)],\\
&f_i(X_i)=\E[f(\X)|X_i]-f_0, \nonumber \\
&f_{i,j}(X_i,X_j)=\E[f(\X)|X_i,X_j]-f_i(X_i)-f_j(X_j)-f_0, \nonumber \\
\vdots \nonumber \\
&f_u(\X_u) = \E[f(\X)|\X_u]-\sum_{v \subset u} f_v(\X_v). \nonumber
\end{align*}
The independence of the inputs $(X_1,X_2,\dots,X_p)$ implies that 
\begin{eqnarray}
\label{ortho}
\E[f_u(\X_u)f_v(\X_v)]=0 \qquad \text{when} \qquad u \ne v.
\end{eqnarray}
 Computing the variance of both sides of \eqref{anova} and applying \eqref{ortho} yields
\begin{eqnarray}
\V(f(\X)) = \sum\limits_{i=1}^p \V(f_i(X_i)) +  \sum\limits_{i=1}^p \sum\limits_{j = i+1}^p \V(f_{i,j}(X_i,X_j)) + \cdots + \V(f_{1,2,\dots,p}(\X)).
\end{eqnarray}
Hence the variance of $f(\X)$ may be decomposed into contributions from each subset of input variables. The Sobol' index for the variables $\X_u$ is defined as
\begin{eqnarray*}
S_u = \frac{\V(f_u(\X))}{\V(f(\X))},
\end{eqnarray*}
and may be interpreted as the relative contribution of $\X_u$ to the variance of $f(\X)$. In practice, one frequently computes the \textit{first order Sobol' indices} $\{S_k\}_{k=1}^p$, and the \textit{total Sobol' indices} $\{T_k\}_{k=1}^p$ defined by
\begin{eqnarray*}
T_k = \sum\limits_{k \in u} S_u, \qquad k=1,2,\dots,p.
\end{eqnarray*}
It easily observed that $0 \le S_k \le T_k \le 1$. The first order Sobol' index $S_k$ may be interpreted as the relative contribution of variable $X_k$ to the variance of $f(\X)$ by itself and the total Sobol' index $T_k$ may be interpreted as the relative contribution of $X_k$ to the variance of $f(\X)$ by itself and through interactions with other variables. The total Sobol' index may also be defined for a subset of variables $\X_u$ in an analogous fashion. This article will focus on the indices $\{S_k,T_k\}_{k=1}^p$; though the results are easily extended to general indices $S_u$ or $T_u$, $u \subset \{1,2,\dots,p\}$. 

There are a plurality of ways to estimate the first order and total Sobol' indices \cite{Sobol_UQ_handbook}. We focus on the Monte Carlo integration scheme presented in \cite{Sobol_UQ_handbook}. The indices $S_k$ and $T_k$ may be written probabilistically as
\begin{eqnarray*}
S_k = \frac{\V(\E[f(\X) \vert X_k ])}{\V(f(\X))} \qquad \text{and} \qquad T_k=\frac{\E[\V(f(\X)\vert \X_{\sim k} )]}{\V(f(\X))} .
\end{eqnarray*}
The properties of conditional expectation and Fubini's theorem yield
\begin{eqnarray*}
\V(\E[f(\X) \vert X_k ]) = \E[f(\X) (f(\X_k',\X_{\sim k})-f(\X'))],
\end{eqnarray*}
and
\begin{eqnarray*}
\E[\V(f(\X)\vert \X_{\sim k} )] = \frac{1}{2} \E[(f(\X)-f(X_k',\X_{\sim k}))^2],
\end{eqnarray*}
where $\X'$ is an independent copy of $\X$.

The indices $\{S_k,T_k\}_{k=1}^p$ may be estimated by Monte Carlo integration using $(p+2)N$ evaluations of $f$, where $N$ is the number of Monte Carlo samples. Specifically, the samples are generated by constructing 2 matrices of size $N \times p$, call them $A$ and $B$, where the rows correspond to independent samples of $\X$. Then $p$ additional matrices of size $N \times p$, denote them as $C_k$, $k=1,2,\dots,p$, are generated by setting the $i^{th}$ column of $C_k$ equal to the $i^{th}$ column of $A$, $i \ne k$,  and the $k^{th}$ column of $C_k$ equal to the $k^{th}$ column of $B$. The set of indices $\{S_k,T_k\}_{k=1}^p$ may be estimated using evaluations of $f$ at the rows of $A$, $B$, and $C_k$, $k=1,2,\dots,p$. This scheme is presented in Lines 1-6 of Algorithm~\ref{alg:robustness}.

The reader is directed to \cite{sobol93,sobol,iooss,hart_corr_var,Sobol_UQ_handbook,sobol2003,saltelli2010,kucherenko,fast_dependent_variables}, and references therein, for additional discussion about the Sobol' indices.

\section{Marginal PDF Perturbations}
\label{sec:perturbations}
We study the robustness of the Sobol' indices to distributional uncertainty by computing a derivative of the Sobol' index with respect to the PDFs of $X_1,X_2,\dots,X_p$. To this end we make the following assumptions:
\begin{enumerate}
\item $\Omega_i$ is a compact interval, $i=1,2,\dots,p$,
\item $X_i$ admits a PDF $\phi_i$, $i=1,2,\dots,p$,
\item $\phi_i$ is continuous on $\Omega_i$, $i=1,2,\dots,p$,
\item $\phi_i(x_i) > 0$ $\forall x_i \in \Omega_i$, $i=1,2,\dots,p$,
\item $f$ is bounded on $\Omega$.
\end{enumerate}
Without loss of generality, under the assumptions above, assume that $\Omega = [0,1]^p$. Distributions with unbounded support may be truncated to satisfy these assumptions, hence our method is applicable on a large class of problems.

In order to perturb the PDFs $\phi_1,\phi_2,\dots,\phi_p$ while preserving the non-negativity and unit mass properties of PDFs, we define the Banach spaces $V_i$, $i=1,2,\dots,p$, as all bounded functions on $\Omega_i$ with norm
\begin{eqnarray*}
\Big\vert \Big \vert \psi_i  \Big\vert \Big\vert_{V_i} = \Big\vert \Big\vert \frac{\psi_i}{\phi_i} \Big\vert \Big \vert_{L^\infty(\Omega_i)},
\end{eqnarray*}
where the $L^\infty(\Omega_i)$ norm is the traditional supremum norm. The product Banach space $V = V_1 \times V_2 \times \cdots \times V_p$ is equipped with norm
\begin{eqnarray*}
\Big\vert \Big \vert (\psi_1,\psi_2,\dots,\psi_p)  \Big\vert \Big\vert_V = \max\limits_{1 \le i \le p} \Big\vert \Big\vert \psi_i \Big\vert \Big \vert_{V_i}.
\end{eqnarray*}
Then for $\psi=(\psi_1,\psi_2,\dots,\psi_p) \in V$ with $\vert \vert \psi-\phi \vert \vert_V \le 1$, $\phi=(\phi_1,\phi_2,\dots,\phi_p)$, the functions
\begin{eqnarray*}
\frac{\phi_i+\psi_i}{1+ \int_{\Omega_i} \psi_i(x_i)dx_i}
\end{eqnarray*}
are PDFs supported on $\Omega_i$, $i=1,2,\dots,p$, i.e. are non-negative and integrate to 1 over $\Omega_i$.

For $\eta = (\eta_1,\eta_2,\dots,\eta_p) \in V$, define the first order and total Sobol' indices as the operators $S_k,T_k:V \to \r$, $k=1,2,\dots,p$,
\begin{eqnarray*}
S_k(\eta) = \frac{F_k(\eta)}{H_k(\eta)} \qquad \text{and} \qquad T_k(\eta) = \frac{G_k(\eta)}{H_k(\eta)},
\end{eqnarray*}
where $F_k,G_k,H_k:V \to \r$ are the operators
\begin{eqnarray*}
F_k(\eta) =  \frac{1}{\prod\limits_{i=1}^p \int_{\Omega_i} \eta_i(y)dy}  \int_{\Omega \times \Omega} f(\x)\left(f(x_k',\x_{\sim k})-f(\x')\right) \prod\limits_{i=1}^p \eta_i(x_i)  \eta_i(x_i')  d\x d\x',
\end{eqnarray*}
\begin{eqnarray*}
G_k(\eta) = \frac{1}{2} \frac{1}{\int_{\Omega_k} \eta_k(y)dy}  \int_{\Omega \times \Omega_k} \left(f(\x)-f(\x')\right)^2 \prod\limits_{i=1}^p \eta_i(x_i) \eta_k(x_k')  d\x d\x_k',
\end{eqnarray*}
 \begin{eqnarray*}
H_k(\eta) = \int_{\Omega} f(\x)^2 \prod\limits_{i=1}^p \eta_i(x_i) d\x - \frac{1}{\prod\limits_{i=1}^p \int_{\Omega_i} \eta_i(y) dy} \left( \int_{\Omega} f(\x) \prod\limits_{i=1}^p \eta_i(x_i) d\x \right)^2.
\end{eqnarray*}
For $\eta \in V$ with $\vert \vert \eta-\phi \vert \vert_V \le 1$, $S_k(\eta)$ and $T_k(\eta)$, $k=1,2,\dots,p$, are the first order and total Sobol' indices computed when $\X$ has the joint PDF 
\begin{eqnarray*}
\frac{\prod\limits_{i=1}^p \eta_i}{\prod\limits_{i=1}^p \int_{\Omega_i} \eta_i(y)dy}.
\end{eqnarray*}
Theorem \ref{thm:frechet} gives the Fr\'echet derivative of the Sobol' index with respect to the PDFs.
\begin{theorem}
\label{thm:frechet}
The operators $S_k$ and $T_k$, $k=1,2,\dots,p$, are Fr\'echet differentiable at the nominal PDF $\phi=(\phi_1,\phi_2,\dots,\phi_p)$ and the Fr\'echet derivatives $\D S_k(\phi): V \to \r$ and $\D T_k(\phi): V \to \r$, $k=1,2,\dots,p$, are the bounded linear operators given by
\begin{eqnarray*}
\D S_k(\phi) \psi = \frac{\D F_k(\phi) \psi}{H_k(\phi)} - S_k(\phi) \frac{\D H_k(\phi) \psi}{H_k(\phi)}, \qquad \D T_k(\phi) \psi = \frac{\D G_k(\phi) \psi}{H_k(\phi)} - T_k(\phi) \frac{\D H_k(\phi) \psi}{H_k(\phi)},
\end{eqnarray*}
where
\begin{align*}
&\D F_k(\phi)\psi = \int_{\Omega \times \Omega} f(\x)(f(x_k',\x_{\sim k})-f(\x'))  \left( \sum\limits_{i=1}^p \frac{\psi_i(x_i)}{\phi_i(x_i)}+\frac{\psi_i(x_i')}{\phi_i(x_i')} \right) \prod\limits_{i=1}^p \phi_i(x_i)\phi_i(x_i') d\x d\x' \\
&- \left( \int_\Omega \left( \sum\limits_{i=1}^p \frac{\psi_i(x_i)}{\phi_i(x_i)} \right) \prod\limits_{i=1}^p \phi_i(x_i) d\x  \right)  \int_{\Omega \times \Omega}  f(\x)(f(x_k',\x_{\sim k})-f(\x'))  \prod\limits_{i=1}^p \phi_i(x_i) \phi_i(x_i')  d\x d\x', \\
\end{align*}
\begin{align*}
\D G_k(\phi)\psi =& \frac{1}{2}  \int_{\Omega \times \Omega_k} (f(\x)-f(\x'))^2  \left( \frac{\psi_k(x_k')}{\phi_k(x_k')} + \sum\limits_{i=1}^p \frac{\psi_i(x_i)}{\phi_i(x_i)} \right) \prod\limits_{i=1}^p \phi_i(x_i) \phi_k(x_k') d\x d\x_k' \\
-& \frac{1}{2} \int_{\Omega_k} \psi_k(y)dy \int_{\Omega \times \Omega_k} (f(\x)-f(\x'))^2 \prod\limits_{i=1}^p \phi_i(x_i) \phi_k(x_k') d\x d\x_k', \\
\end{align*}
\begin{align*}
\D H_k(\phi)\psi = & \int_\Omega f(\x)^2 \left(\sum\limits_{i=1}^p \frac{\psi_i(x_i)}{\phi_i(x_i)} \right) \prod\limits_{i=1}^p \phi_i(x_i) d\x \\
& +\left( \int_\Omega \left( \sum\limits_{i=1}^p \frac{\psi_i(x_i)}{\phi_i(x_i)} \right) \prod\limits_{i=1}^p \phi_i(x_i) d\x  \right) \left( \int_{\Omega} f(\x) \prod\limits_{i=1}^p \phi_i(x_i) d\x \right)^2\\
& - 2  \left( \int_{\Omega} f(\x) \prod\limits_{i=1}^p \phi_i(x_i) d\x \right) \left( \int_{\Omega} f(\x) \left( \sum\limits_{i=1}^p \frac{\psi_i(x_i)}{\phi_i(x_i)} \right) \prod\limits_{i=1}^p \phi_i(x_i) d\x \right).
\end{align*}
\end{theorem}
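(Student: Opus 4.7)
The plan is to recognize that $F_k$, $G_k$, and $H_k$ are assembled from three kinds of elementary building blocks: bounded linear functionals $L_i(\eta_i) = \int_{\Omega_i}\eta_i(y)\,dy$, bounded multilinear integral operators of the form $M(\eta) = \int_\Omega g(\x)\prod_{i=1}^p \eta_i(x_i)\,d\x$ (with analogues on $\Omega\times\Omega$ or $\Omega\times\Omega_k$ featuring doubled copies of the $\eta_i$), and the reciprocal of the product $B(\eta) = \prod_{i=1}^p L_i(\eta_i)$. I would first establish Fr\'echet differentiability of each of these pieces and then invoke the product, quotient, and chain rules to assemble the derivatives of $S_k$ and $T_k$.

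The key technical observation supporting boundedness is that the chosen norm gives the pointwise bound $|\psi_i(x_i)| \le \|\psi_i\|_{V_i}\phi_i(x_i)$ for $\psi_i \in V_i$. Combined with boundedness of $f$ on the compact domain $\Omega$ and the fact that each $\phi_i$ integrates to one, this yields boundedness of every linear and multilinear building block as an operator into $\r$. A bounded linear functional equals its own Fr\'echet derivative, and a bounded multilinear map is Fr\'echet differentiable with the standard sum-of-partials derivative. Applied to $M$ at $\phi$ this gives
\begin{eqnarray*}
\D M(\phi)\psi = \int_\Omega g(\x)\left(\sum_{j=1}^p \frac{\psi_j(x_j)}{\phi_j(x_j)}\right)\prod_{i=1}^p \phi_i(x_i)\,d\x,
\end{eqnarray*}
which is precisely the pattern appearing throughout the theorem statement; the doubled-variable versions generate the primed sums visible in $\D F_k$ and $\D G_k$. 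Because $B(\phi)=1 \ne 0$, the map $\eta\mapsto 1/B(\eta)$ is continuously Fr\'echet differentiable near $\phi$ with derivative $\psi \mapsto -\D B(\phi)\psi$; the identity $\int_{\Omega_j}\psi_j(y)\,dy = \int_\Omega (\psi_j(x_j)/\phi_j(x_j))\prod_i \phi_i(x_i)\,d\x$ then rewrites every normalization contribution into the common form that appears in the statement.

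With these pieces in hand, the product rule applied to $F_k$, $G_k$, and $H_k$ (viewed as products of bounded multilinear integrals with $1/B$, together with the squared linear functional appearing in $H_k$) produces the two- and three-term formulas for $\D F_k(\phi)\psi$, $\D G_k(\phi)\psi$, and $\D H_k(\phi)\psi$; the normalization contributions assemble precisely into the stated correction terms. Under the non-degeneracy condition $H_k(\phi) = \V(f(\X)) > 0$, which is tacit whenever Sobol' indices are well defined, the quotient rule applied to $S_k = F_k/H_k$ and $T_k = G_k/H_k$ yields the claimed expressions. I expect the only real obstacle to be the bookkeeping: tracking the three contributions to $\D H_k(\phi)\psi$ coming from differentiating the variance expression $M_2(\eta) - M_1(\eta)^2/B(\eta)$ with the correct signs, and verifying that each of the resulting linear maps $V \to \r$ is bounded via the same pointwise estimate used in the boundedness step.
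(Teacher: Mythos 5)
Your proposal is correct and follows essentially the same route as the paper's proof: identify the elementary integral functionals as the differentiable building blocks, compute their Fr\'echet derivatives (the sum-of-partials formula for the multilinear pieces), and assemble $\D F_k$, $\D G_k$, $\D H_k$ and then $\D S_k$, $\D T_k$ via the product, chain, and quotient rules under the nondegeneracy condition $H_k(\phi)>0$. Your explicit boundedness argument via the pointwise estimate $|\psi_i(x_i)|\le \|\psi_i\|_{V_i}\phi_i(x_i)$ is a welcome detail that the paper leaves implicit.
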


\begin{proof}
One may easily observe that $H_k(\eta)>0$ in a neighborhood of $\phi$ (assuming $f(\x)$ is not constant). It is sufficient to compute the Fr\'echet derivatives of $F_k,G_k$, and $H_k$ as the Fr\'echet derivatives of $S_k$ and $T_k$ will follow from the quotient rule for differentiation. The Fr\'echet derivative of the operator
\begin{eqnarray*}
 (\phi_1,\phi_2,\dots,\phi_p) \mapsto \int_{\Omega} \prod\limits_{i=1}^p \phi_i(x_i) d\x,
\end{eqnarray*}
mapping from $V$ to $\r$, acting on $\psi = (\psi_1,\psi_2,\dots,\psi_p) \in V$, is easily shown to be
\begin{eqnarray*}
\int_{\Omega} \sum\limits_{j=1}^p \frac{\psi_j(x_j)}{\phi_j(x_j)} \prod\limits_{i=1}^p \phi_i(x_i) d\x.
\end{eqnarray*}
Similar results hold for the Fr\'echet derivatives of 
\begin{eqnarray*}
 \int_{\Omega} f(\x) \prod\limits_{i=1}^p \phi_i(x_i) d\x \qquad \text{and} \qquad  \int_{\Omega} f(\x)^2 \prod\limits_{i=1}^p \phi_i(x_i) d\x .
\end{eqnarray*}
The Fr\'echet derivative of $H_k$ follows from the sum/difference, product, and chain rules for differentiation. Similar arguments may be used to compute Fr\'echet derivatives of $F_k$ and $G_k$.
\end{proof}

Theorem~\ref{thm:frechet} can be generalized to Sobol' indices $S_u$ or $T_u$ for any $u \subset \{1,2,\dots,p\}$; we omit this generalization for simplicity. If the Sobol' indices are estimated with the Monte Carlo integration scheme presented in Section~\ref{sec:review}, then the action of the Fr\'echet derivatives on any $\psi \in V$ may be estimated using the existing evaluations of $f$. Hence the additional computation needed to estimate the Fr\'echet derivative is negligible.

We would like determine an optimal perturbation of the marginal PDFs which yields the greatest change in the Sobol' indices. For notational simplicity, let $\I$ denote a generic Sobol' index, which may be $S_k$ or $T_k$ for any $k=1,2,\dots,p$. The locally optimal perturbation direction is given by the $\psi \in V$, $\vert \vert \psi \vert \vert \le 1$, which maximizes the absolute value of the Fr\'echet derivative of $\I$. To estimate it, we discretize $V$ by partitioning each interval $\Omega_i=[0,1]$ into $M_i$ disjoint subintervals, $\{R_i^j\}_{j=1}^{M_i}$, i.e. $\Omega_i = \cup_{j=1}^{M_i} R_i^j$, $i=1,2,\dots,p$, and define basis functions $ \psi_i^j:\Omega \to \{0,1\}^p$ as
\begin{eqnarray*}
 \psi_i^j(\X)=\chi_{R_i^j}(X_i)e_i, \qquad j=1,2,\dots,M_i, i=1,2,\dots,p
 \end{eqnarray*}
where $\chi$ is the indicator function of a set and $e_i$ is the $i^{th}$ canonical unit vector in $\r^p$. Then
\begin{eqnarray*}
V_M=\text{span}\{ \psi_i^j \vert j=1,2,\dots,M_i,i=1,2,\dots,p \} \subset V,
\end{eqnarray*}
is a subspace of $V$ on which we will perform our robustness analysis. Discretizing $V$ with piecewise constant functions may be viewed as approximating all bounded functions by discretizing them on a coarse grid which is constrained by the existing samples and evaluations of $f$.

Then the locally optimal perturbation is solution of the optimization problem
\begin{eqnarray}
\label{opt_prob}
\max\limits_{\substack{\psi \in V_M\\ \vert \vert \psi \vert \vert_{V}=1}} \vert \D \I(\phi)\psi \vert.
\end{eqnarray}
Exploiting properties of the basis we chose for $V_M$, \eqref{opt_prob} may be solved in closed form yielding the optimal perturbation direction
\begin{eqnarray*}
\psi^\star = (\psi_1^\star,\psi_2^\star,\dots,\psi_p^\star) = \sum\limits_{i=1}^p \sum\limits_{j=1}^M a_i^j \psi_i^j
\end{eqnarray*}
where 
\begin{eqnarray*}
a_i^j = \left( \inf_{x_i \in R_j} \phi_i(x_i) \right) \text{sign}\left(\D \I (\phi)\psi_i^j \right) .
\end{eqnarray*}
Because of the norm on $V$, this formulation gives equal weight to each marginal distribution. In actuality, we would like to focus our effort on taking perturbations on the ``most important" marginals. To this end, we define the weights
\begin{eqnarray*}
\delta_i = \frac{\sum\limits_{j=1}^{M_i} \vert \D \I (\phi)\psi_i^j  \vert}{\sum\limits_{i=1}^p \sum\limits_{j=1}^{M_i} \vert \D \I (\phi)\psi_i^j  \vert }
\end{eqnarray*}
and define the perturbed PDF as
\begin{eqnarray}
\label{eqn:perturbed_pdf}
\prod\limits_{i=1}^p \frac{\phi_i(x_i) + \delta \delta_i \psi_i^\star}{1+\delta \delta_i \int_{\Omega_i} \psi_i^\star(x_i)dx_i},
\end{eqnarray}
where $\delta \in [-\overline{\delta},\overline{\delta}]$, $\overline{\delta}=\left( \max_{i=1,2,\dots,p} \delta_i \right)^{-1}$; the determination of $\delta$ will be discussed in Section~\ref{sec:algo}. The perturbed PDF depends upon the partitions $\{R_i^j\}_{j=1}^{M_i}$, $i=1,2,\dots,p$, and perturbation size $\delta$, which depends on user specified parameters, see Section~\ref{sec:algo}.

We will refer to the distribution of $\X$ as the nominal distribution and the distribution with PDF \eqref{eqn:perturbed_pdf} as the perturbed distribution. Likewise, we refer to the Sobol' indices computed using the nominal distribution as nominal Sobol' indices. We compute \textit{perturbed Sobol' indices}, where the inputs follow the perturbed distribution, by reweighing the evaluations of $f$ used to estimate the nominal Sobol' indices. Hence the computational effort to compute perturbed Sobol' indices is negligible. 

\section{Algorithm}
\label{sec:algo}
Evaluating the perturbed PDF requires estimation of the Sobol' index Fr\'echet derivative to compute $\psi^\star$, and the determination of $\delta$. As previously discussed, the Sobol' index Fr\'echet derivative is easily estimated using existing evaluations of $f$. The scalar $\delta$ may be determined by finding the largest and smallest possible values in $[-\overline{\delta},\overline{\delta}]$ for which the perturbed index estimator's sample standard deviation is sufficiently small. Specifically, a threshold should be defined by the user and a value of $\delta$ is accepted if the ratio of sample standard deviations for the perturbed indices and nominal indices is below the threshold. 

Algorithm~\ref{alg:robustness} provides an overview of the Sobol' index computation with marginal distribution robustness post processing. The algorithm requires four inputs:
\begin{enumerate}
\item[$\bullet$] $n$, the number of Monte Carlo samples
\item[$\bullet$] $\{R_i^j\}_{j=1}^{M_i}$, a partition of $\Omega_i$, $i=1,2,\dots,p$
\item[$\bullet$] $\tau$, the threshold for admissible increases in the Sobol' index estimator's sample standard deviation
\item[$\bullet$] $r$, an integer used to compute admissible value of $\delta$
\end{enumerate}

The determination of $n$ is problem dependent and must be done by the user to compute Sobol' indices.

The partitions $\{R_i^j\}_{j=1}^{M_i}$, $i=1,2,\dots,p$, may be a chosen in many ways. A general partitioning strategy is to specific a number of subintervals $M$ and partition each $\Omega_i$ into $M$ subintervals according to the quantiles of the marginal distribution of $X_i$. Alternatively, the user may use problem specific information to customize the partition, for instance, taking a finer partition in regions of greater uncertainty or adapting $M_i$ based on the uncertainty in each marginal.

The threshold $\tau$ bounds the increase in estimation error introduced by reweighing samples to compute perturbed Sobol' indices. If the original Sobol' index estimation has small estimator standard deviation, then $\tau=1.5$ is considered a reasonable threshold to permit nontrivial perturbations while preserving reasonable accuracy of the estimator.

To determine admissible values of $\delta$, we consider a function $\Delta:[-\overline{\delta},\overline{\delta}] \to \r$ which inputs a perturbation size $\delta$ and returns the maximum (over all $p$ indices) ratio of sample standard deviations of the perturbed and nominal Sobol' index estimators. We evaluate $\Delta$ by subsampling the data and computing a collection of estimators and their standard deviations. Finding admissible values of $\delta$ is equivalent to finding solutions of $\Delta(\delta) \le \tau$. Intuition (and direct calculations) indicates that $\Delta$ is approximately a quadratic function centered at $\delta=0$. Admissible values of $\delta$ are determined by evaluating $\Delta$ at $r$ equally spaced points in $[-\overline{\delta},\overline{\delta}]$ and accepting those for which $\Delta \le \tau$. Since it is discretizing an interval, $r$ does not need to be very large, $r=60$ is suggested as a reasonable choice to balance the computational cost (the for loop in Line~14 of Algorithm~\ref{alg:robustness}) and the necessary discretization to determine admissible values of $\delta$. The evaluations of $\Delta$ at $r$ points may be plotted to confirm that $r=60$ is sufficient.

Our choices for the inputs described above are based on numerical experimentation, intuition, and experience from the authors previous work in \cite{hart_robustness}.

 \begin{algorithm}
\caption{Computation of Sobol' indices with marginal distribution robustness post processing} \label{alg:robustness}
\begin{algorithmic}[1]
\STATE \textbf{Input: } $n$, $\{ \{R_i^j\}_{j=1}^{M_i} \}_{i=1}^p$, $\tau$, $r$
\STATE Draw $n$ samples of $\X$, store them in $A \in \r^{n \times p}$
\STATE Draw $n$ samples of $\X$, store them in $B \in \r^{n \times p}$
\STATE Construct $p$ matrices $C_k \in \r^{n \times p}$, $k=1,2,\dots,p$, where $C_k(:,i)=A(:,i)$, $i\ne k$, and $C_k(:,k)=B(:,k)$
\STATE Evaluate $f(A),f(B),f(C_k)$, $k=1,2,\dots,p$
\STATE Compute $S_k$ and $T_k$, $k=1,2,\dots,p$
\STATE Evaluate $\phi_k(A(:,k))$ and $\phi_k(B(:,k))$, $k=1,2,\dots,p$
\STATE Determine $\inf_{x_i \in R_i^j} \phi_i(x_i)$, $j=1,2,\dots,M_i$, $i=1,2,\dots,p$
\STATE Compute $\D S_k(\phi) \psi_i^j$, $k=1,2,\dots,p$, $j=1,2,\dots,M_i$, $i=1,2,\dots,p$
\STATE Compute $\D T_k(\phi)\psi_i^j$, $k=1,2,\dots,p$, $j=1,2,\dots,M_i$, $i=1,2,\dots,p$
\FOR{$k$ from 1 to $p$}
\STATE Determine $\psi^{(k,1)} \in V_M$, $\vert \vert \psi^{(k,1)}\vert \vert_V \le 1$, which maximizes $\vert \D S_k(\phi)\psi^{(k,1)}  \vert$
\STATE Determine $\psi^{(k,2)} \in V_M$, $\vert \vert \psi^{(k,2)} \vert \vert_V \le 1$, which maximizes $\vert \D T_k(\phi)\psi^{(k,2)}  \vert$
\FOR{$\ell$ from 0 to $r$}
\STATE Compute $\{\tilde{S}_i^{(k,\ell,1)},\tilde{T}_i^{(k,\ell,1)}\}_{i=1}^p$, $\Delta^{(k,\ell,1)}$ with perturbation $\psi^{(k,1)}$, $\delta=\overline{\delta} \left(-1+\frac{2\ell}{r} \right)$
\STATE Compute $\{\tilde{S}_i^{(k,\ell,2)},\tilde{T}_i^{(k,\ell,2)}\}_{i=1}^p$, $\Delta^{(k,\ell,2)}$ with perturbation $\psi^{(k,2)}$, $\delta=\overline{\delta} \left(-1+\frac{2\ell}{r} \right)$
\ENDFOR
\ENDFOR
\STATE \textbf{Output: } sets of perturbed first order and total Sobol' indices with $\Delta^{(k,\ell,I)} \le \tau$
\STATE Matlab notation is used where $A(:,k)$ denotes the $k^{th}$ column of matrix $A$ and $f(A)$ denotes the vector generated by evaluating $f$ at each row of the matrix $A$.
\end{algorithmic} 
\end{algorithm}

Algorithm~\ref{alg:robustness} returns the nominal Sobol' indices and a collection of perturbed Sobol' indices. The user may query these perturbed Sobol' indices and visualize measures of robustness alongside the indices themselves. In addition, samples may be drawn from the perturbed distributions. The ratios of the nominal PDF and perturbed PDFs are computed (though not explicitly written) in Algorithm~\ref{alg:robustness}. Using candidate samples from the nominal distribution and the ratio of PDFs provides an easy way to generate samples from the perturbed distributions with rejection sampling. Visualizing the perturbed marginal distributions provides additional insights into the regions of parameter space which cause a lack of robustness. The examples in Section~\ref{sec:numerics} illustrate efficient ways to visualize the output of Algorithm~\ref{alg:robustness} and make inferences with it.

\section{User guidelines}
\label{sec:compare}
The method presented in Section~\ref{sec:perturbations} only considers perturbations of the one dimensional marginals PDFs, we refer to this as the marginal perturbation approach. In the authors previous work \cite{hart_robustness}, perturbation were taken on the joint PDF, we refer to this as the joint perturbation approach. Other forms of perturbations may be considered as well. For instance, assuming a block dependency structure motivates a form of PDF perturbations on the marginal PDFs for each block. This approach would be a hybrid between the marginal perturbation approach of this article and joint perturbation approach of \cite{hart_robustness}. In this section, we compare and contrast important features of the marginal perturbation and joint perturbation approaches to provide some principles for a user to consider when deciding which method is appropriate for a given application. Since the two approaches seek to solve different problems, the user must be cognizant of their differences when determining which to use in an application.

If the inputs are assumed to be independent for lack of better knowledge, but dependencies are suspected, then the joint perturbation approach is more appropriate. If the inputs are believed to be independent, the joint perturbation approach may give pessimistic results by introducing dependencies which are not realistic, hence the marginal perturbation approach is superior. 

In the joint perturbation approach, the user must generate a partition of $\Omega$ which ensures a sufficient number of samples in each subset. If $p$ (the number of uncertain inputs) is large then the joint perturbation approach yields a partition which is coarse in each input domain $\Omega_i$. This problem is amplified if the nominal distribution of $\X$ is far from being uniform. In contrast, the partitioning is trivial in the marginal perturbation approach because it only requires partitioning each $\Omega_i$, $i=1,2,\dots,p$, separately. Hence it allows a richer class of perturbations in the marginal distributions. In addition, the user may insert a priori knowledge about each marginal distribution separately, for instance, in the partitions input to Algorithm~\ref{alg:robustness}.

In the marginal perturbation approach, each marginal is weighted by $\delta_i$, a measure of the Fr\'echet derivative's magnitude. This weighting strategy manifests itself differently in the joint perturbation approach, where the input variables are prioritized by the partitioning algorithm. The optimality of the weighting scheme is unclear in both cases; however, the simplicity of the weighting scheme for the marginal perturbation approach makes it more appealing.

The user must prioritize whether there is greater uncertainty in the marginal distributions or the dependency structure. The number of uncertain inputs and characteristics of their nominal distribution influence the performance of the two approaches and must also be considered. Subsection~\ref{sec:numerics_comparison} gives a numerical illustration comparing the two approaches.

\section{Numerical results}
\label{sec:numerics}
Three examples are given below. The first example compares the marginal perturbation approach of this article with the joint perturbation approach of \cite{hart_robustness}. The second is a synthetic example serving to illustrate our marginal perturbation method. The third is an application of robustness analysis to an advection diffusion contaminant transport problem. For simplicity and conciseness, we only report results for the total Sobol' indices.

\subsection{Comparison of marginal perturbation and joint perturbation approaches}
\label{sec:numerics_comparison}
Let
\begin{eqnarray}
\label{linear_fun}
f(\X) = \sum\limits_{i=1}^{10} a_i X_i,
\end{eqnarray}
where $a_i = 11-i$, $i=1,2,\dots,10$. We consider two cases, first when $X_i$ has a uniform distribution on the interval $[0,1]$, $i=1,2,\dots,10$, and second, when $X_i$ has a truncated normal distribution on the interval $[0,1]$, $i=1,2,\dots,10$. In the truncated normal case, the mean of $X_i$ is $0.5$ and its standard deviation is $0.73-0.04i$, $i=1,2,\dots,10$. 

The marginal perturbation and joint perturbation approaches are compared by applying them to analyze the robustness of the total Sobol' indices of \eqref{linear_fun}, computed using $5,000$ Monte Carlo samples, for each of the two cases. Figure~\ref{fig:linear_fun} displays the nominal, maximum, and minimum Total Sobol' indices, where the maximum and minimum are determined by searching all perturbed total Sobol' indices with $\Delta \le \tau$. The top of the green bar and bottom of the magenta bar denote the maximum and minimum total Sobol' index, respectively, as determined by the marginal perturbation approach; the blue region between them denotes the nominal total Sobol' index. The top of the orange bar and bottom of the yellow bar denote the maximum and minimum total Sobol' index, respectively, as determined by the joint perturbation approach; likewise the blue region between them denotes the nominal total Sobol' index. The left panel corresponds to the uniform marginals case and the right panel corresponds to the truncated normal marginals case.

In both cases, the marginal perturbation approach found larger perturbation of the total Sobol' indices than the joint perturbation approach, i.e. the marginal perturbation approach discovered a greater lack of robustness than the joint perturbation approach. This is because it was able to take a finer partition of $\Omega$ and hence find ``better" perturbations. 

The total Sobol' indices in truncated normal marginals case are more robust than in the uniform marginals case. This occurs because the largest perturbations of $T_i$ corresponds to putting greater probability near the boundaries, i.e. for $X_i$ near 0 and 1. The truncated normal marginals case puts less probability near the boundaries so perturbations of this nominal PDF will not change the total Sobol' indices as much as in the uniform marginals case. Also notice that the joint perturbation approach yields very small perturbation for the truncated marginals case. This is because there are low probability tails in several marginal distributions which hampers the partitioning algorithm. The marginal perturbation approach does not suffer this limitation. 

Using the marginal perturbation approach, the maximum and minimum perturbed total Sobol' indices are approximately symmetric about the nominal total Sobol' indices. In the joint perturbation approach, statistical dependencies are introduced which result in a decrease of the magnitude of the total Sobol' indices, see \cite{hart_corr_var,hart_robustness}. As a result, the joint perturbation approach yields a skewness where the largest perturbations correspond to decreasing the total Sobol' indices.

\begin{figure}[h]
\centering
\includegraphics[width=.49 \textwidth]{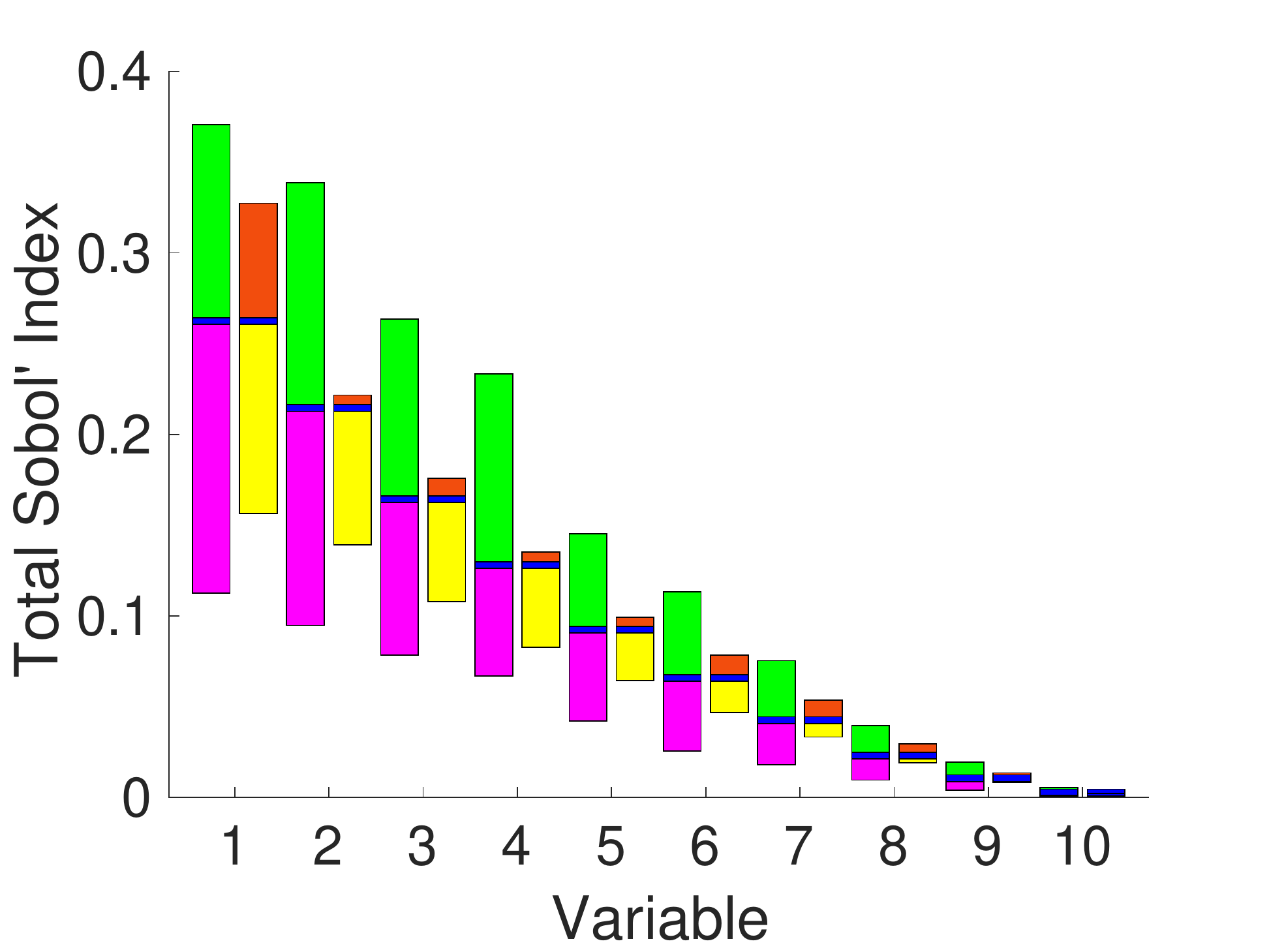}
\includegraphics[width=.49 \textwidth]{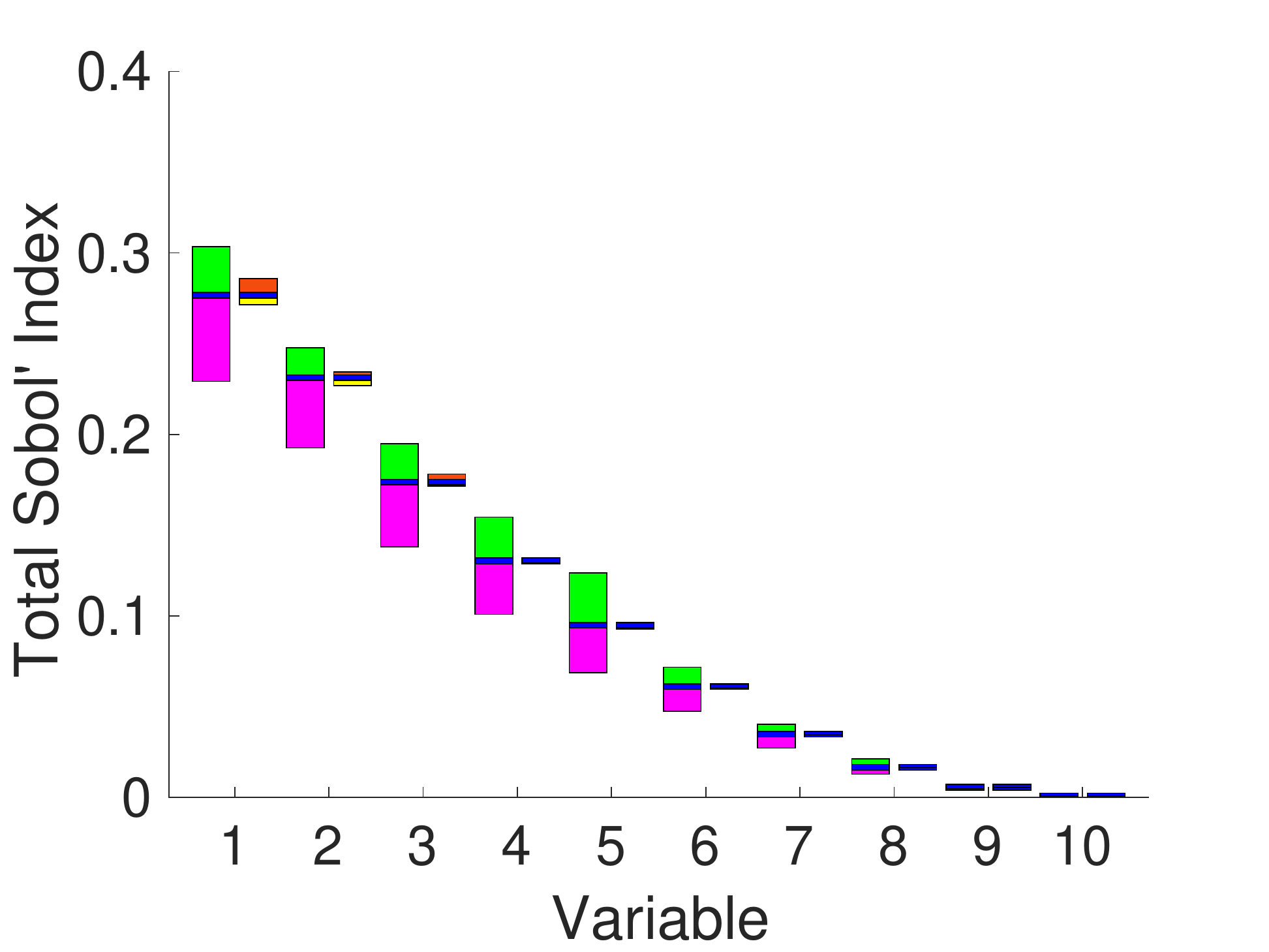}
\caption{Comparison of the marginal perturbation and joint perturbation approaches for \eqref{linear_fun}; the left and right panels correspond to the uniform and truncated normal marginals cases, respectively. The maximum and minimum total Sobol' indices, as determined by the marginal perturbation approach, are given by the top of the green bars and the bottom of the magenta bars, respectively. The maximum and minimum total Sobol' indices, as determined by the joint perturbation approach, are given by the top of the orange bars and the bottom of the yellow bars, respectively. The blue area corresponds to the nominal total Sobol' indices.}
\label{fig:linear_fun}
\end{figure}

\subsection{Synthetic example}

To emulate a problem where there is uncertainty in the support of $\X$, let $X_i$ be uniformly distributed on the interval $[A_i,B_i]$, where $A_i$ and $B_i$ are themselves random variables which are uniformly distributed on $[0,.1]$ and $[.9,1]$, respectively, $i=1,2,3$. Consider $f:[0,1]^3 \to \r$ defined by
\begin{eqnarray}
\label{synthetic_ex_function}
f(\X) = 2X_2e^{-2X_1}+X_3^2.
\end{eqnarray}

It is apparent from the structure of \eqref{synthetic_ex_function} that $f$ will vary more in the $X_2$ direction when $X_1$ is close to 0 and less when $X_1$ is close to 1. We study the sensitivity of $f$ to $X_2$, and its robustness with respect to changes in the marginal distributions. Figure~\ref{fig:synthetic_ex_function_total_indices} displays the total Sobol' indices of \eqref{synthetic_ex_function} in blue, along with the total Sobol' indices under perturbations of the marginal distributions which maximize $T_2$ in red and minimize $T_2$ in cyan.

\begin{figure}[h]
\centering
\includegraphics[width=.65 \textwidth]{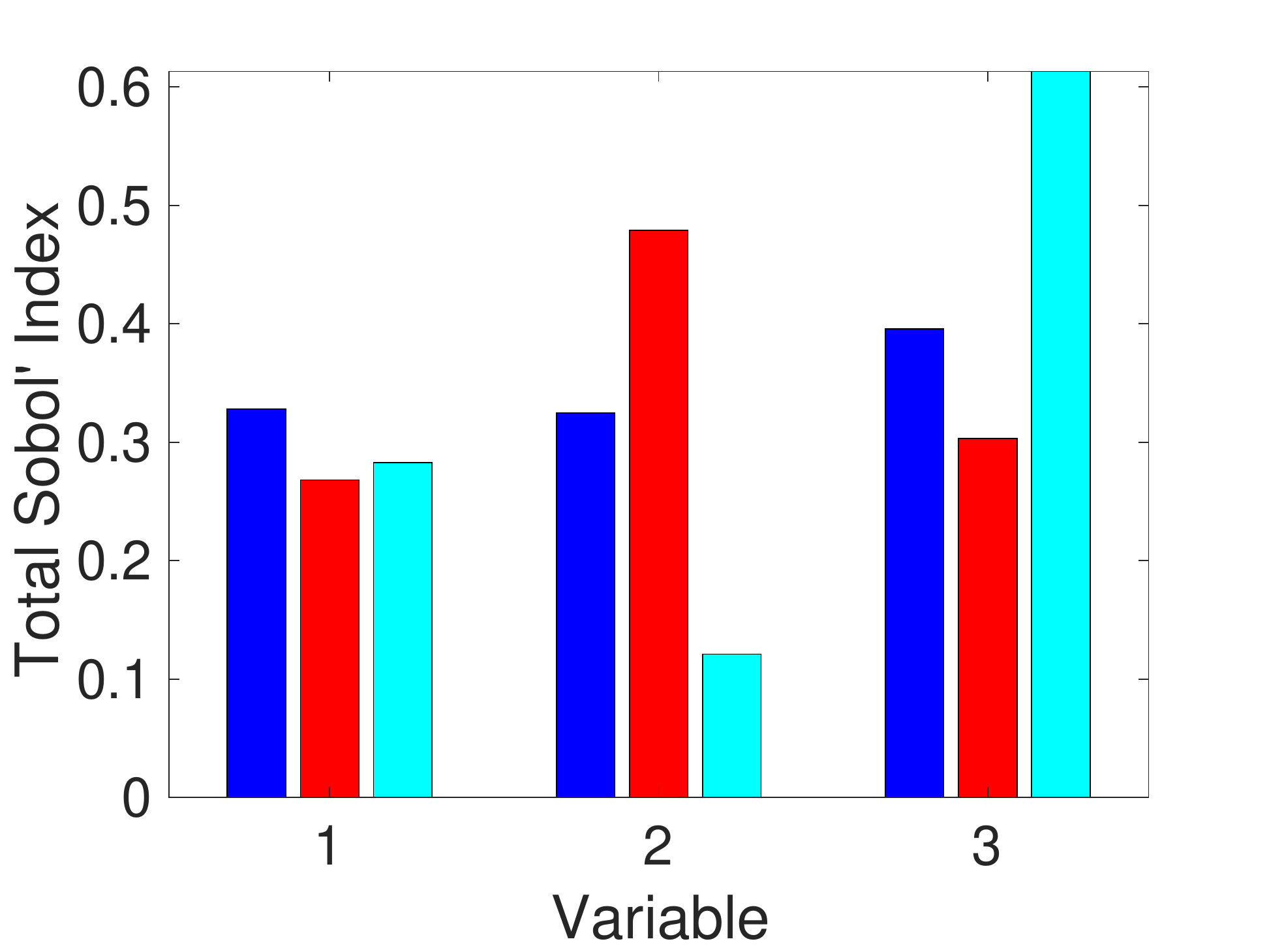}
\caption{Total Sobol' indices of \eqref{synthetic_ex_function}. Blue bars denote the indices computed using the nominal distribution, red bars denote the indices when the distribution is perturbed to maximize $T_2$, cyan bars denote the indices when the distribution is perturbed to minimize $T_2$.}
\label{fig:synthetic_ex_function_total_indices}
\end{figure}

Figure~\ref{fig:synthetic_ex_function_distributions} displays samples from the nominal distribution in blue (left), the perturbed distribution which maximizes $T_2$ in red (center), and the perturbed distribution which minimizes $T_2$ in cyan (right). In each column, the top row is the marginal distribution for $X_1$, the middle row is the marginal distribution for $X_2$, and the bottom row is the marginal distribution for $X_3$. Notice that, as expected, the distribution of $X_1$ is perturbed to give greater probability near 0 (1) to maximize (minimize) $T_2$. Further, to maximize (minimize) $T_2$, the distribution of $X_2$ is concentrated toward the boundary (interior) which has the effect of maximizing (minimizing) the variance of $X_2$. Conversely, the distribution of $X_3$ is concentrated toward the interior (boundary) to minimize (maximize) the variance of $X_3$, since decreasing (increasing) the variance of $X_3$ will increase (decrease) $T_2$.

\begin{figure}[h]
\begin{tabular}{ m{1 cm} m{3.2 cm} m{3.2 cm} m{3.2 cm} }
$X_1$ & \includegraphics[width=.2 \textwidth]{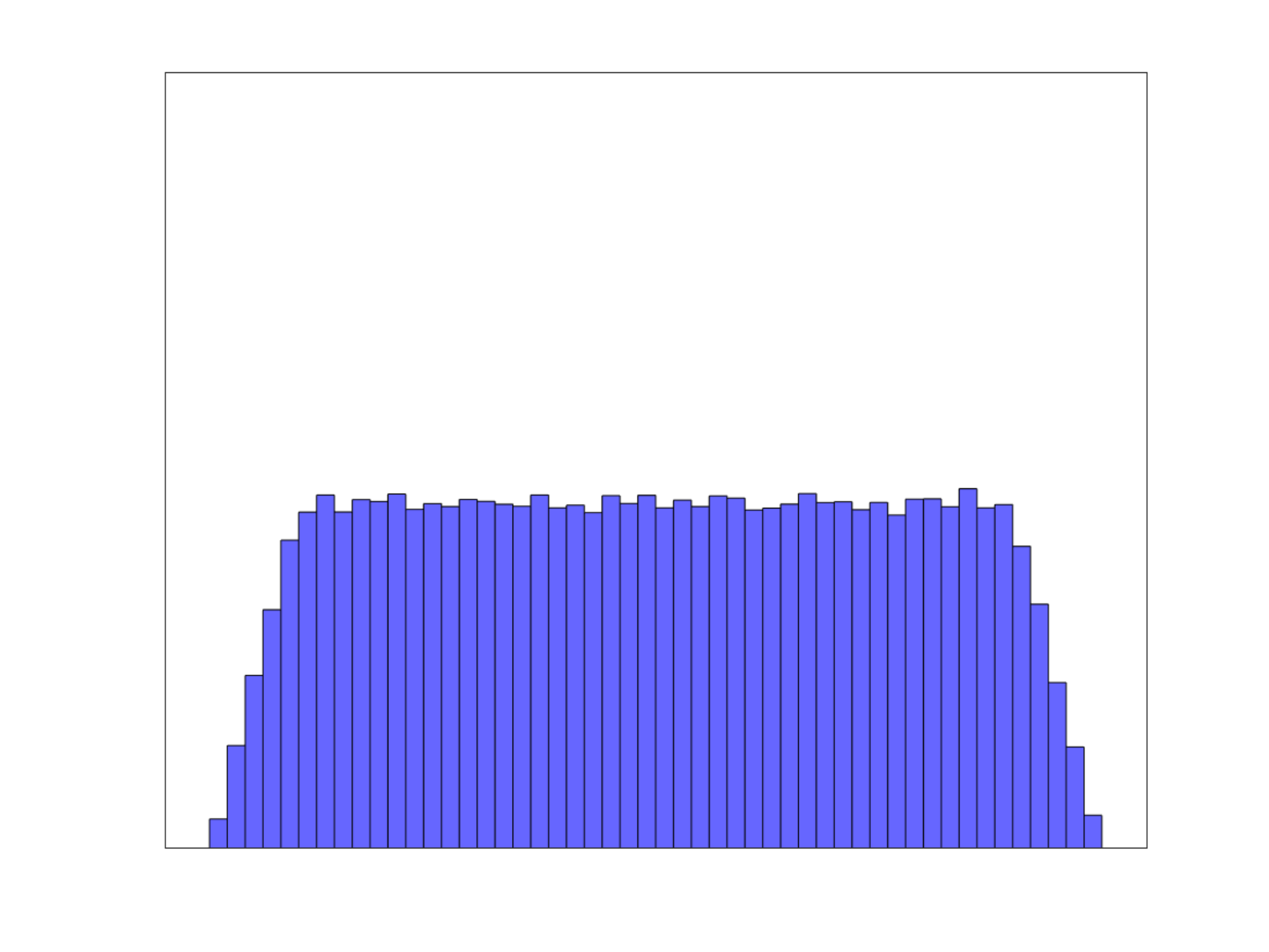} &  \includegraphics[width=.2 \textwidth]{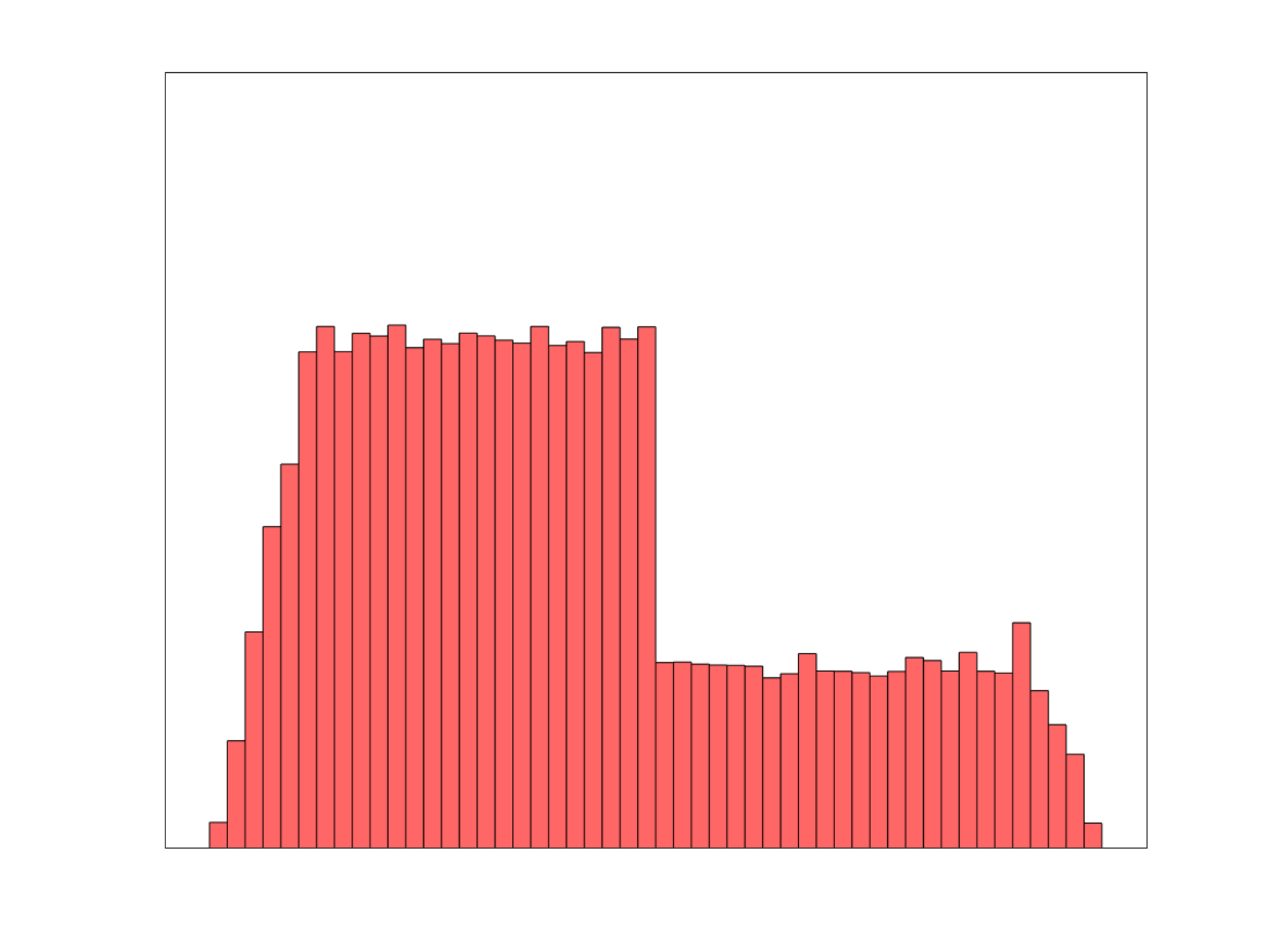} &  \includegraphics[width=.2 \textwidth]{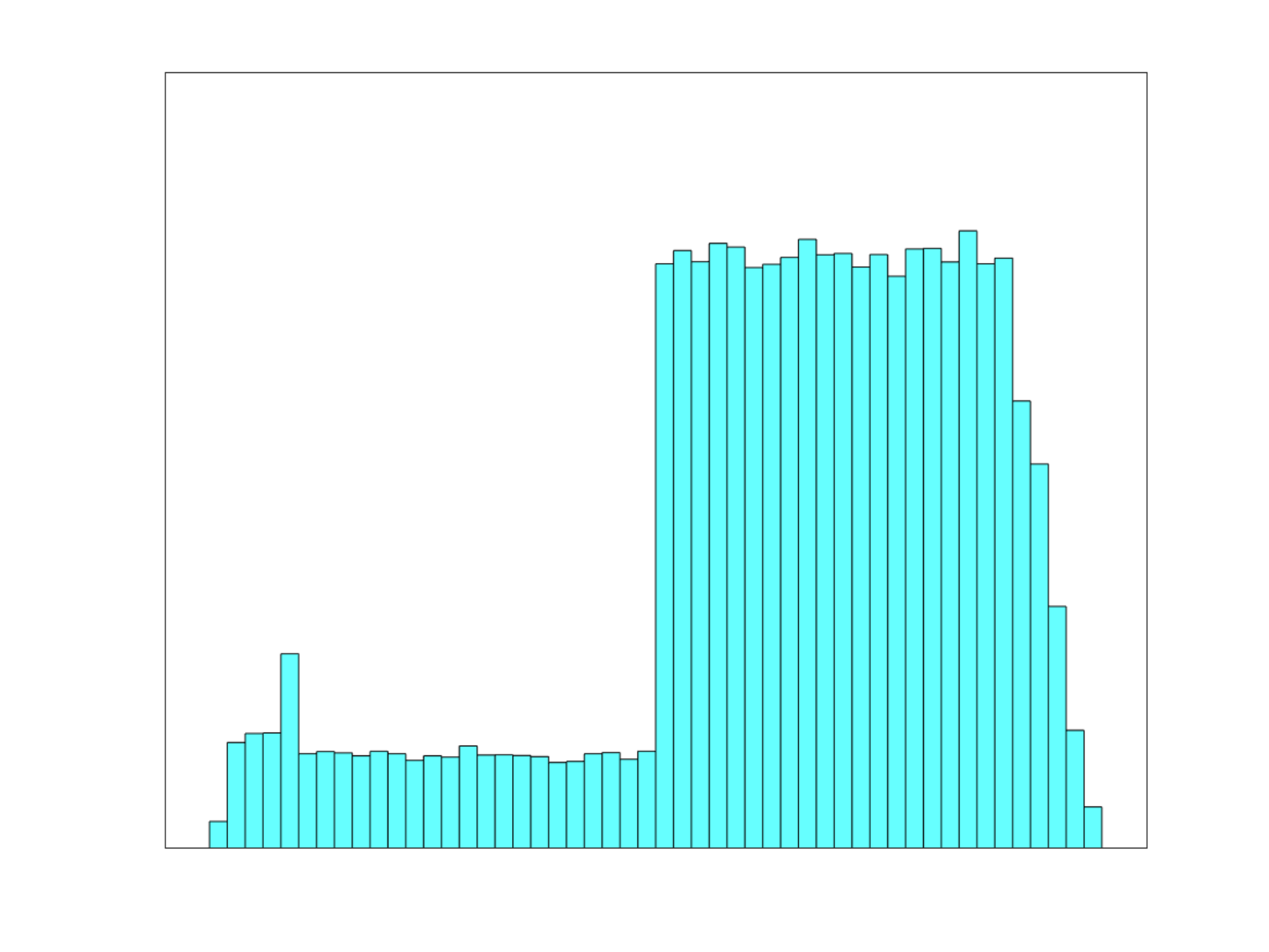} \\
$X_2$ & \includegraphics[width=.2 \textwidth]{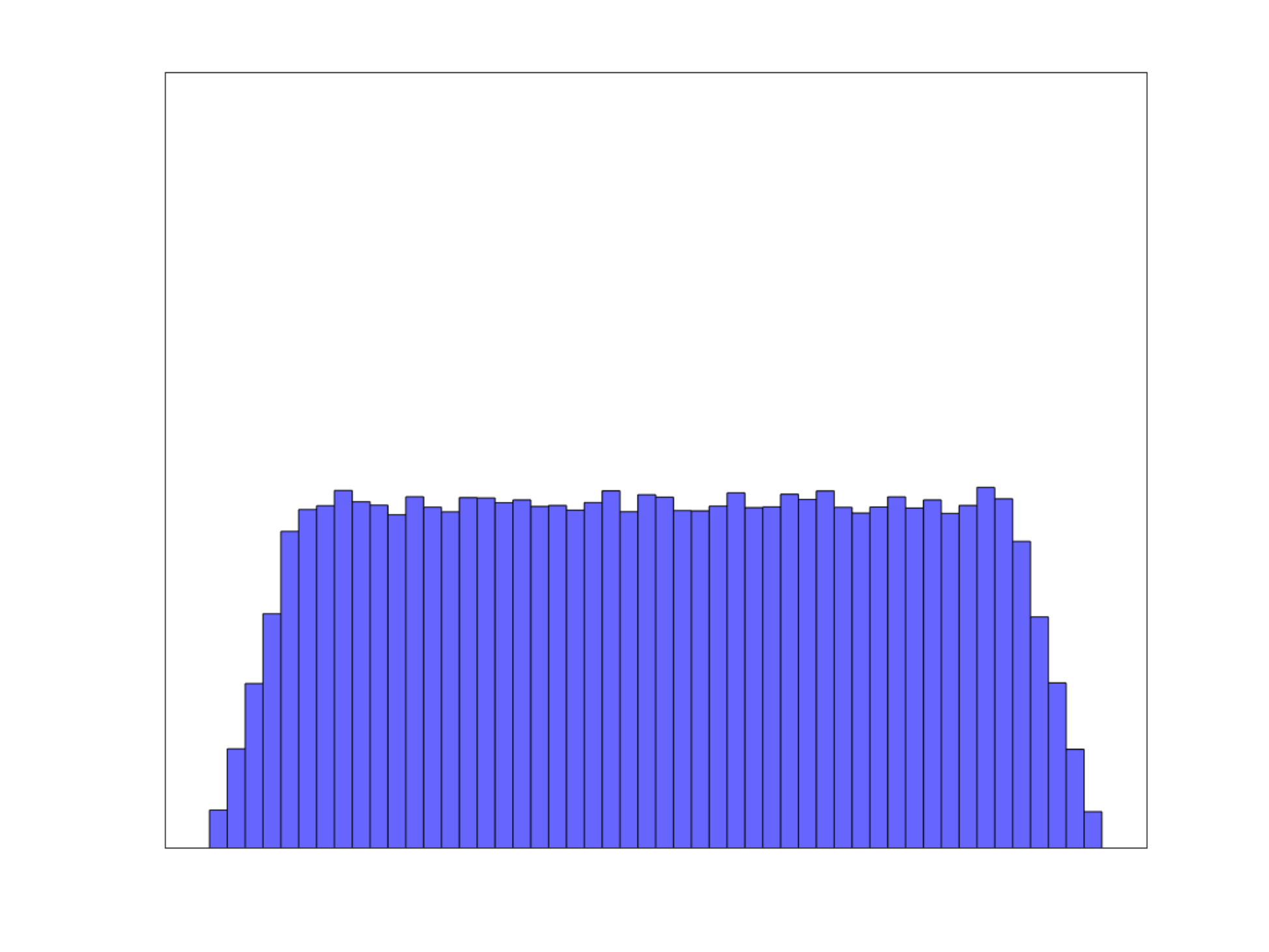} &  \includegraphics[width=.2 \textwidth]{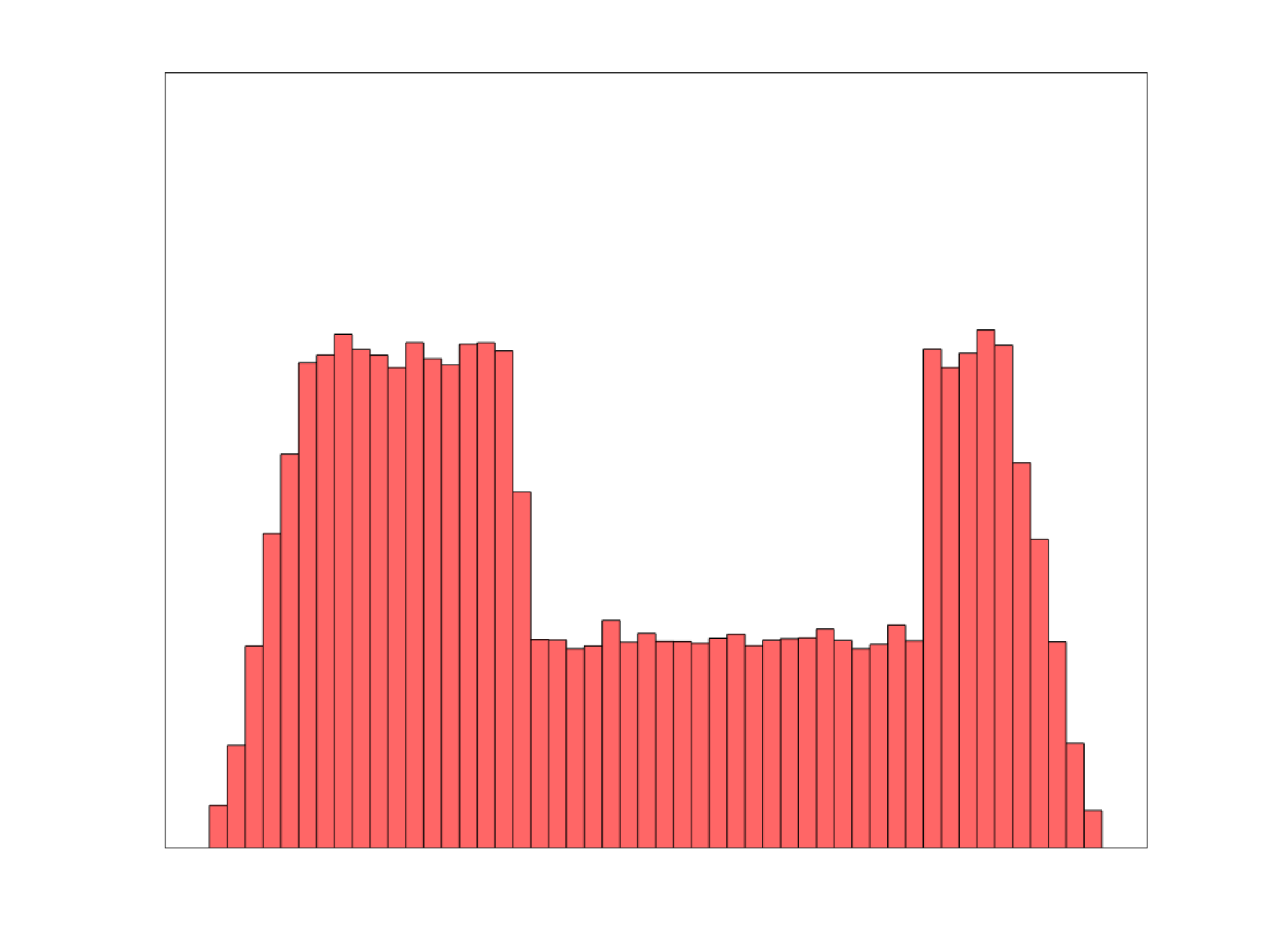} &  \includegraphics[width=.2 \textwidth]{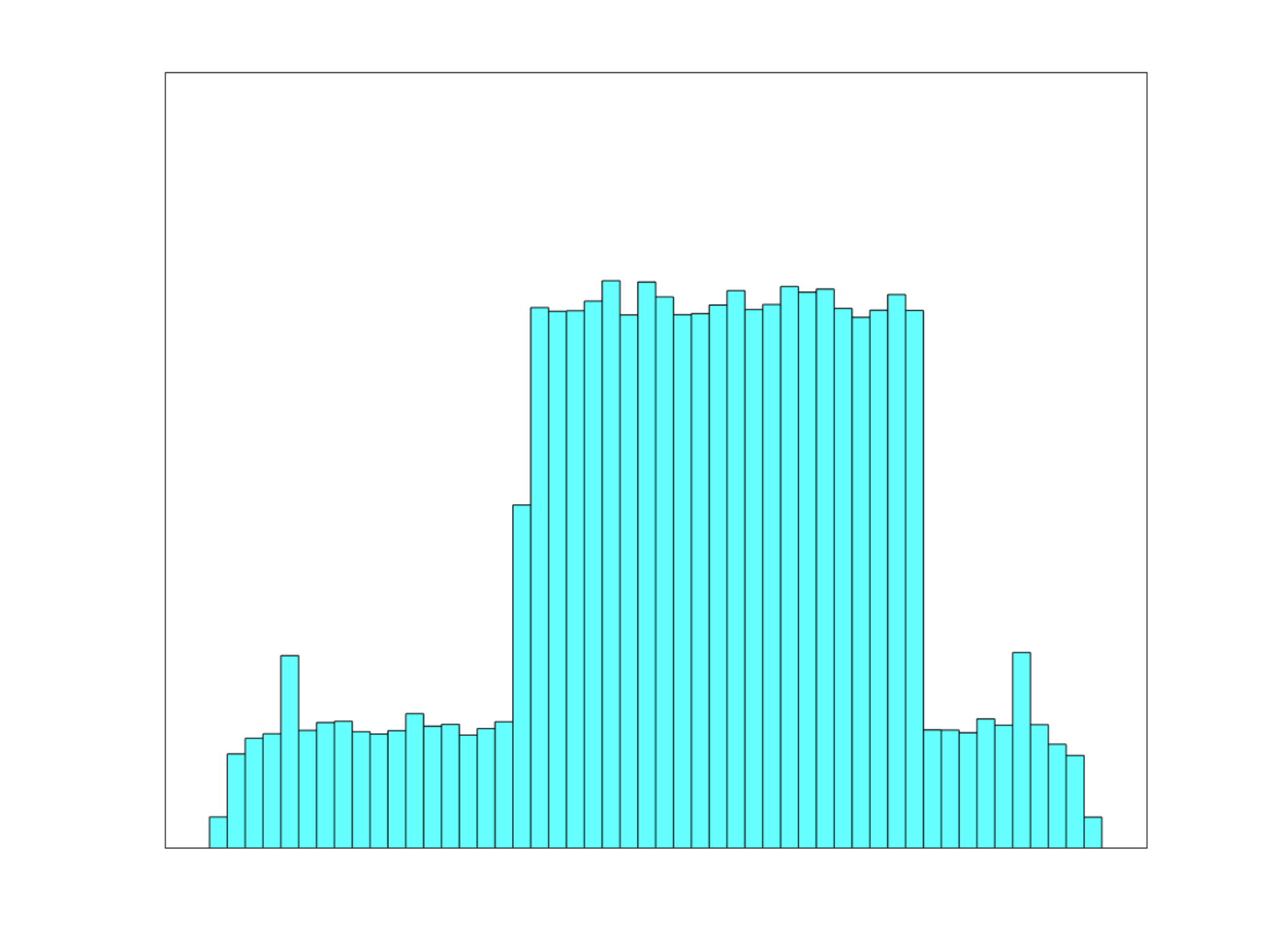} \\
$X_3$ & \includegraphics[width=.2 \textwidth]{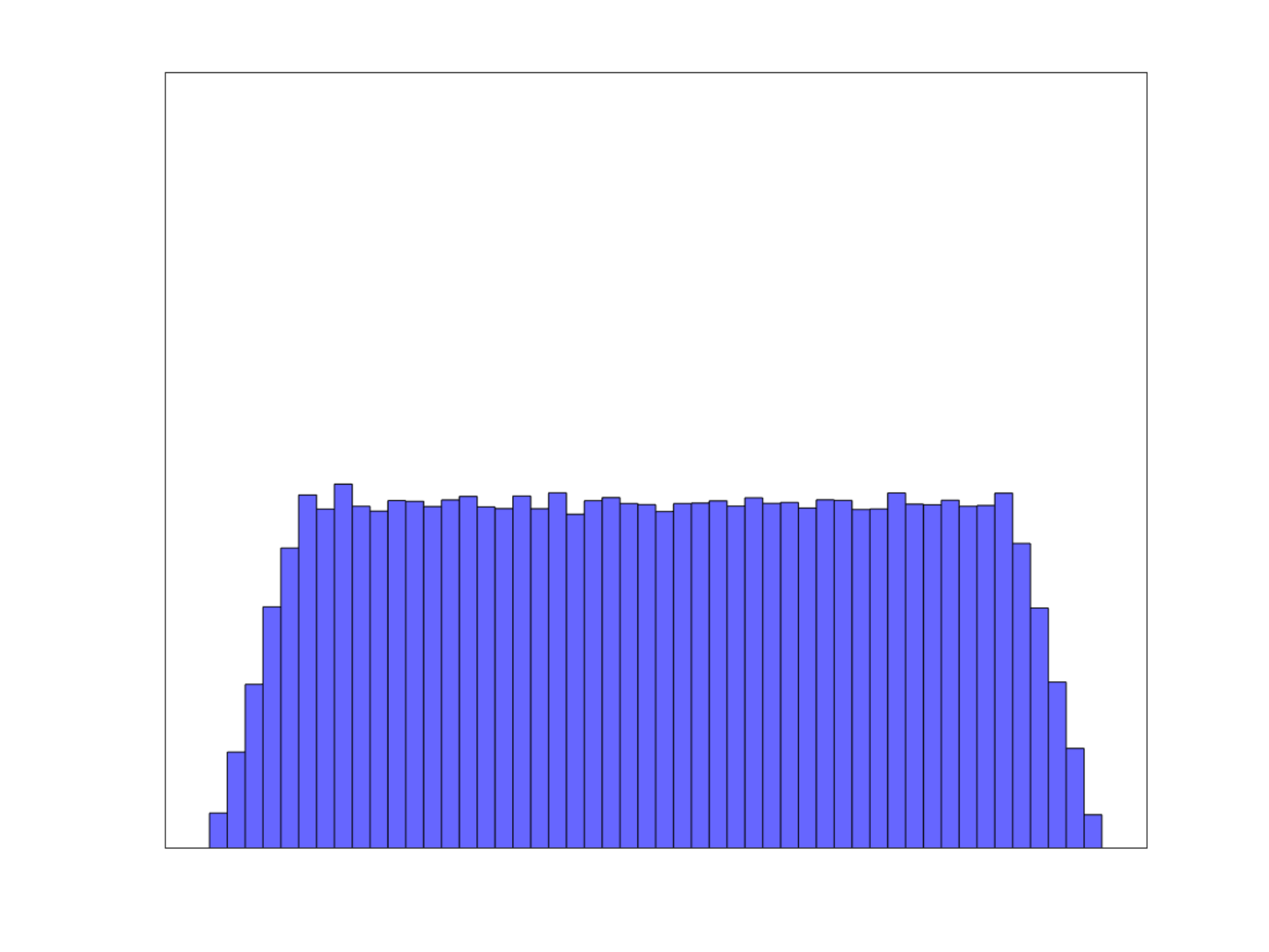} &  \includegraphics[width=.2 \textwidth]{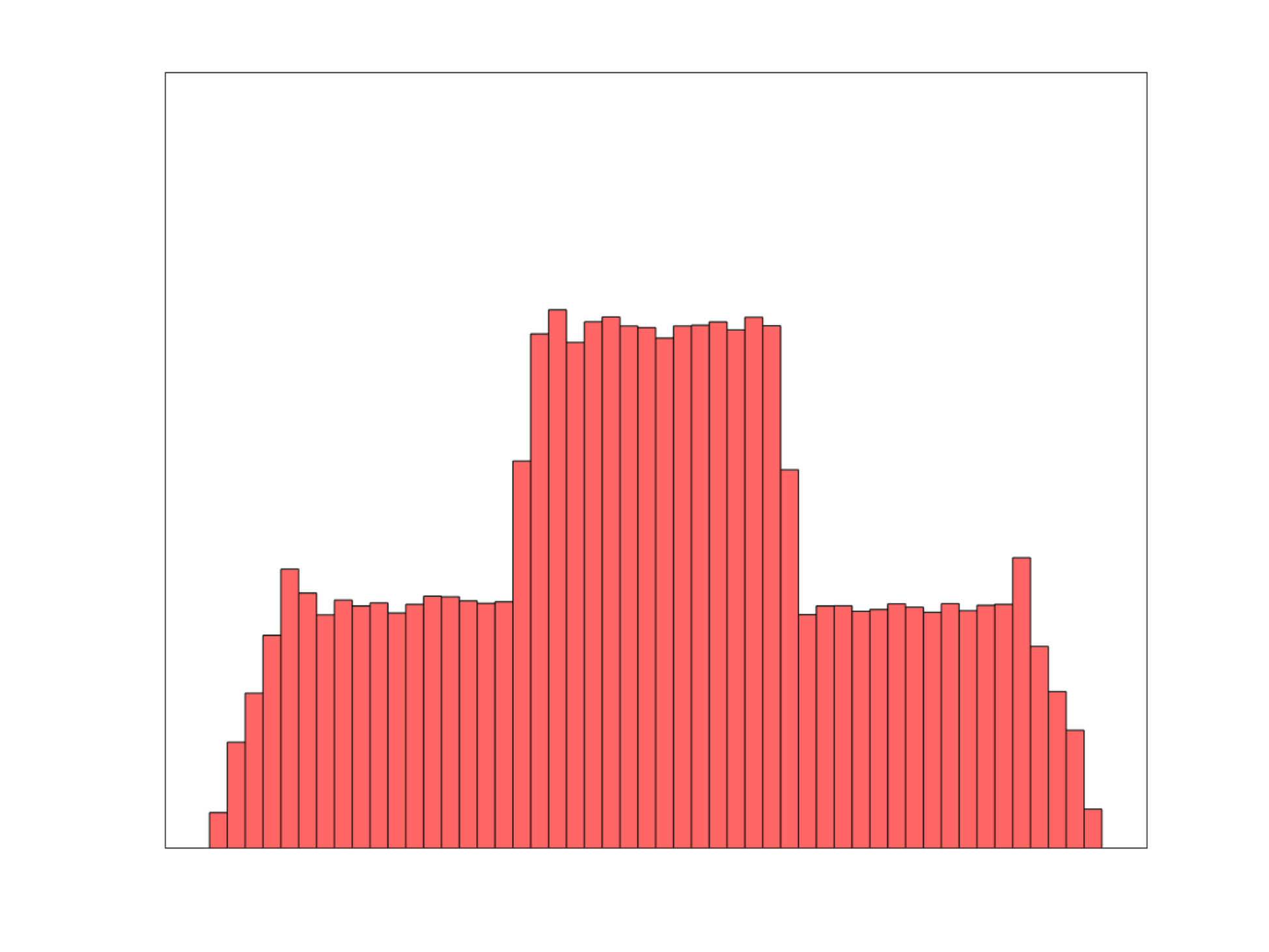} &  \includegraphics[width=.2 \textwidth]{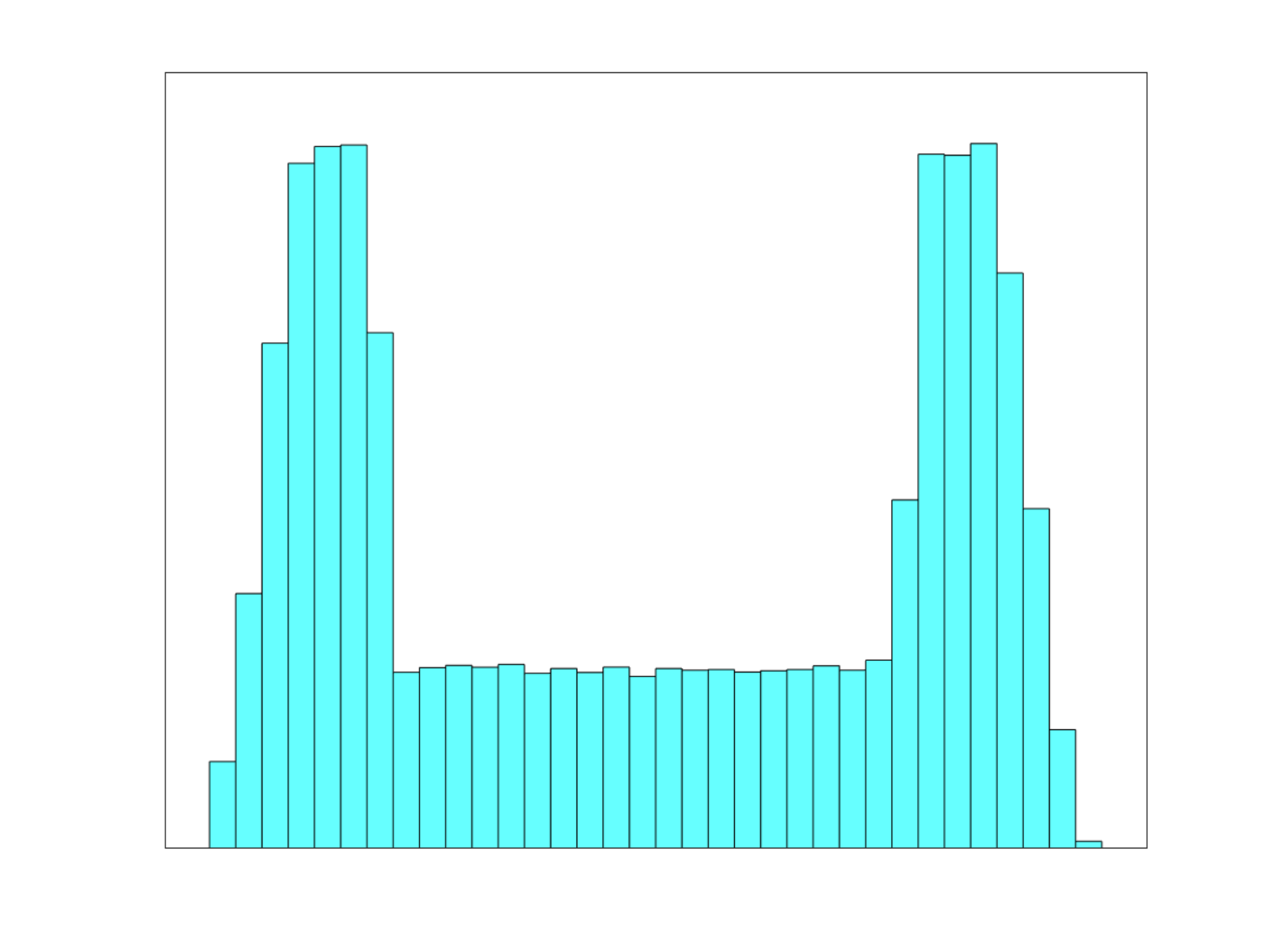} \\
\end{tabular}
\caption{Samples from the distribution of $\X$ and the perturbed distributions. The top, middle, and bottom rows correspond to variables $X_1,X_2$, and $X_3$ respectively. The left column correspond to samples from $\X$, the center column correspond to samples from the perturbed distribution which maximizes $T_2$, and the right column correspond to samples from the perturbed distribution which minimizes $T_2$.}
\label{fig:synthetic_ex_function_distributions}
\end{figure}

In practice, the support of $X_i$ is frequently unknown and in many cases chosen ad hoc. Using a mixture distribution to model $X_i$ and coupling it with marginal perturbation robustness analysis, is a simple way to account for this uncertainty. Further, visualizing the perturbed distributions indicates whether the support will significantly influence the Sobol' indices.

\subsection{Advection diffusion example}
In this example, we analyze the sensitivity of a contaminant concentration to uncertainties in its source and dynamics. Specifically, let $u(y_1,y_2)$ be the steady state contaminant concentration at a given point in the domain $[0,1]^2$ and our QoI be the average concentration in the subregion $[0.5,0.7] \times [0.5,0.7]$. To evaluate the QoI $f(\X)$, we solve 
\begin{align*}
-\epsilon \bigtriangleup u + v \cdot \nabla u =  s & \qquad \text{on } (0,1)^2\\
u = 0 & \qquad \text{on } \{0\} \times (0,1)\\
u= 0 & \qquad \text{on } (0,1) \times \{0\} \\
\mathbf n \cdot \nabla u = \nu_1 u & \qquad \text{on } \{1\} \times (0,1)\\
\mathbf n \cdot \nabla u = \nu_2 u & \qquad \text{on } (0,1) \times \{1\} \\
\end{align*}
where the velocity field and source are given by
\begin{eqnarray*}
v(y_1,y_2) = (\alpha_1(y_1+0.5),\alpha_2(y_2+0.5)) \qquad \text{and} \qquad s(y_1,y_2) = \beta e^{-\gamma ( (y_1-\xi_1)^2+(y_2-\xi_2)^2)},
\end{eqnarray*}
and $\mathbf n$ denotes the outer normal vector to the boundary. The partial differential equation is solved using Chebfun \cite{chebfun}. The QoI $f(\X)$ is then computed by integrating $u(y_1,y_2)$ over $[0.5,0.7] \times [0.5,0.7]$. In this example,
\begin{eqnarray*}
\X = (\epsilon,\alpha_1,\alpha_2,\xi_1,\xi_2,\gamma,\beta,\nu_1,\nu_2)
\end{eqnarray*}
is a vector of 9 uncertain inputs which are assumed to be independent uniform random variables with $\pm 30 \%$ uncertainty about their nominal values, given in Table~\ref{tab:nominal_values}.

\begin{table}[h]
\centering
\begin{tabular}{cccccccccc}
\toprule
Parameter & $\epsilon$ & $\alpha_1$ & $\alpha_2$ & $\xi_1$ & $\xi_2$ & $\gamma$ & $\beta$ & $\nu_1$ & $\nu_2$ \\
Nominal Value & 10 & 210 & 70 & 0.5 & 0.5 & 50 & 100 & 0.1 & 0.2\\
\bottomrule
\end{tabular}
\caption{Nominal values for parameters for the advection diffusion example. Each parameter is assumed to be independent and uniformly distributed with $\pm 30 \%$ uncertainty about these nominal values.}
\label{tab:nominal_values}
\end{table}

Figure~\ref{fig:state_and_source} illustrates several features of this advection diffusion example. The left panel of Figure~\ref{fig:state_and_source} shows contours of the average source (computed from 10,000 samples), with the average velocity field overlaid with red arrows. The right panel of Figure~\ref{fig:state_and_source} displays contours of the average solution $u(y_1,y_2)$. In both panels, cyan lines indicate the Robin boundaries and magenta lines indicate the Dirichlet boundaries and a black box encloses the region over which we integrate to evaluate $f$.

\begin{figure}[h]
\centering
\includegraphics[width=.49 \textwidth]{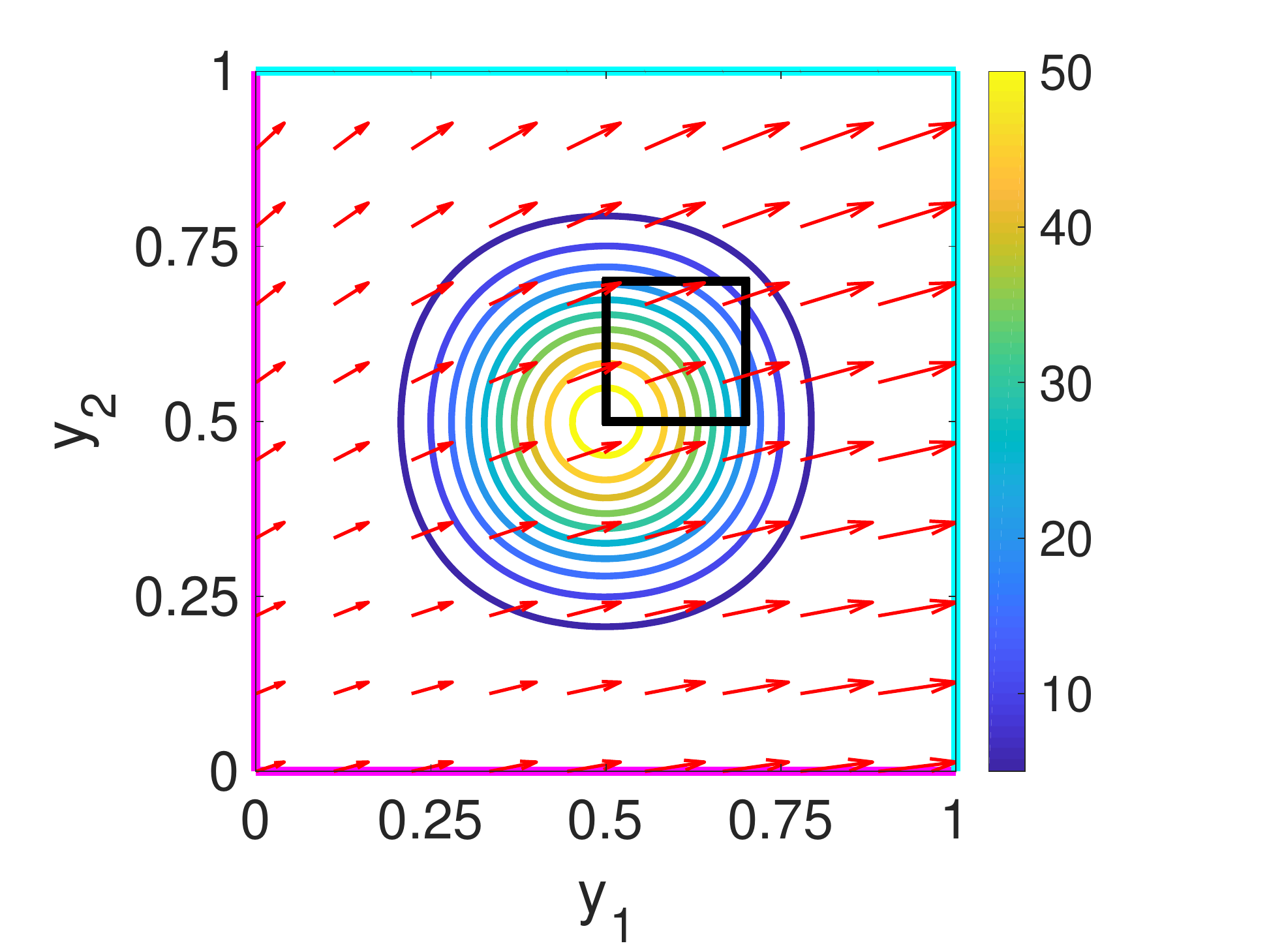}
\includegraphics[width=.49 \textwidth]{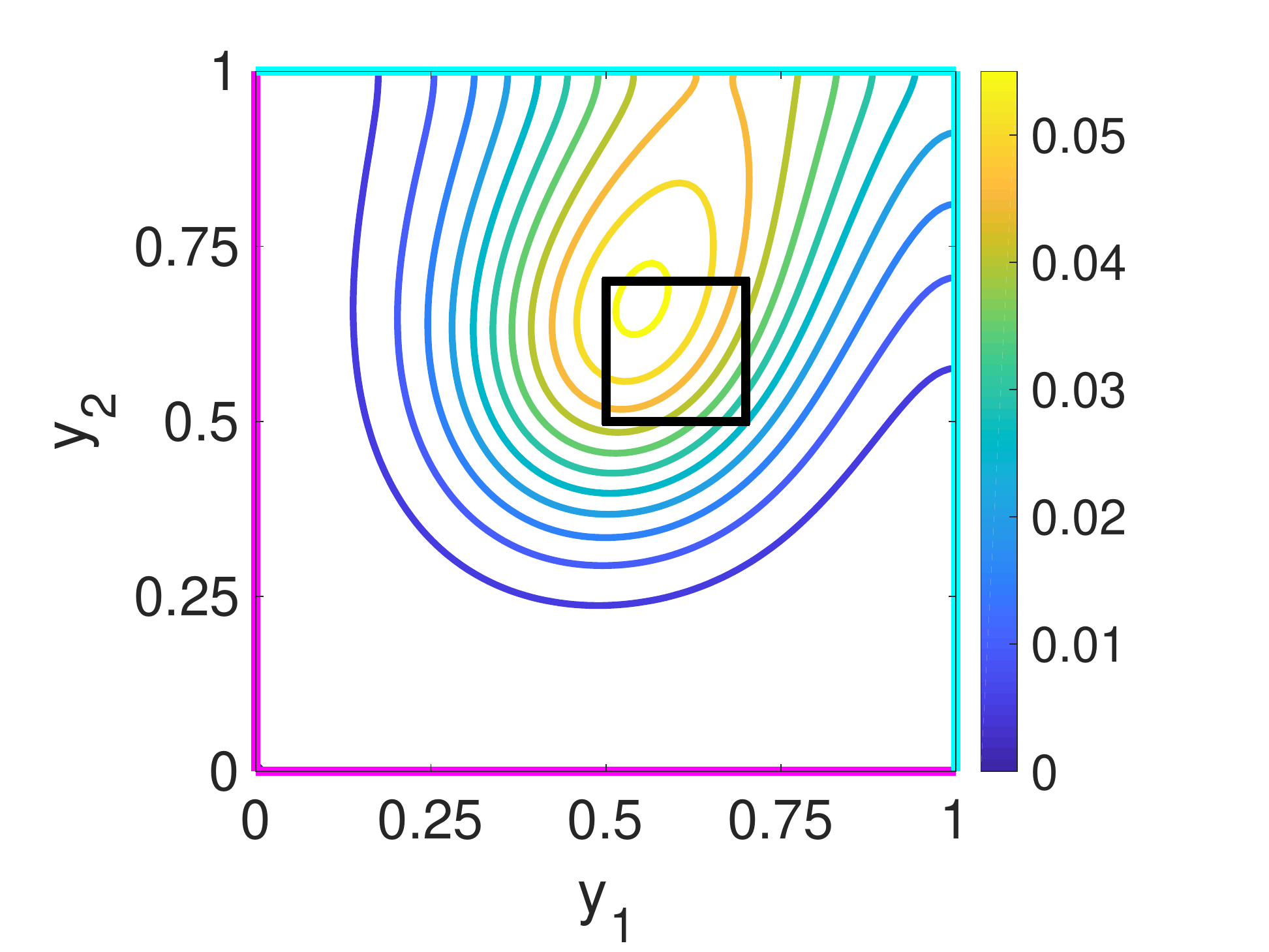}
\caption{Left: contours of the average source with the average velocity field overlaid with red arrows. Right: contours of the average contaminant concentration $u$. The cyan and magenta lines indicate the Robin and Dirichlet boundaries, respectively. The black box encloses the region over which we integrate to compute $f$.}
\label{fig:state_and_source}
\end{figure}

The total Sobol' indices of $f$ are computed via Monte Carlo integration with $n=10,000$ samples, followed by marginal distribution robustness analysis post processing. In a similar fashion to Figure~\ref{fig:linear_fun},  Figure~\ref{fig:adv_diff_indices_min_max} displays the nominal, maximum, and minimum total Sobol' indices observed over all perturbations with $\Delta \le \tau$. The top of the green bar and bottom of the magenta bar denote the maximum and minimum values, respectively; the blue region between them denotes the value of the nominal total Sobol' index. The green and magenta bars in Figure~\ref{fig:adv_diff_indices_min_max} do not correspond to a particular set of Sobol' indices for a fixed perturbation, but rather the extreme values observed over all perturbations. We observe that $\alpha_2$, $\nu_1$, and $\nu_2$ exhibit minimal influence on our QoI and do not change significantly when the distribution of $\X$ is perturbed. The source magnitude $\beta$ is found to be the most important parameter (under the nominal distribution), but its total Sobol' index varies significantly when the marginal distributions are perturbed.

\begin{figure}[h]
\centering
\includegraphics[width=.65 \textwidth]{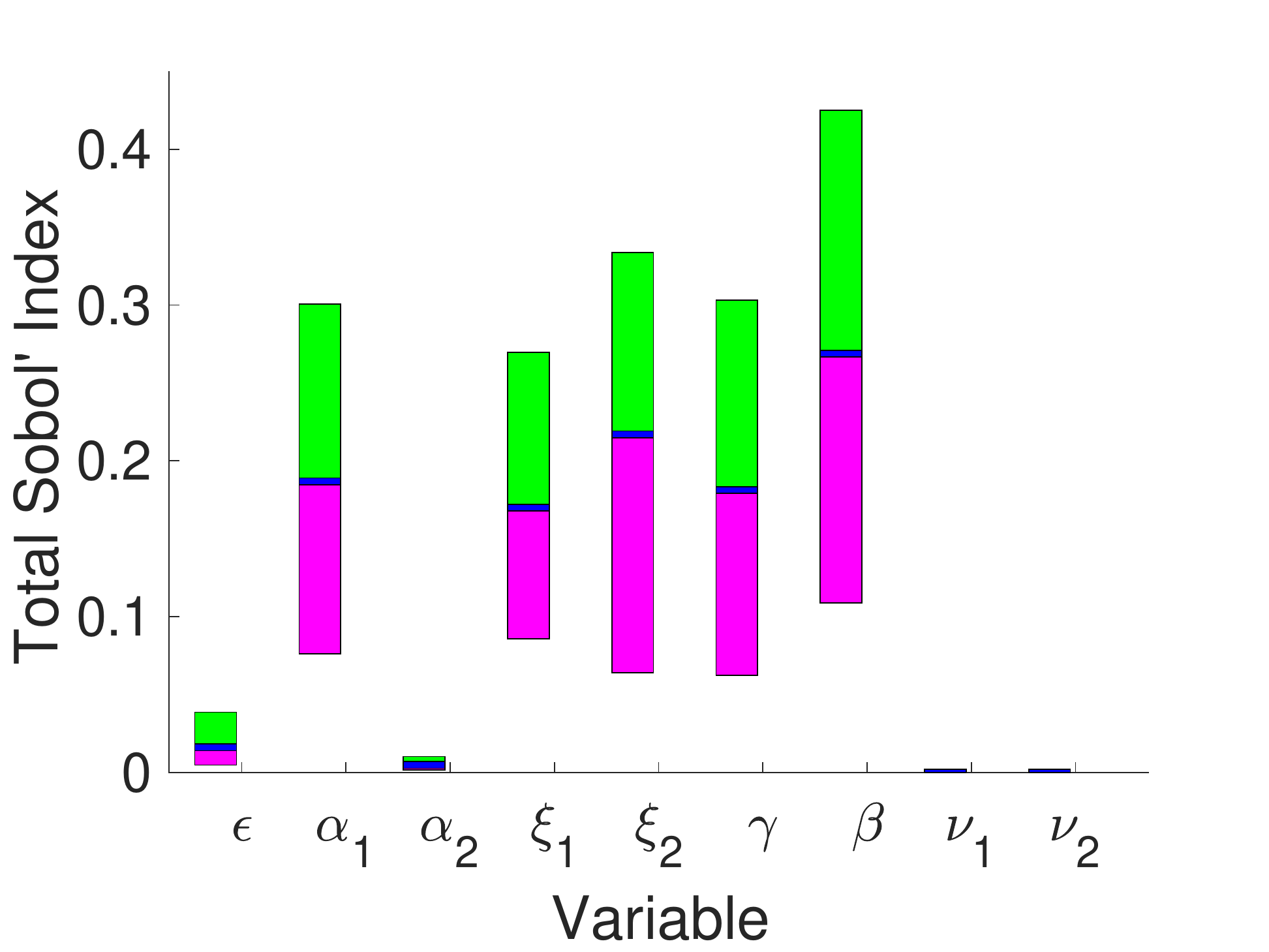}
\caption{Total Sobol' indices for the advection diffusion example. The top of the green bars and bottom of the magenta bars denote the maximum and minimum perturbed total Sobol' indices, respectively; the blue area between them indicates the nominal total Sobol' indices.}
\label{fig:adv_diff_indices_min_max}
\end{figure}

To further explore the influence of $\beta$, we analyze the particular perturbations which cause the greatest change in its total Sobol' index. Figure~\ref{fig:adv_diff_indices_pert_7} displays the nominal total Sobol' indices in blue, the perturbed total Sobol' indices which maximize the total Sobol' index for $\beta$ in red, and the perturbed total Sobol' indices which minimize the total Sobol' index for $\beta$ in cyan. This is complemented by Figure~\ref{fig:adv_diff_dist_pert_7} which shows the marginal distributions for each parameter using the nominal distribution in blue, the perturbed distribution which maximize the total Sobol' index for $\beta$ in red, and the perturbed distribution which minimize the total Sobol' index for $\beta$ in cyan. We observe that in the maximum case (red), the total Sobol' index for $\beta$ is much larger than all other total Sobol' indices. The perturbed distribution in this case exhibit several features:
	\begin{enumerate}
	\item[$\bullet$]  greater probability near the boundaries of $\beta$'s support, 
	\item[$\bullet$] greater probability near the interior of $\gamma$'s support,
	\item[$\bullet$] greater probability for smaller values of $\alpha_1$ and $\xi_2$,
	\item[$\bullet$] greater probability for larger values of $\alpha_2$ and $\xi_1$.
	\end{enumerate}
	The minimum case (cyan) gives a different conclusion where $\beta$ is the fifth most important parameter. The features of the perturbed distribution in this case are opposite of the maximum case (red). In particular, the source location and advection velocity become more important by giving greater probability to the source being in the top left region of the domain (smaller $y_1$, larger $y_2$), and having having greater probability in the tails of the horizontal advection speed $\alpha_1$.

\begin{figure}[h]
\centering
\includegraphics[width=.65 \textwidth]{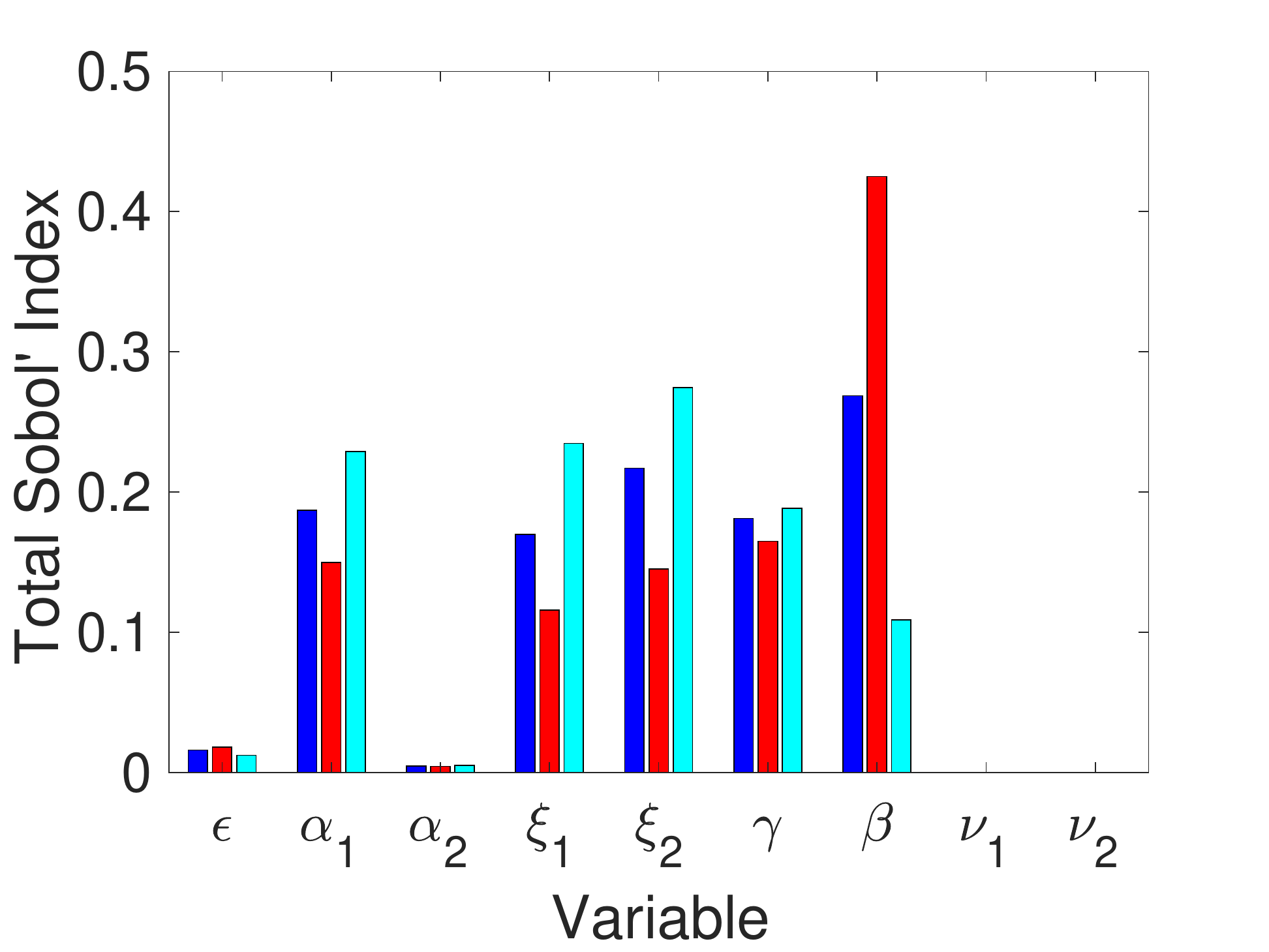}
\caption{Total Sobol' indices for the advection diffusion example. Blue bars denote the indices computed using the nominal distribution, red bars denote the indices when the distribution is perturbed to maximize the total Sobol' index for $\beta$, cyan bars denote the indices when the distribution is perturbed to minimize the total Sobol' index for $\beta$.}
\label{fig:adv_diff_indices_pert_7}
\end{figure}

\begin{figure}
\begin{tabular}{ m{1 cm} m{3.2 cm} m{3.2 cm} m{3.2 cm} }
$\epsilon$ & \includegraphics[width=.2 \textwidth]{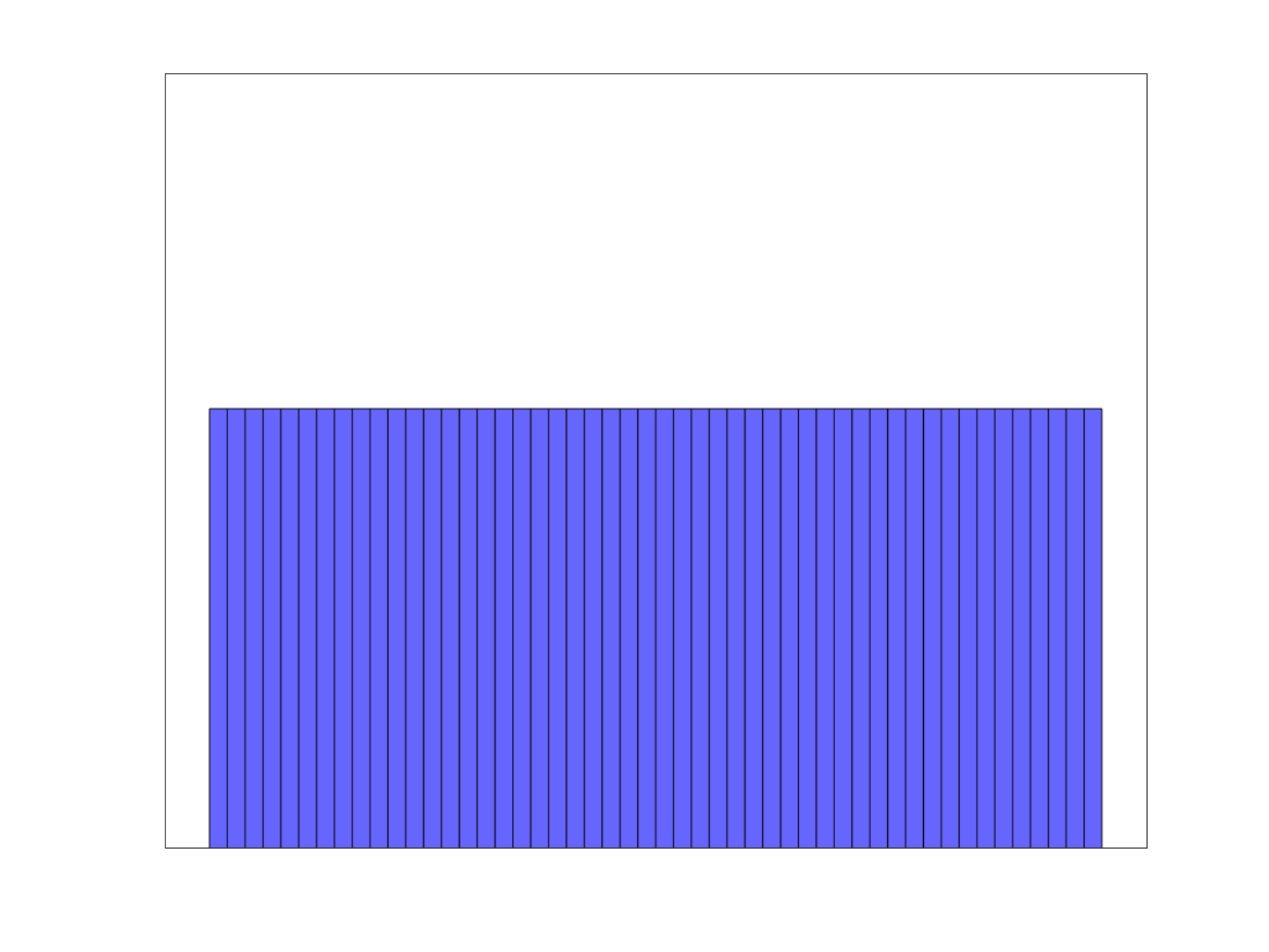} &  \includegraphics[width=.2 \textwidth]{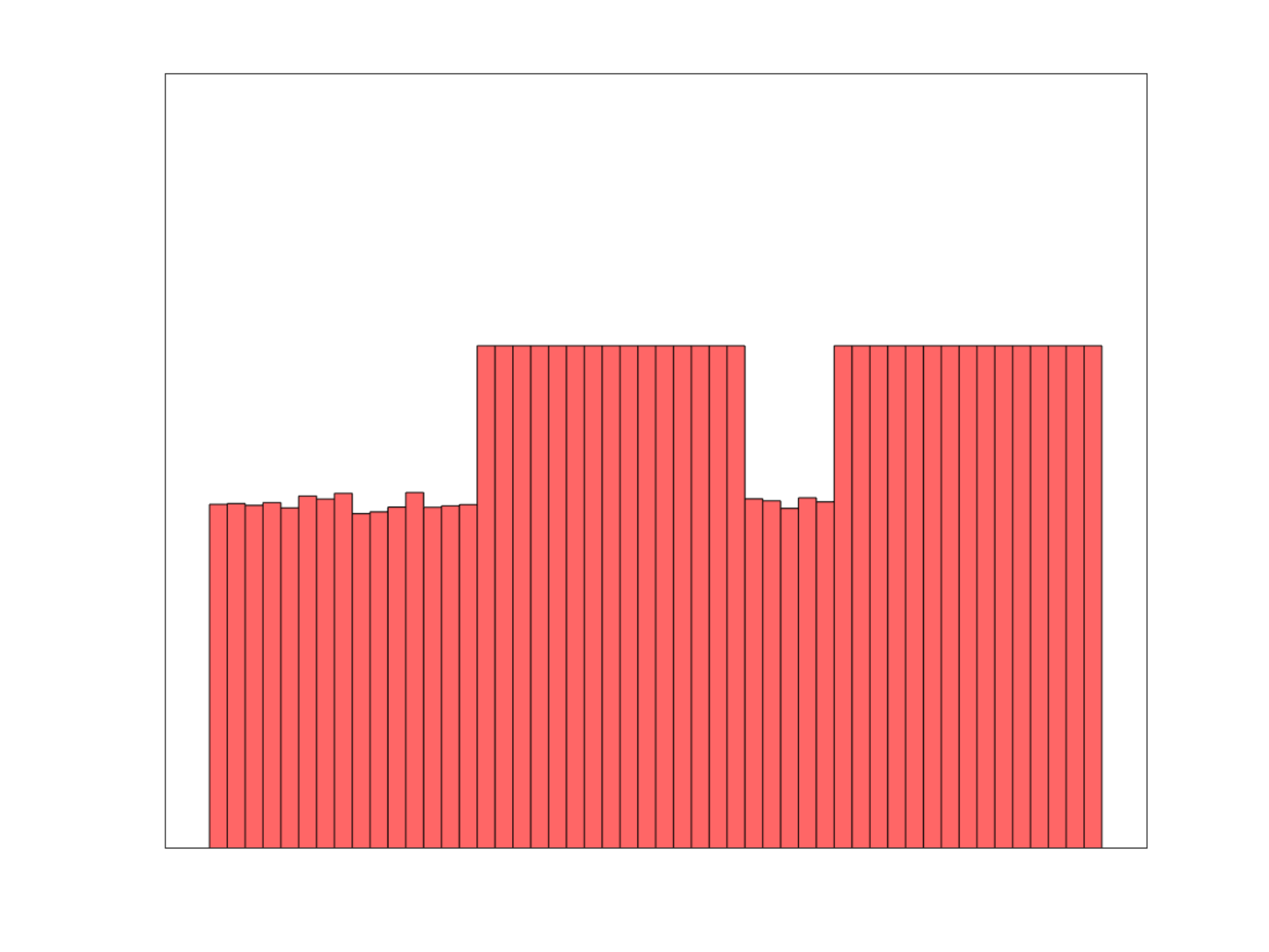} &  \includegraphics[width=.2 \textwidth]{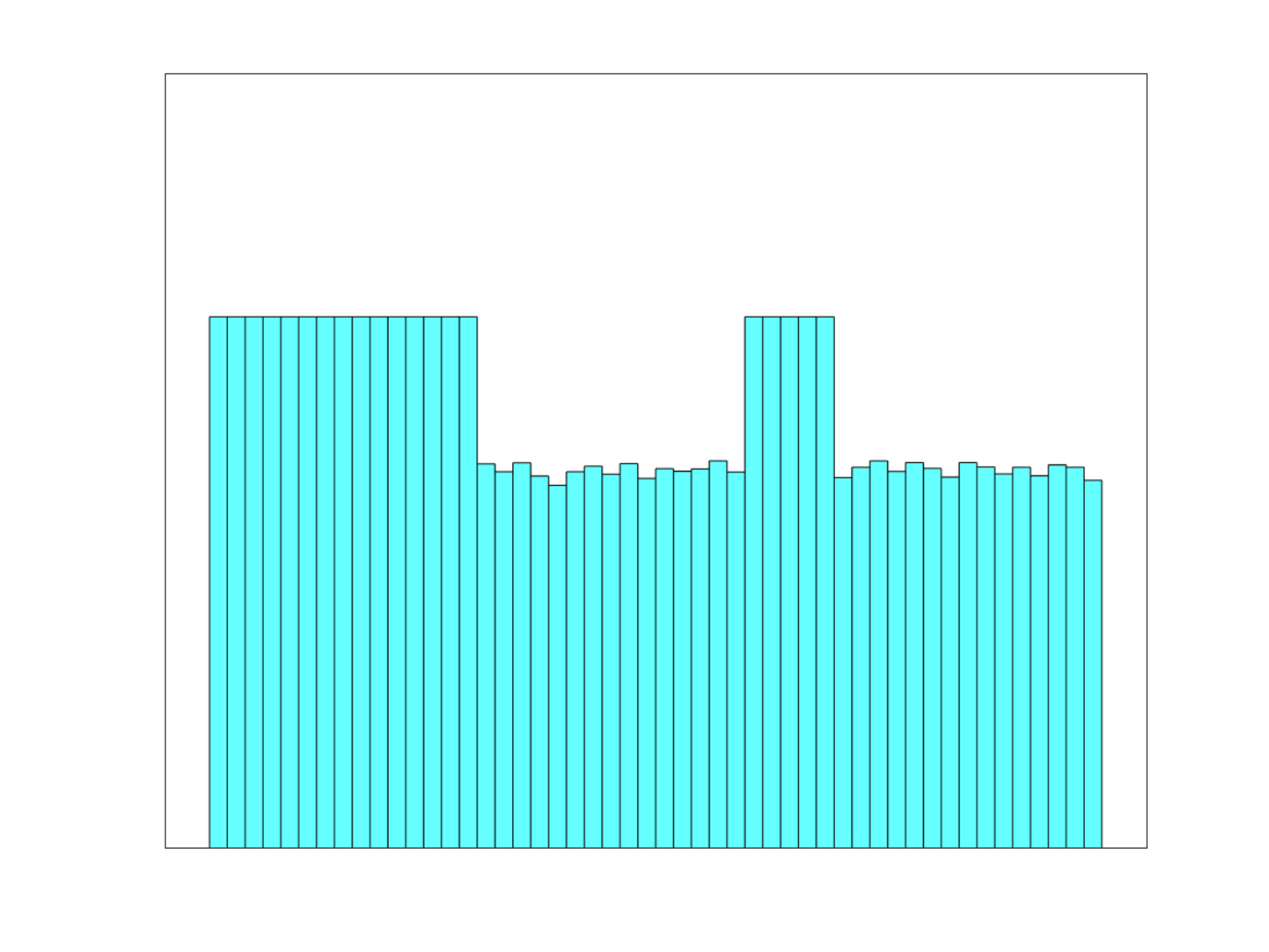} \\
$\alpha_1$ & \includegraphics[width=.2 \textwidth]{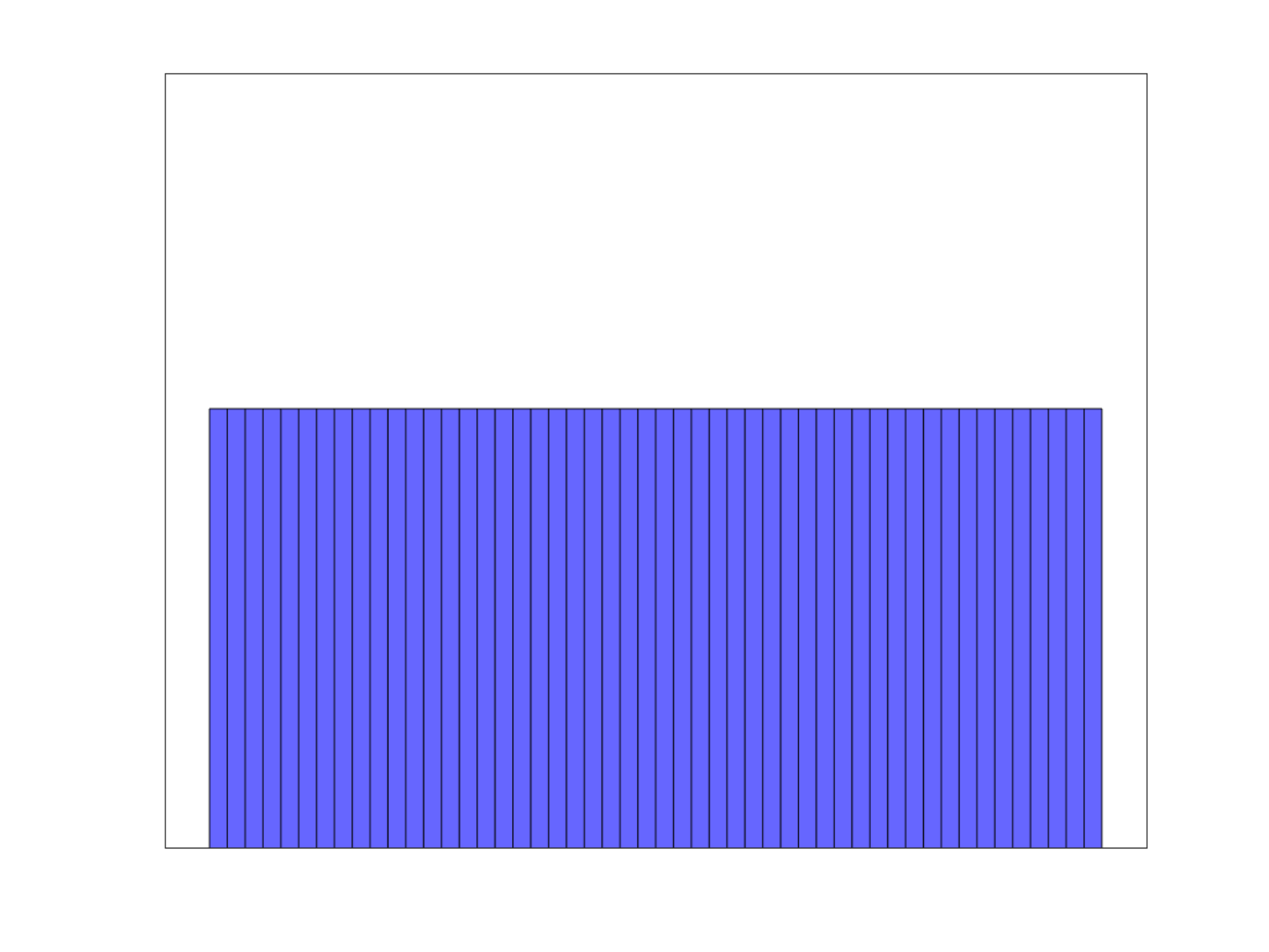} &  \includegraphics[width=.2 \textwidth]{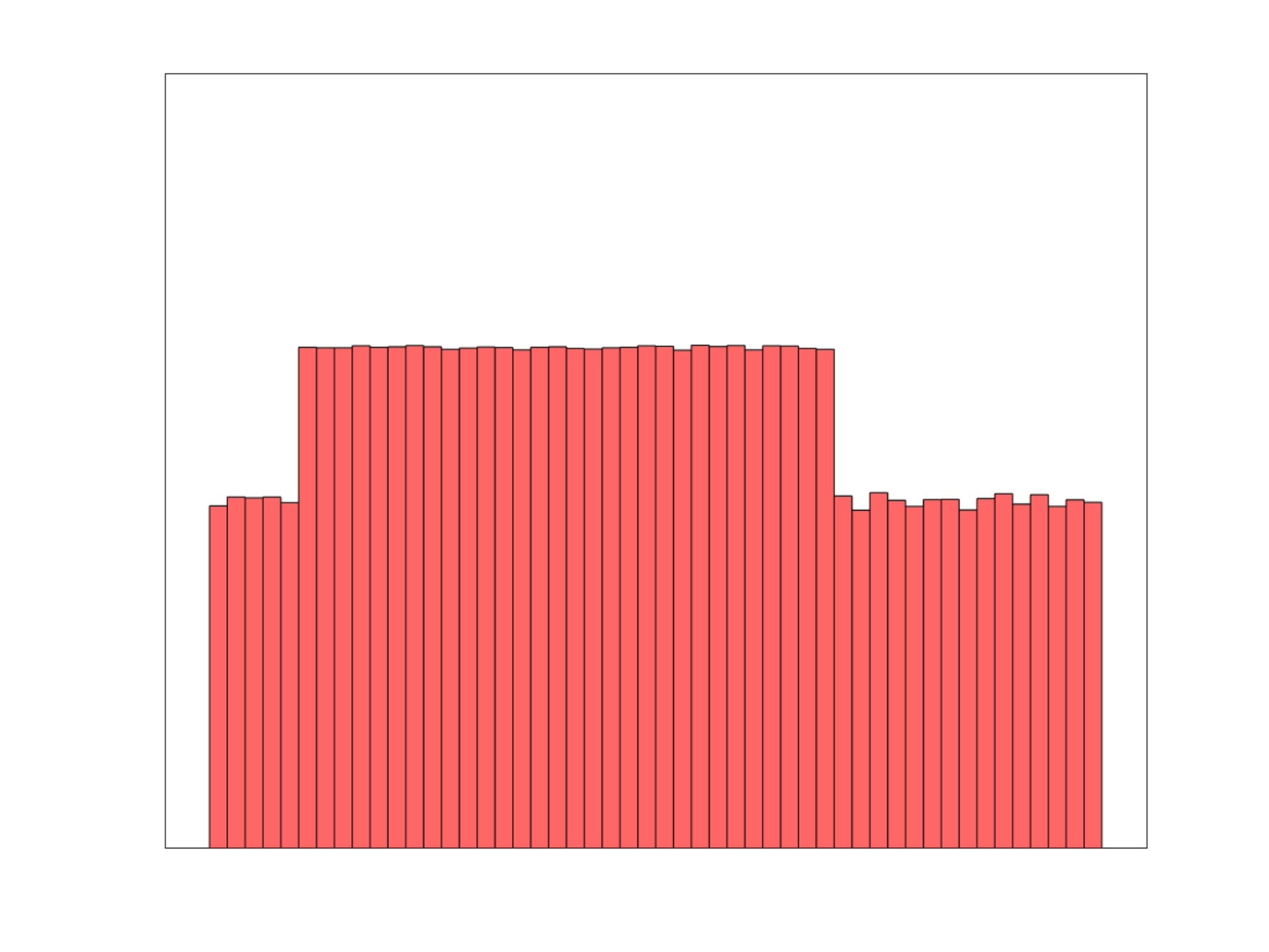} &  \includegraphics[width=.2 \textwidth]{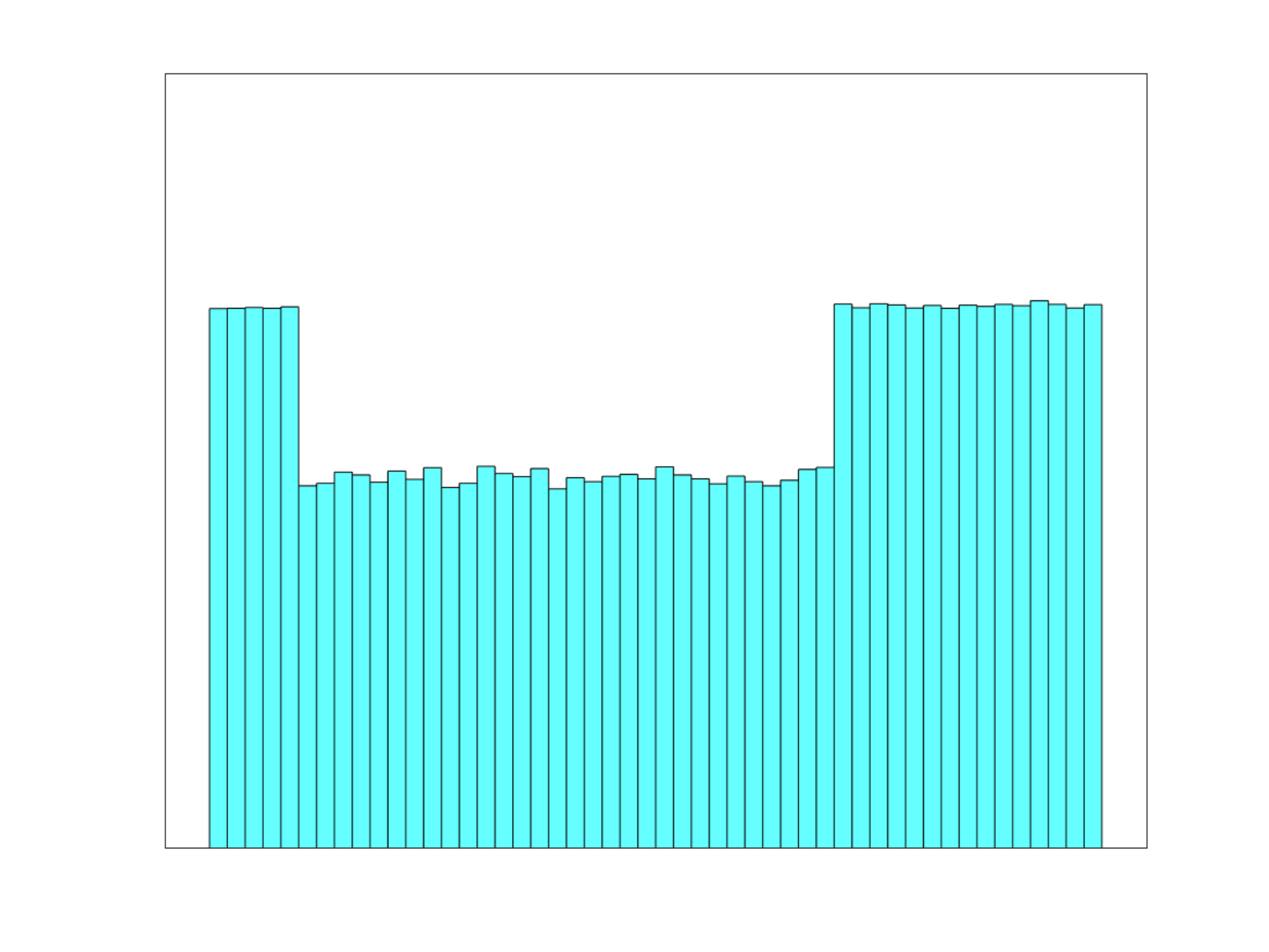} \\
$\alpha_2$ & \includegraphics[width=.2 \textwidth]{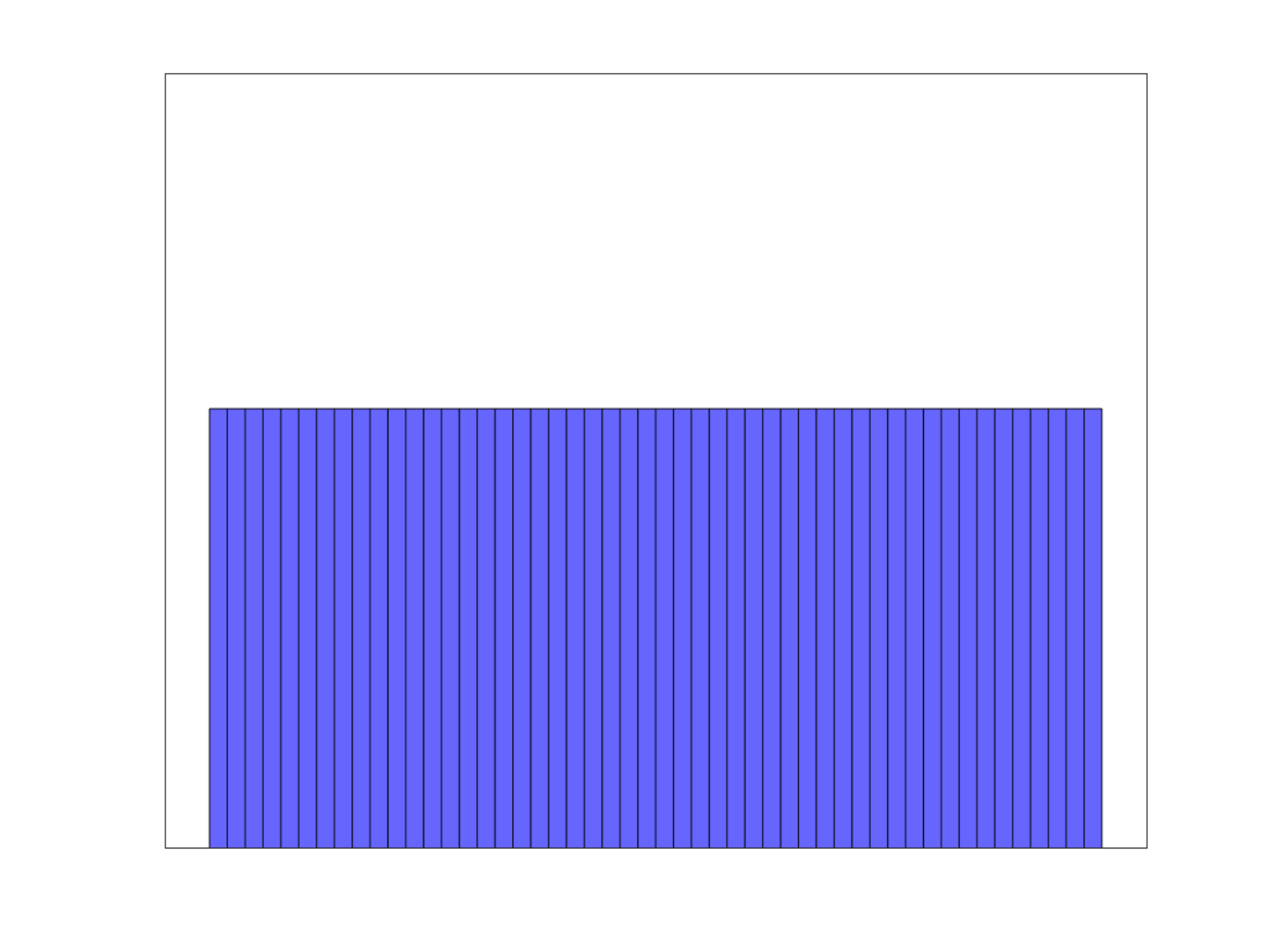} &  \includegraphics[width=.2 \textwidth]{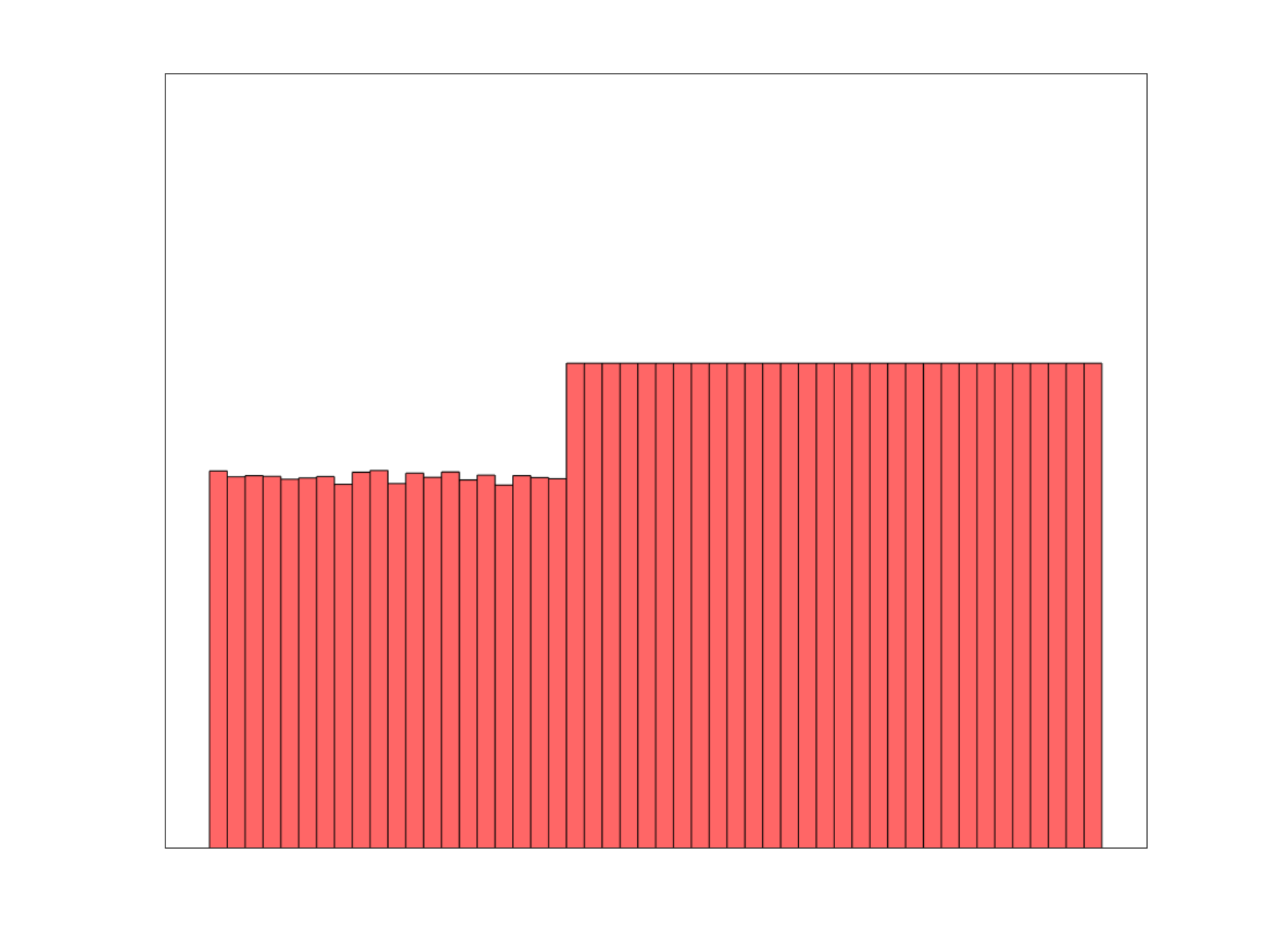} &  \includegraphics[width=.2 \textwidth]{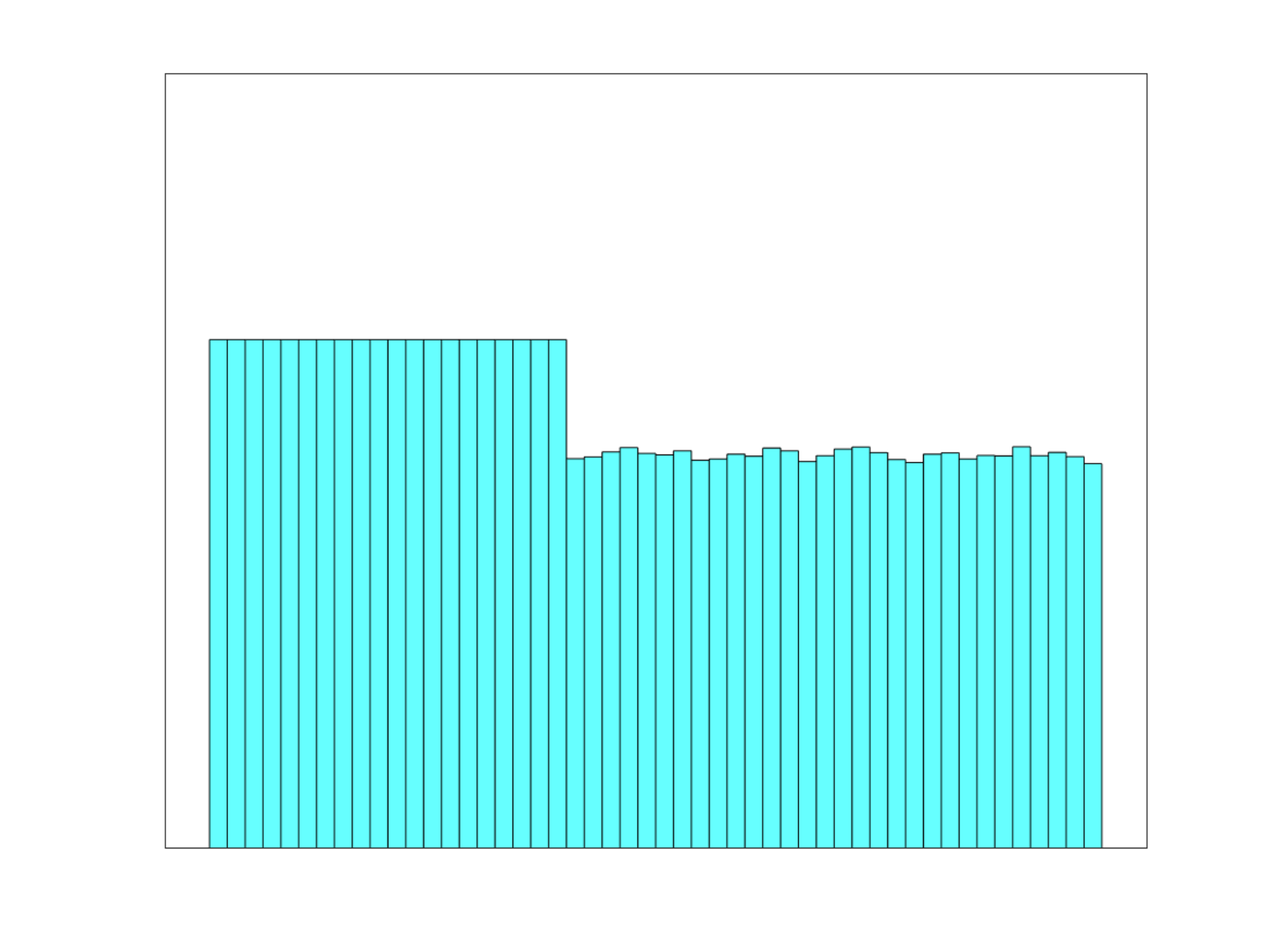} \\
$\xi_1$ & \includegraphics[width=.2 \textwidth]{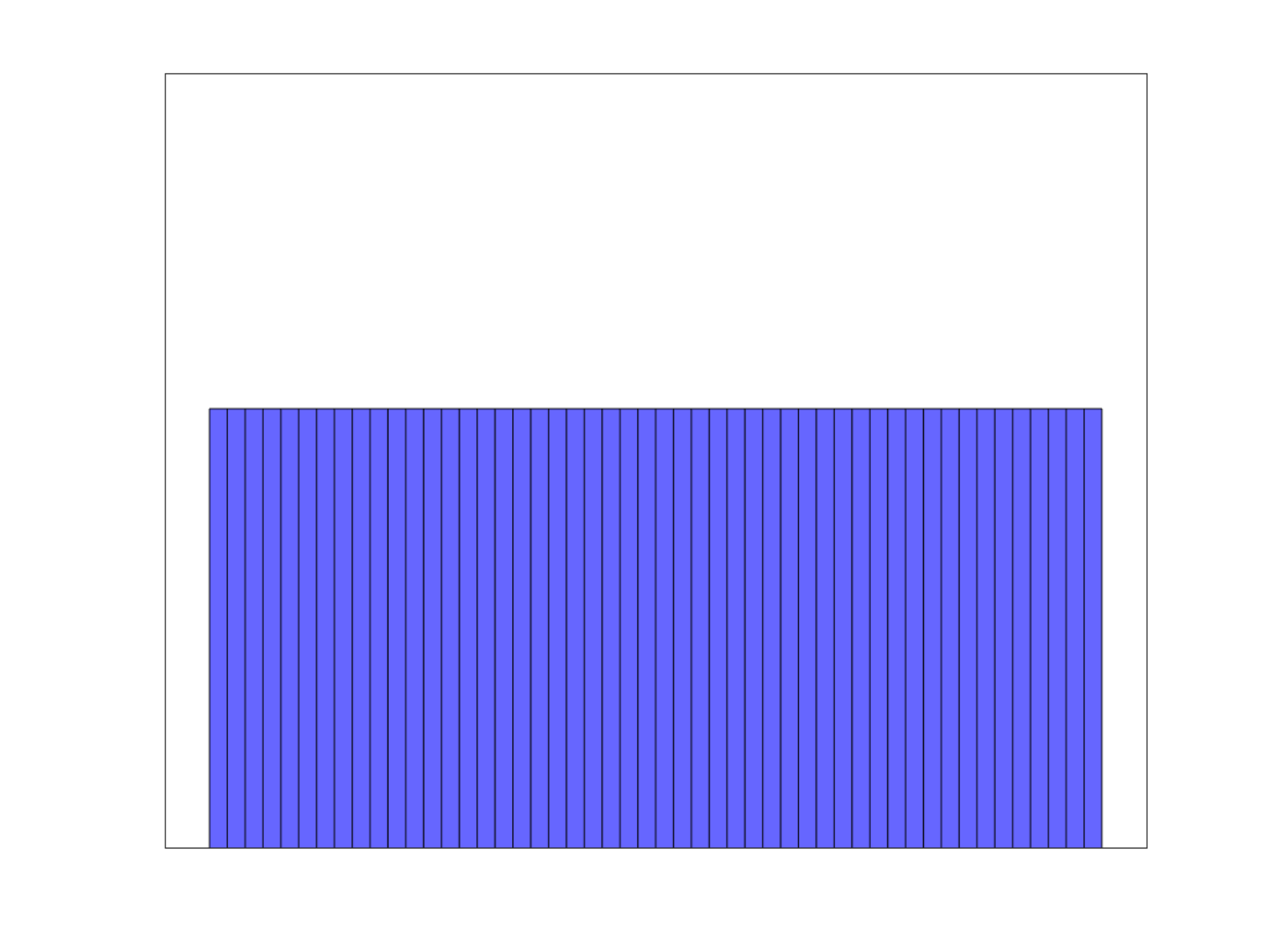} &  \includegraphics[width=.2 \textwidth]{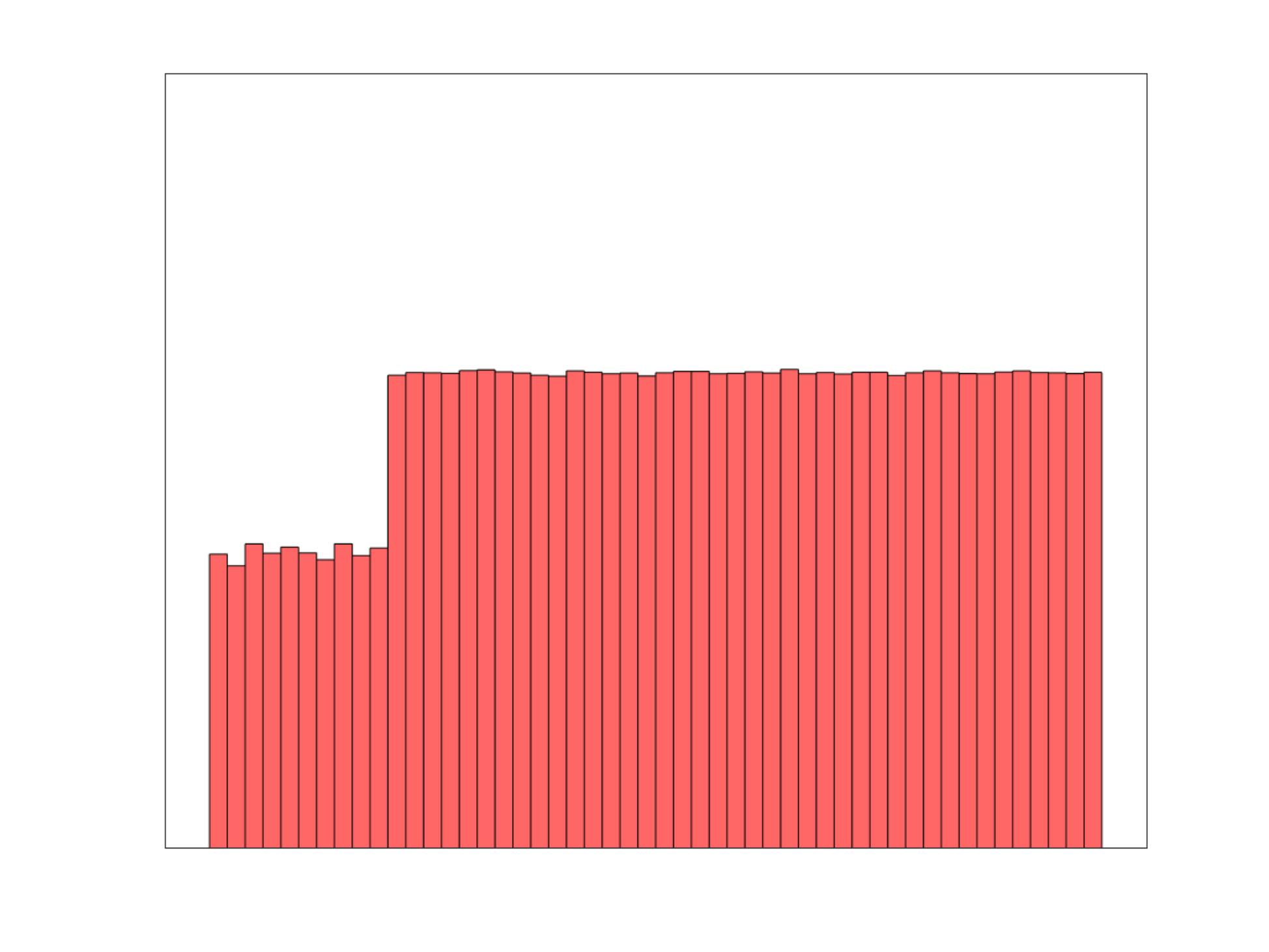} &  \includegraphics[width=.2 \textwidth]{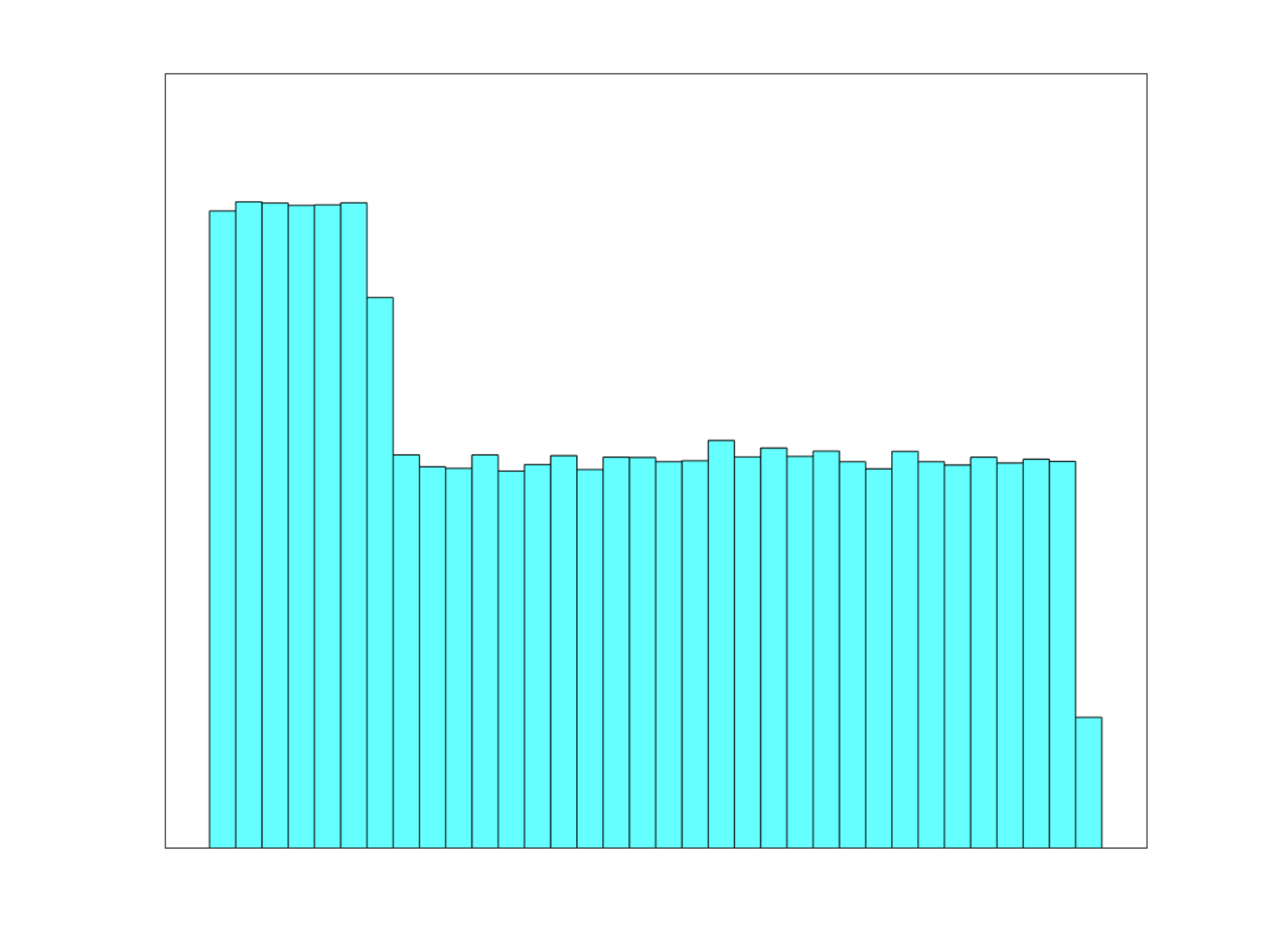} \\
$\xi_2$ & \includegraphics[width=.2 \textwidth]{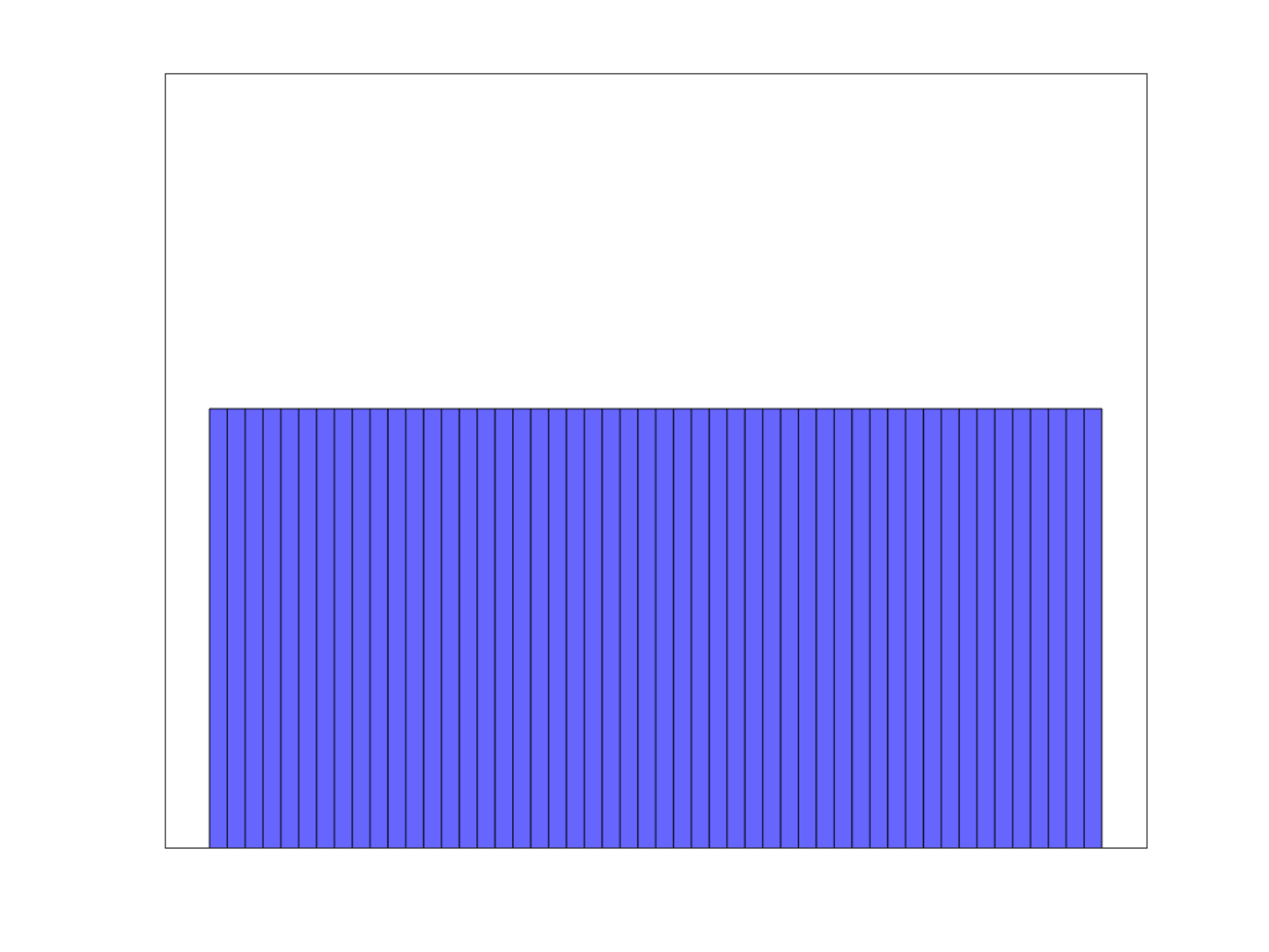} &  \includegraphics[width=.2 \textwidth]{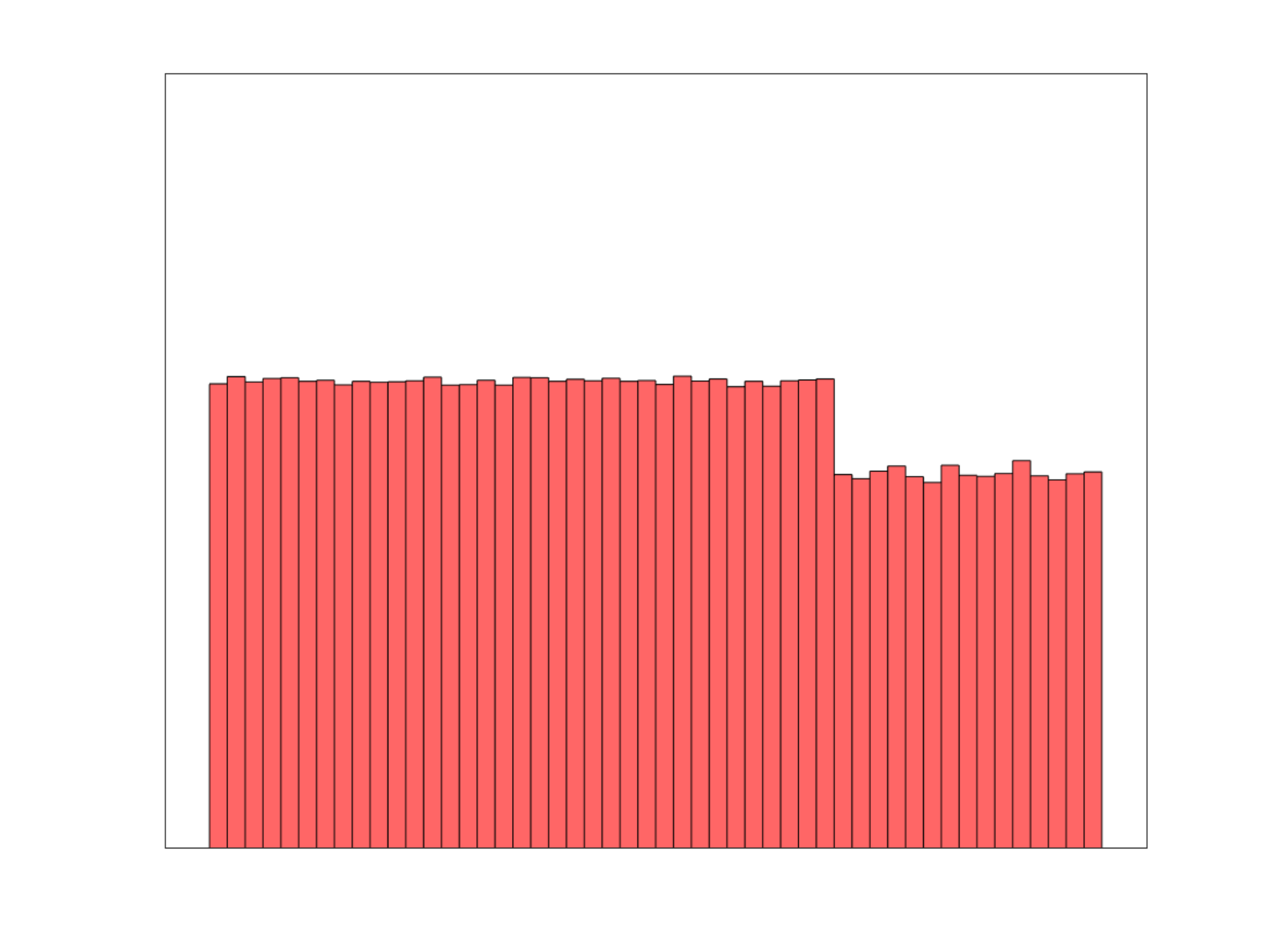} &  \includegraphics[width=.2 \textwidth]{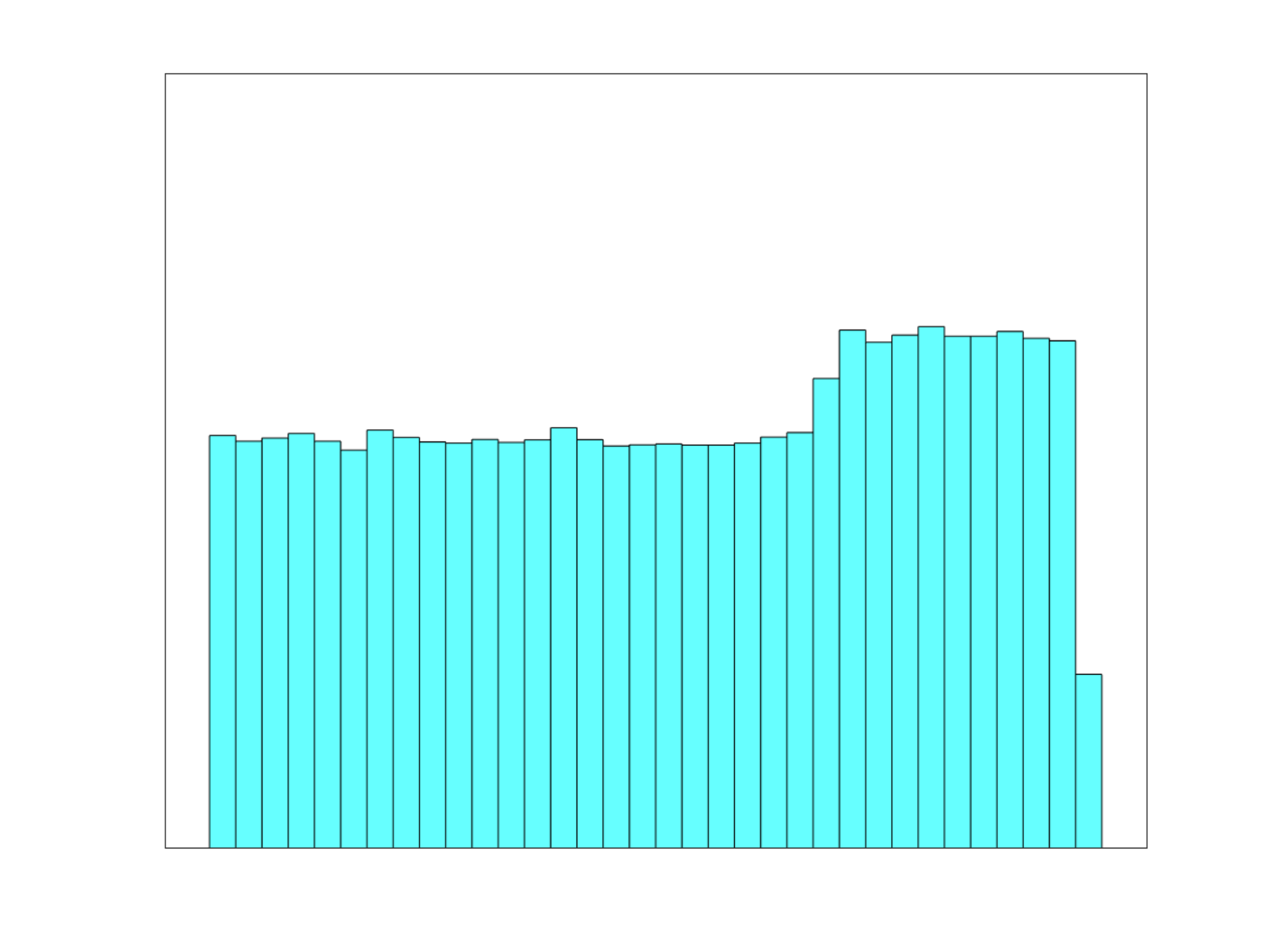} \\
$\gamma$ & \includegraphics[width=.2 \textwidth]{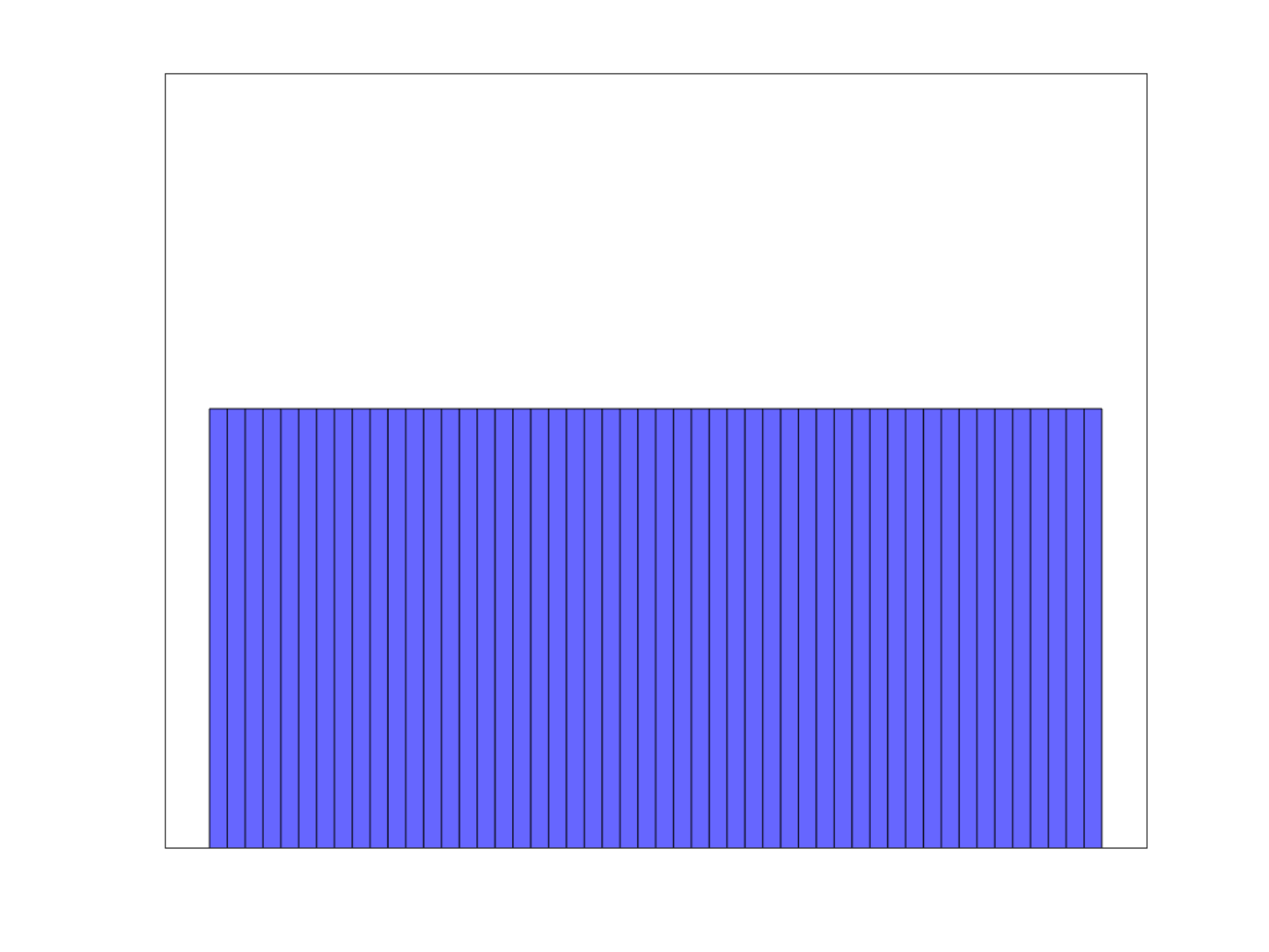} &  \includegraphics[width=.2 \textwidth]{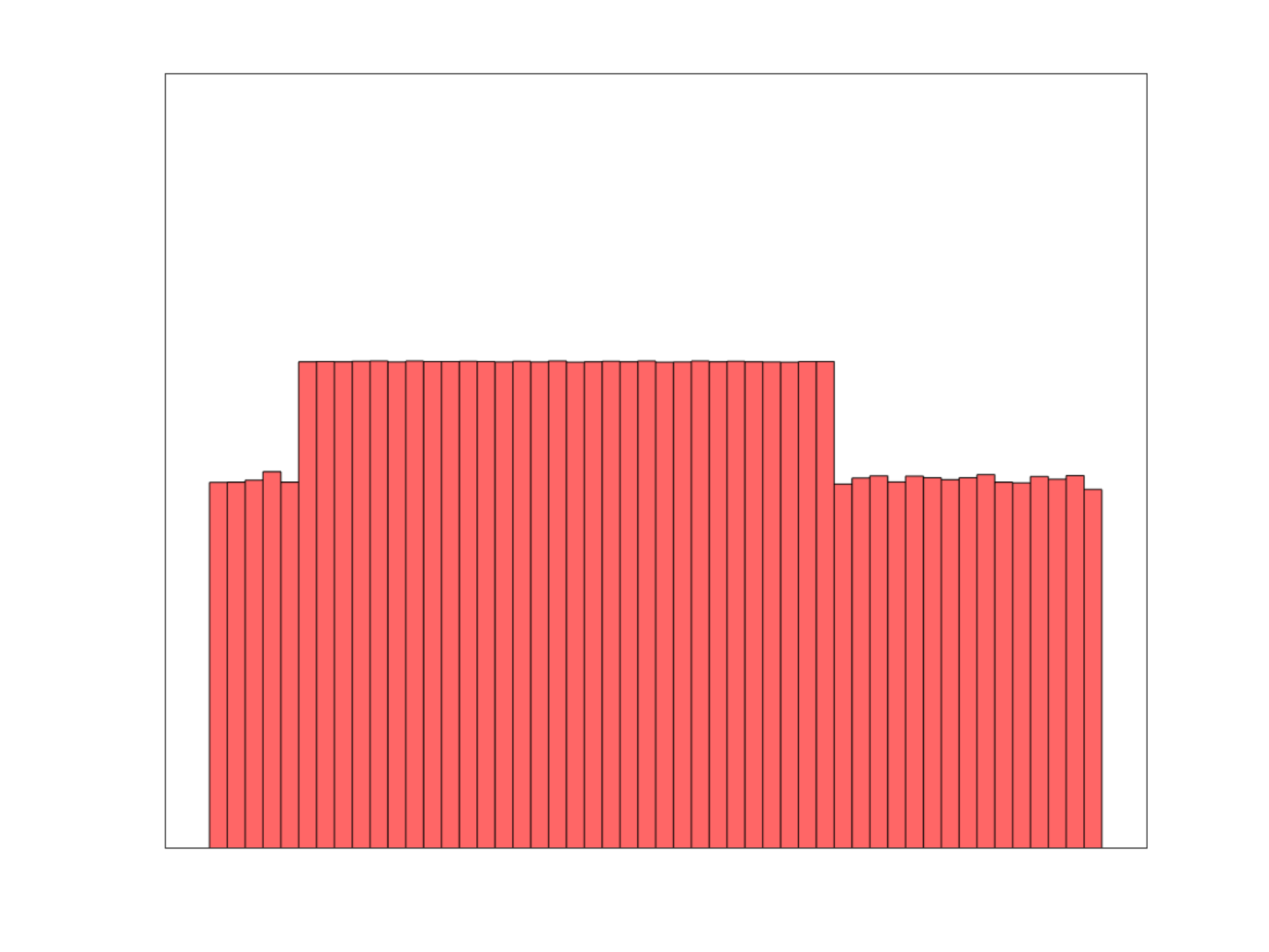} &  \includegraphics[width=.2 \textwidth]{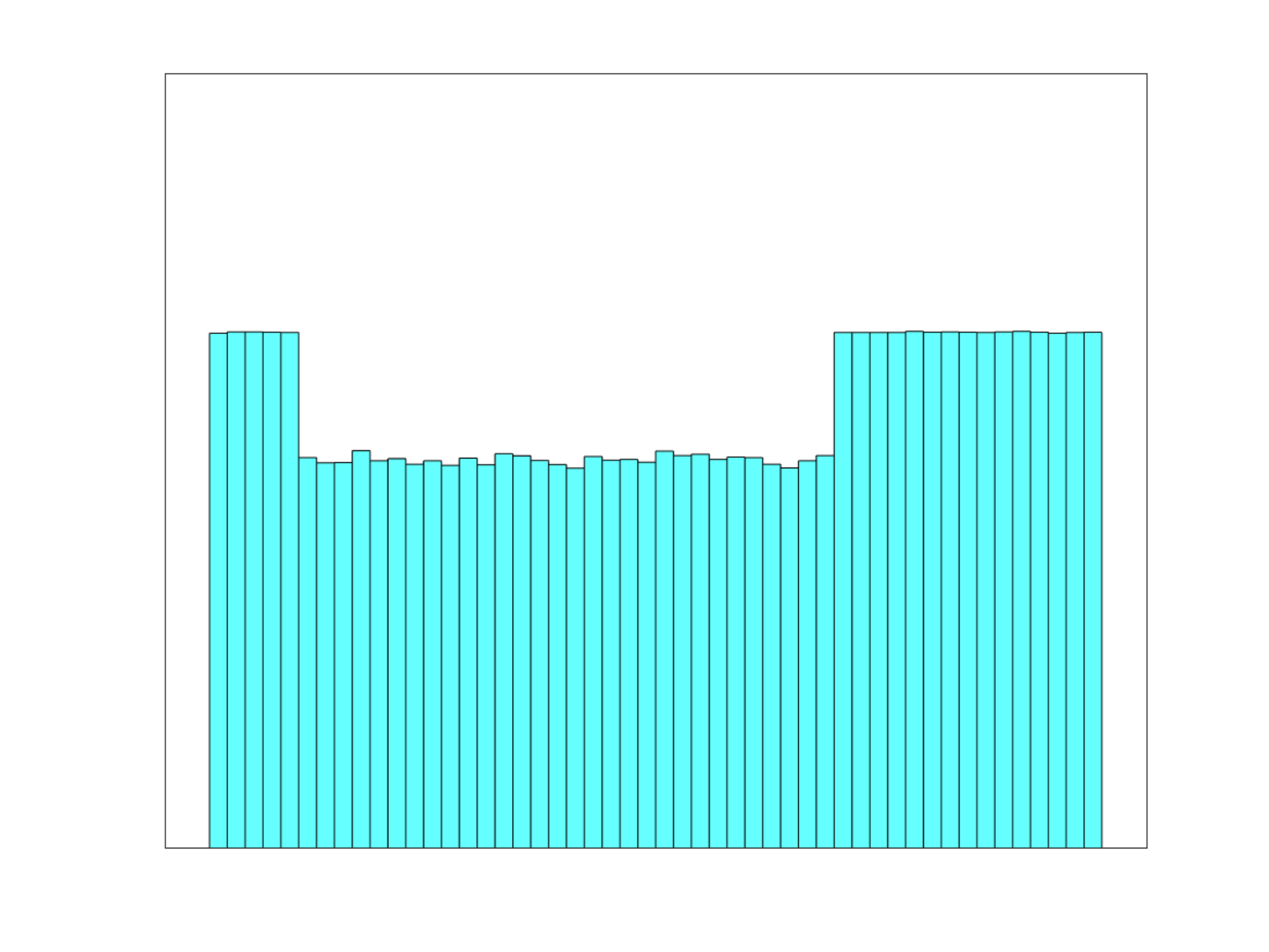} \\
$\beta$ & \includegraphics[width=.2 \textwidth]{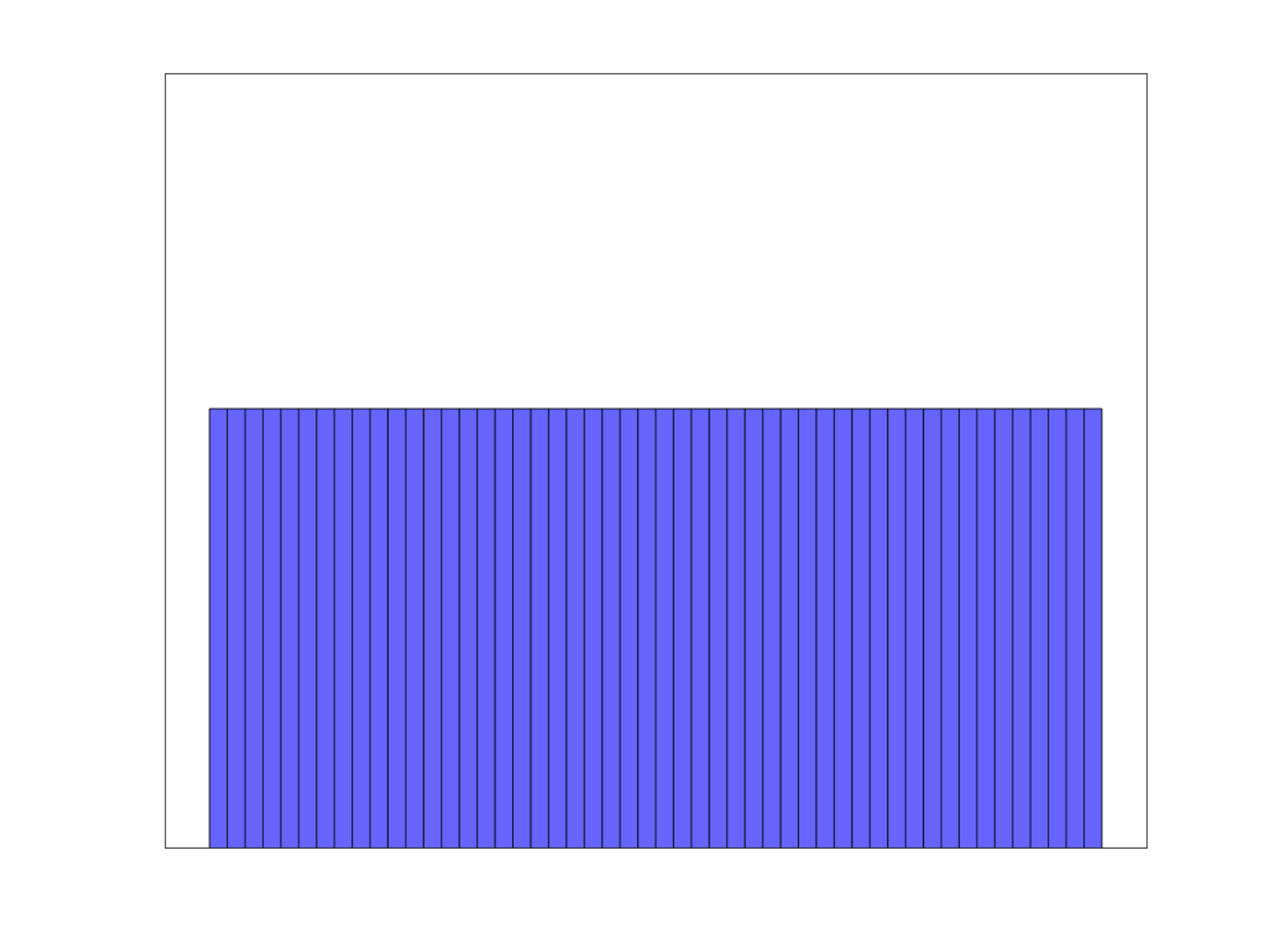} &  \includegraphics[width=.2 \textwidth]{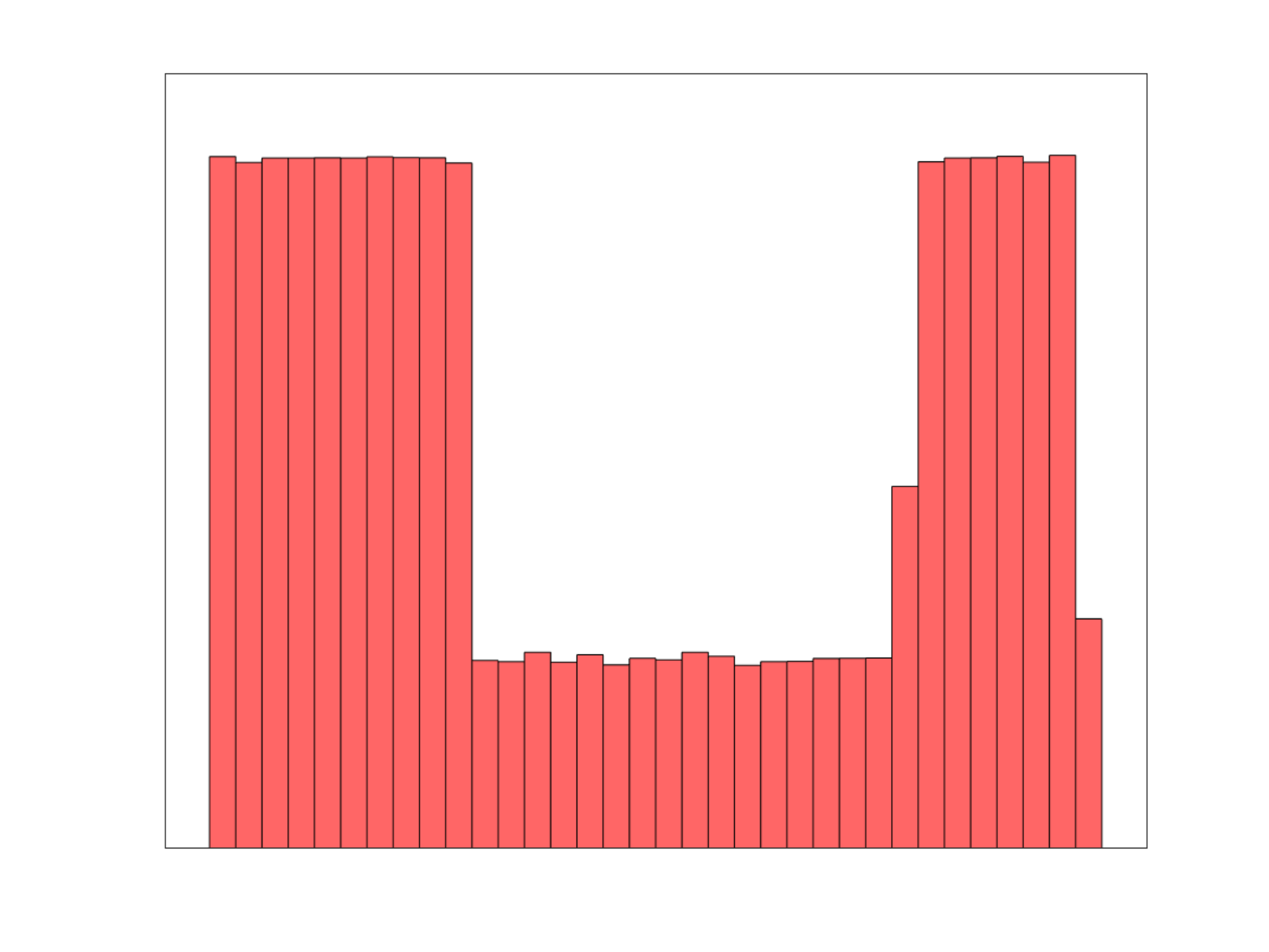} &  \includegraphics[width=.2 \textwidth]{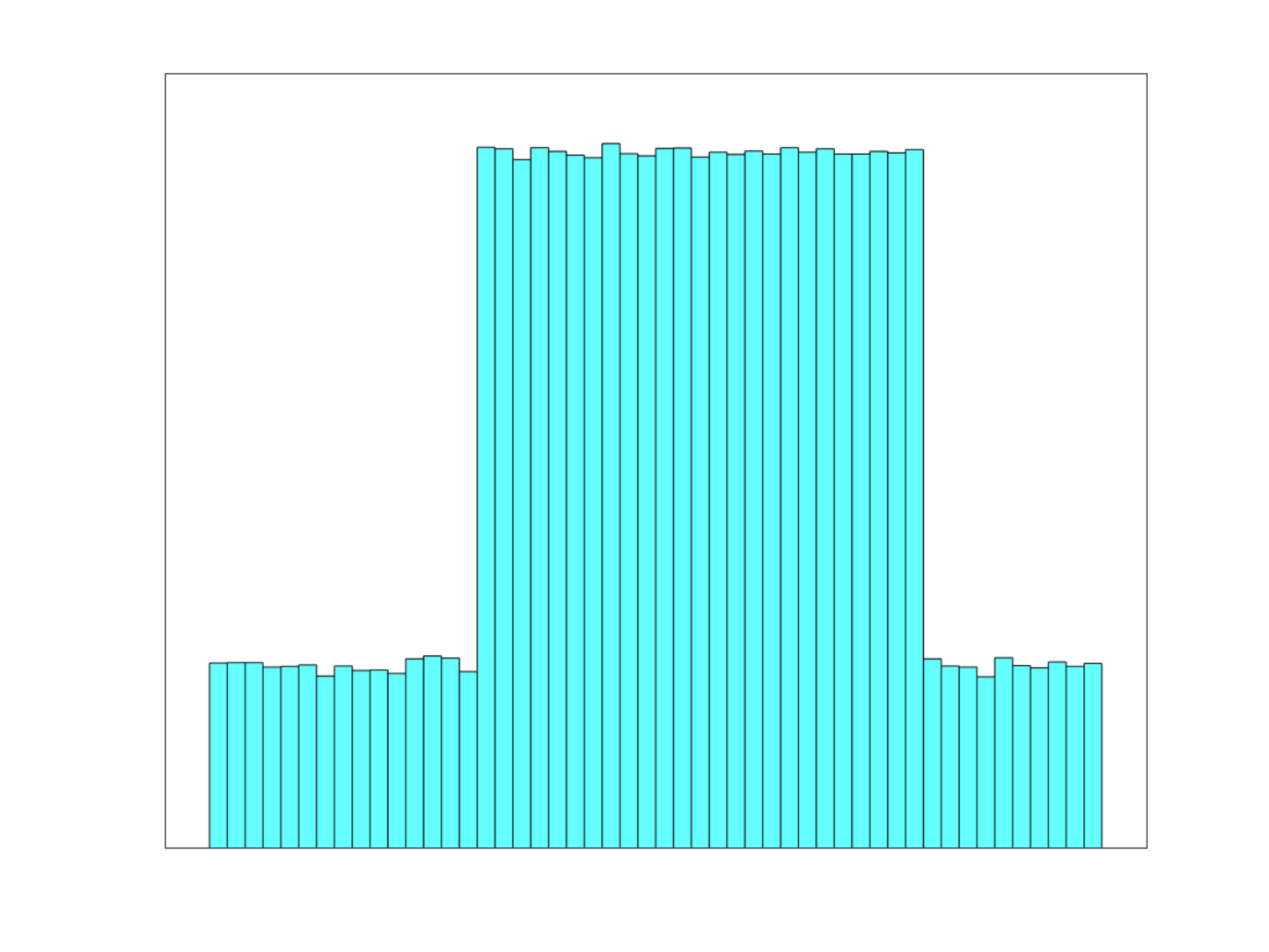} \\
$\nu_1$ & \includegraphics[width=.2 \textwidth]{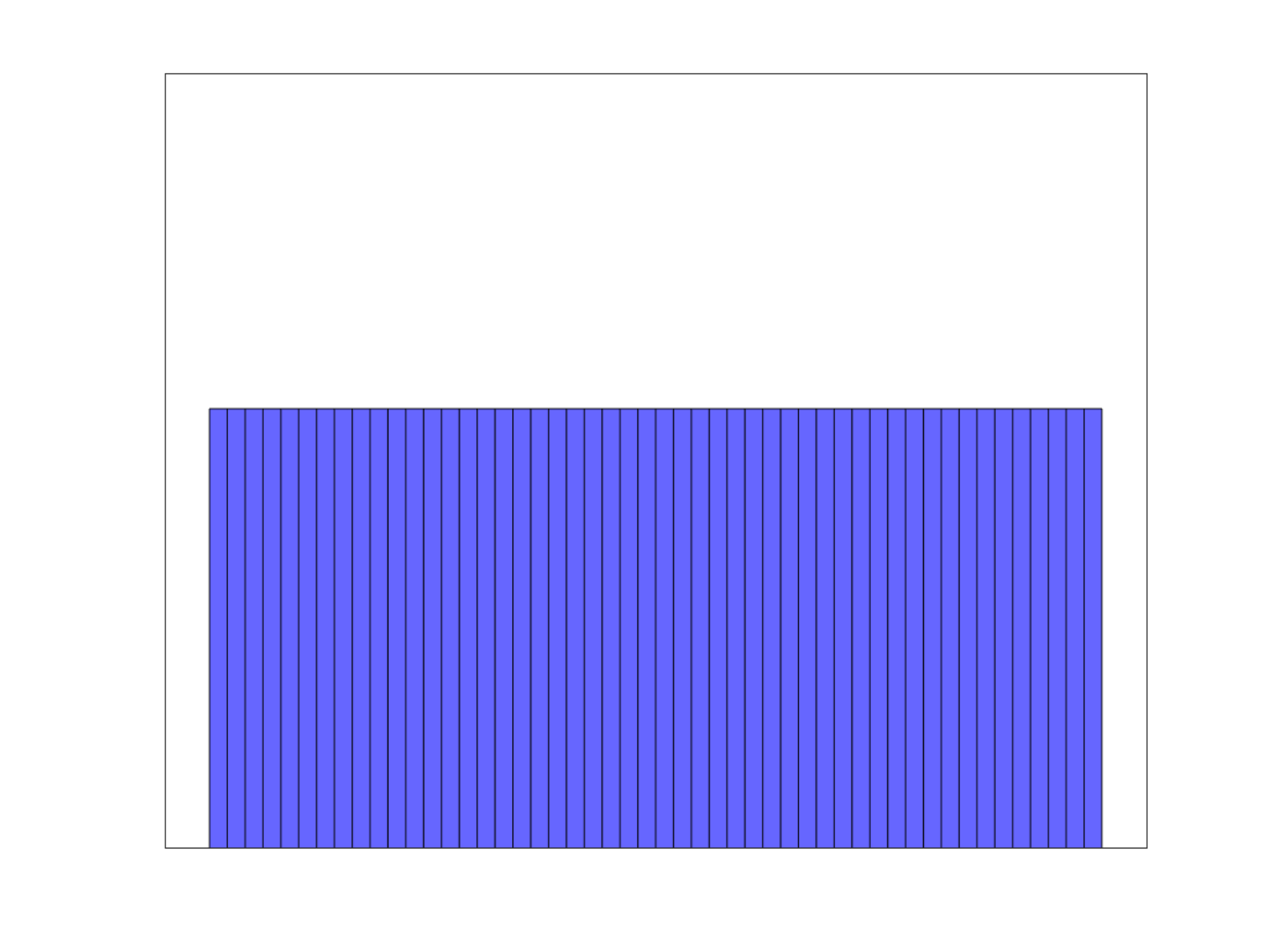} &  \includegraphics[width=.2 \textwidth]{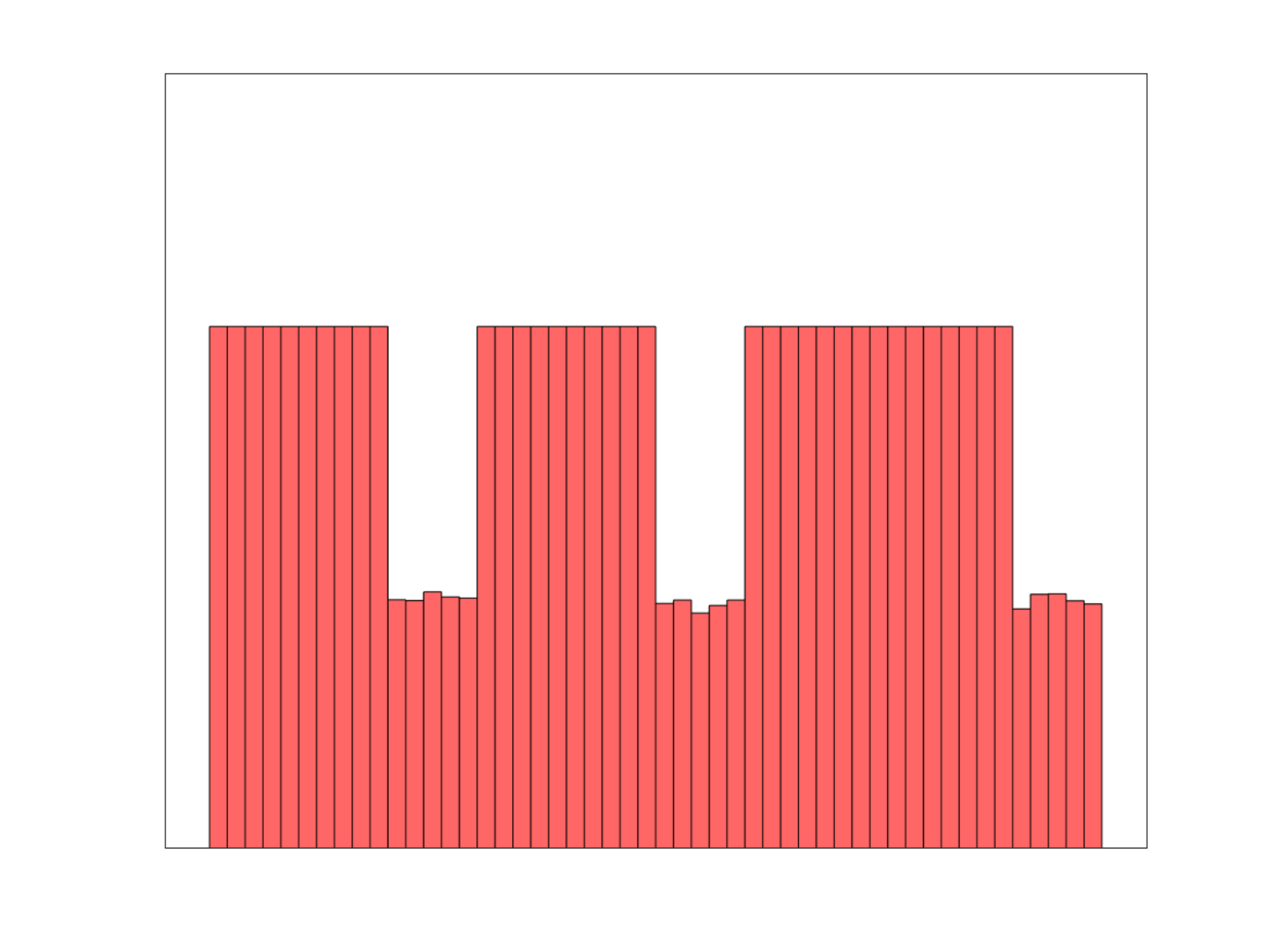} &  \includegraphics[width=.2 \textwidth]{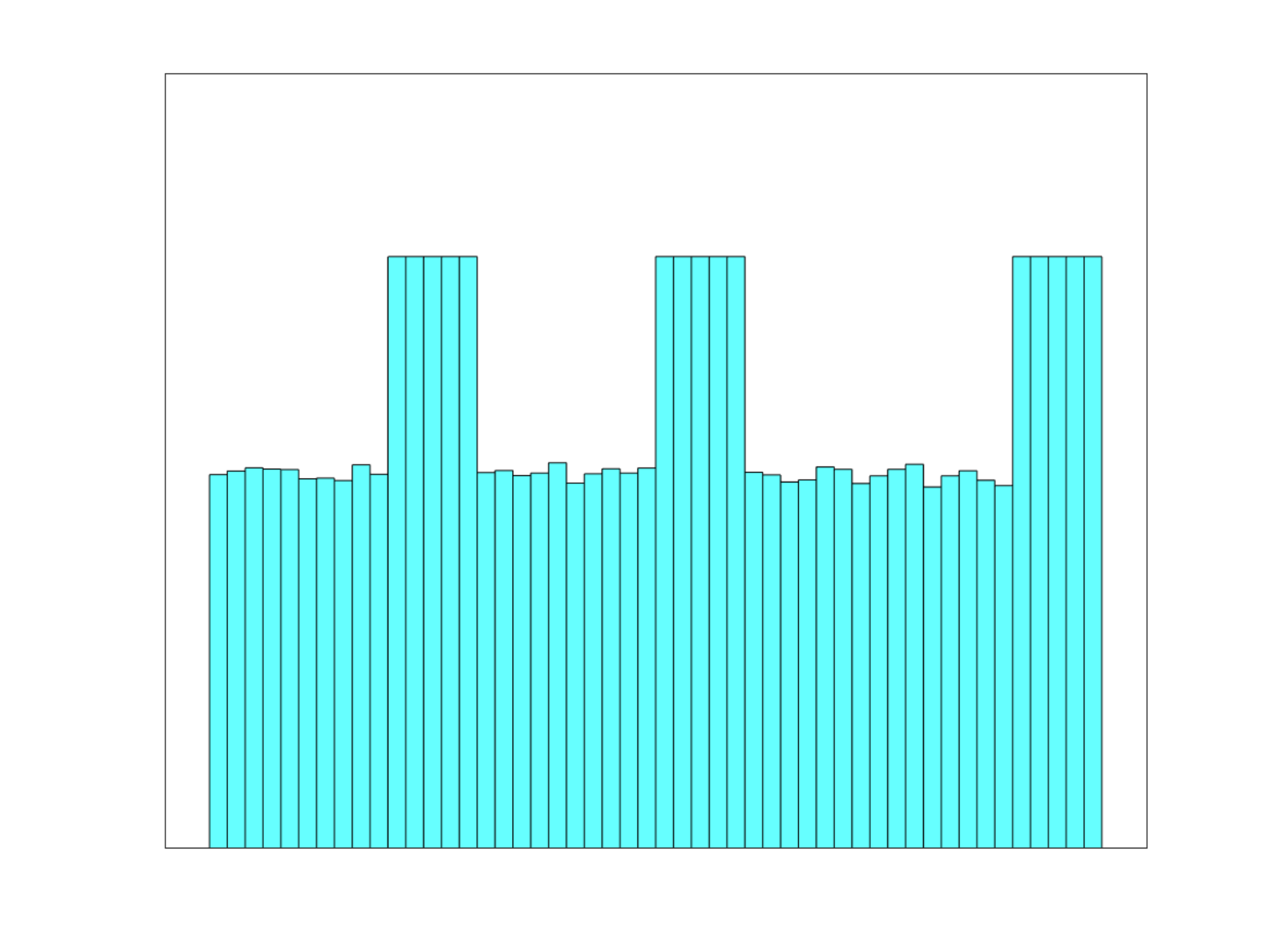} \\
$\nu_2$ & \includegraphics[width=.2 \textwidth]{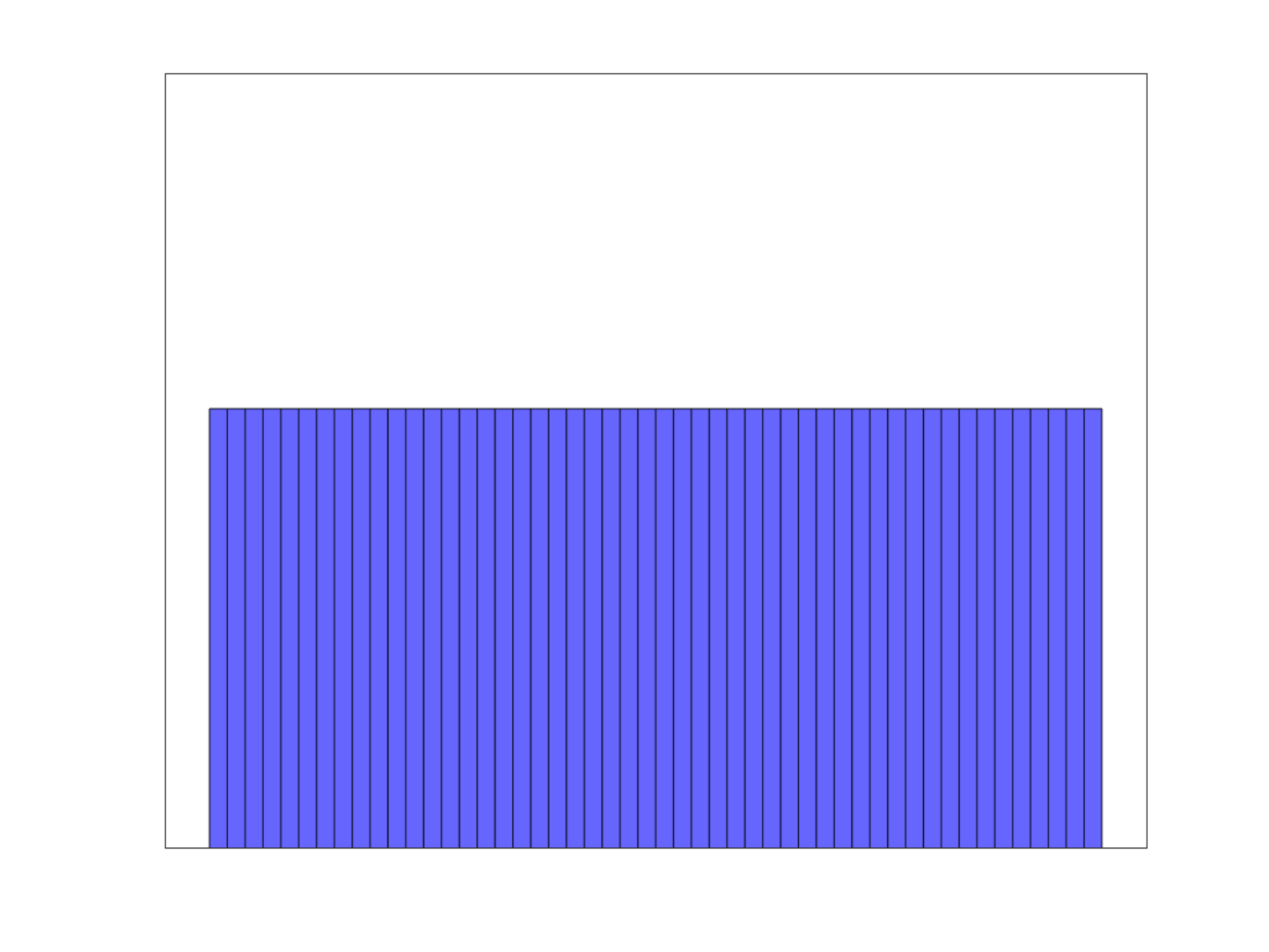} &  \includegraphics[width=.2 \textwidth]{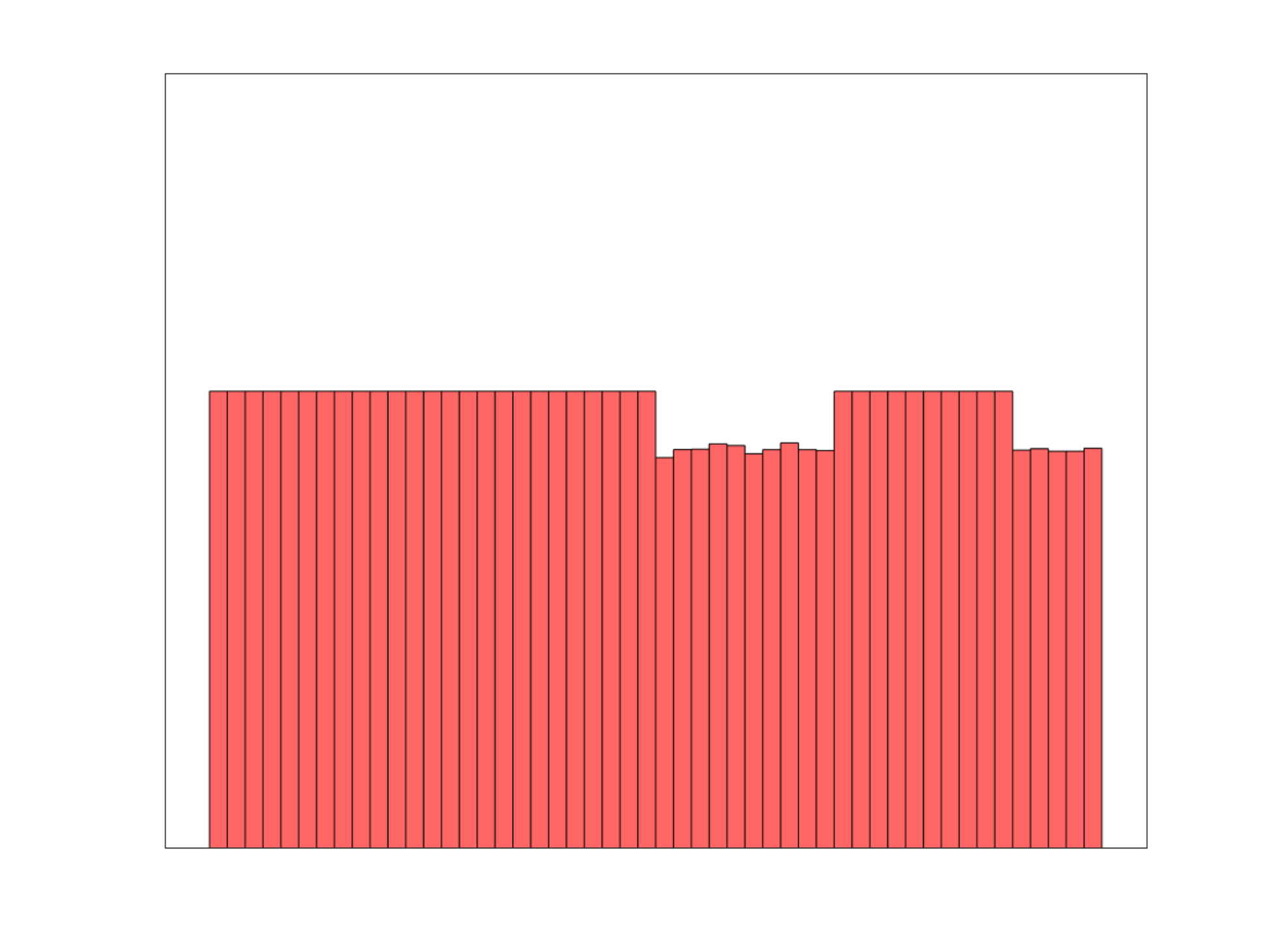} &  \includegraphics[width=.2 \textwidth]{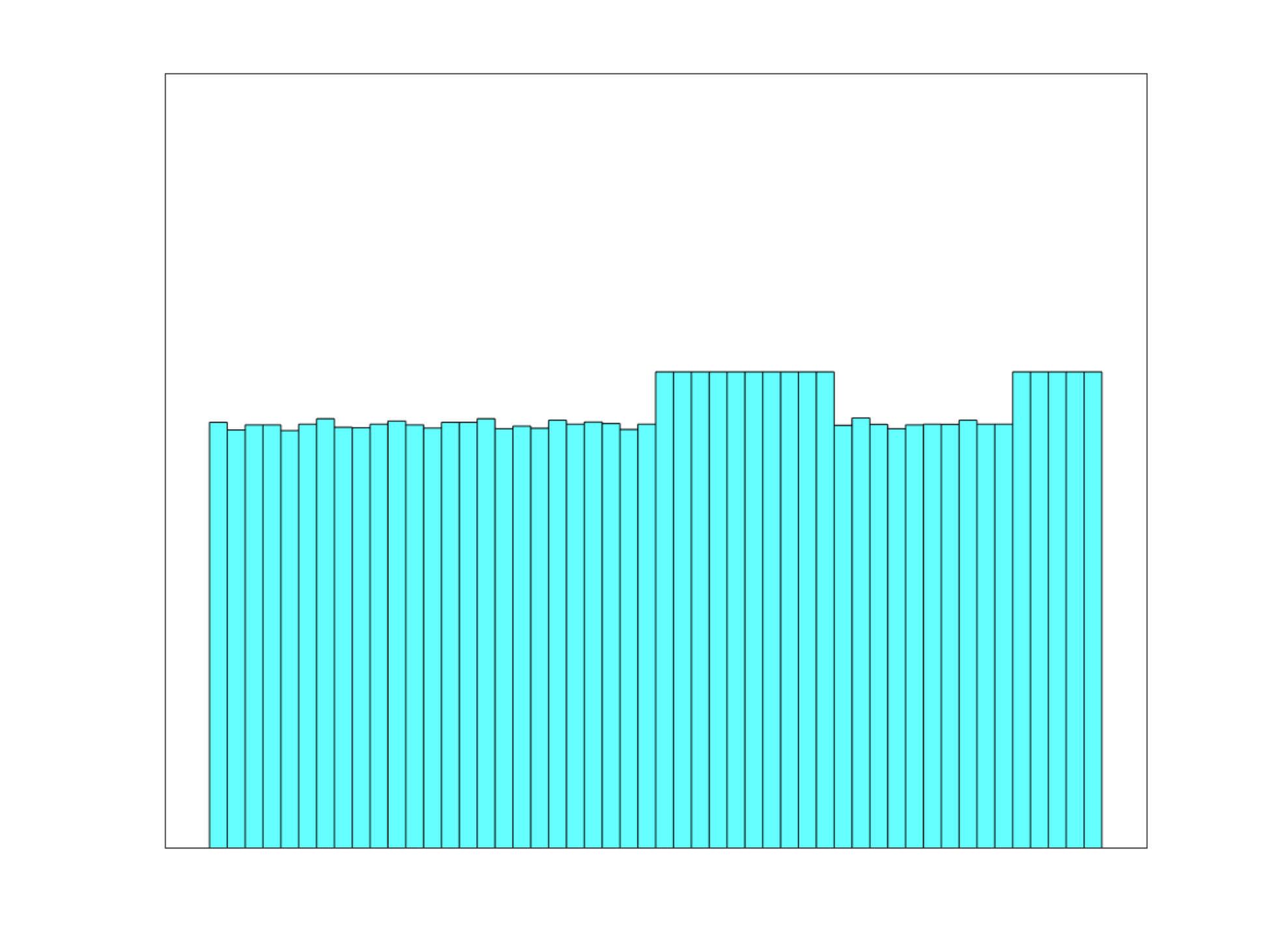} \\
\end{tabular}
\caption{Samples from the distribution of $\X$ and the perturbed distributions. The top, middle, and bottom rows correspond to variables $X_1,X_2$, and $X_3$ respectively. The left column correspond to samples from $\X$, the center column correspond to samples from the perturbed distribution which maximizes $T_2$, and the right column correspond to samples from the perturbed distribution which minimizes $T_2$.}
\label{fig:adv_diff_dist_pert_7}
\end{figure}

\section{Conclusion}
\label{sec:conclusion}
A novel method has been presented which considers perturbations of the one dimensional marginal PDFs to measure the robustness of Sobol' indices to distributional uncertainty. In its construction, this method assumes that the inputs are independent and the uncertainty is strictly in their one dimensional marginal distributions. This is a common occurrence in practice when, for instance, application specific information indicates that the inputs do not posses statistical dependences, but their statistical characterization is poorly known because of sparse and/or noisy data. In contrast, the authors previous work \cite{hart_robustness} provides a method to study robustness with respect to changes in the joint PDF, thus accounting for possible statistical dependencies. The source of distributional uncertainty is problem dependent and must be determined by the user. Since both methods are post processing steps after computing Sobol' indices, the user may apply both methods and compare their results. Future work may include exportation of other PDF perturbation forms such as block dependency structures, and extensions of the PDF perturbation approach to other global sensitivity analysis methods.

\section*{Software Availability}
Matlab codes are available at\\

\href{https://github.com/jlhart352/Sobol_Index_Robustness}{\texttt{https://github.com/jlhart352/Sobol\char`_Index\char`_Robustness}}\\

\noindent
which implement the joint perturbation approach of \cite{hart_robustness} and the marginal perturbation approach presented in this article.
\bibliographystyle{plain}
\bibliography{Uncertain_Marginals_Sobol_3}

\end{document}